\documentclass[reprint,showpacs,superscriptaddress,english,twocolumn,notitlepage,longbibliography]{revtex4-2}
\usepackage{times}
\usepackage{graphicx}
\usepackage{dcolumn}
\usepackage{amsmath,empheq,amssymb, bm, physics, bbm, mathrsfs}
\usepackage{array}
\usepackage{booktabs, multirow}
\usepackage{hyperref}
\usepackage{float}

\usepackage{color}
\usepackage[dvipsnames]{xcolor}
\hypersetup{
    unicode=true,
    bookmarks=true, bookmarksnumbered=true, bookmarksopen=false,
    breaklinks=true, pdfborder={0 0 0}, backref=false, colorlinks=true,
    linkcolor=red,          
    citecolor=blue,        
    filecolor=magenta,      
    urlcolor=magenta           
}

\usepackage{ulem}

\usepackage{etoolbox} 

\makeatletter
\newwrite\apxtocfile

\newcommand{\startappendixtoc}{%
  \immediate\openout\apxtocfile=\jobname.apx.toc
  \let\orig@addcontentsline\addcontentsline
  \def\addcontentsline##1##2##3{%
    \orig@addcontentsline{##1}{##2}{##3}%
    \protected@write\apxtocfile{}{\string\contentsline{##1}{##2}{##3}}%
  }%
}
\newtheorem{theorem}{Theorem}
\newcommand{\stopappendixtoc}{%
  \immediate\closeout\apxtocfile
  \let\addcontentsline\orig@addcontentsline
}

\newcommand{\printappendixtoc}{%
  \section*{Appendix contents}%
  \addcontentsline{toc}{section}{Appendix contents}
  \begingroup
    \input{\jobname.apx.toc}%
  \endgroup
}
\makeatother

\def\emp{\mathrm{emp}}
\def\occ{\mathrm{occ}}
\newcommand{\ovl}{\overline}
\newcommand{\MC}{\mathscr}

\newcommand{\mbbm}{\mathbbm}

\newcommand{\defeq}{\vcentcolon=}

\newcommand{\dfdx}[2]{\frac{d #1}{d #2}}

\newcommand{\pfpx}[2]{\frac{\partial #1}{\partial #2}}

\newcommand{\ppfpx}[2]{\frac{\partial_0 #1}{\partial_0 #2}}
\newcommand{\encvert}[2]{\left. #1 \right\vert_{#2}}
\newcommand{\Expval}[2]{\langle #1 \rangle_{#2}}
\newcommand{\Bra}[1]{\langle #1 |}
\newcommand{\Ket}[1]{| #1 \rangle}

\newcommand{\yjwNmsl}[1]{\textcolor{purple}{#1}}

\begin{document}

\preprint{APS/123-QED}

\title{Strongly Correlated Superconductivity in Twisted Bilayer Graphene: a Gutzwiller Study}

\author{Matthew Shu Liang}
\thanks{These authors contributed equally to this work.}
\affiliation{Department of Physics, Hong Kong University of Science and Technology, Clear Water Bay, Hong Kong, China}

\author{Yi-Jie Wang}
\thanks{These authors contributed equally to this work.}
\affiliation{International Center for Quantum Materials, School of Physics, Peking University, Beijing 100871, China}
\affiliation{Beijing Key Laboratory of Quantum Devices, Peking University, Beijing 100871, China}

\author{Geng-Dong Zhou}
\affiliation{International Center for Quantum Materials, School of Physics, Peking University, Beijing 100871, China}
\affiliation{Beijing Key Laboratory of Quantum Devices, Peking University, Beijing 100871, China}

\author{Zhi-Da Song}
\email{songzd@pku.edu.cn}
\affiliation{International Center for Quantum Materials, School of Physics, Peking University, Beijing 100871, China}
\affiliation{Beijing Key Laboratory of Quantum Devices, Peking University, Beijing 100871, China}
\affiliation{Hefei National Laboratory, Hefei 230088, China}

\author{Xi Dai}
\email{daix@ust.hk}
\affiliation{Department of Physics, Hong Kong University of Science and Technology, Clear Water Bay, Hong Kong, China}
\affiliation{New Cornerstone Science Laboratory, Department of Physics, Hong Kong University of Science and Technology, Clear Water Bay, Hong Kong, China}

\date{\today}

\begin{abstract}
We study strongly correlated superconductivity in magic-angle twisted bilayer graphene (MATBG) using a variational Gutzwiller wavefunction $\ket{\Psi_G} = \prod_{\vb{R}} \hat{P}_{\vb{R}} \ket{\Phi_0}$, where the Gutzwiller projector $\hat{P}_{\vb{R}}$ is allowed to break charge U(1) symmetry to accommodate superconducting (SC) order. The ground state energy is evaluated via the \textit{Gutzwiller Approximation} applied to an 8-band model consisting of correlated $f$-orbitals and uncorrelated $c$-orbitals, with interactions including onsite Coulomb repulsion $U$, phonon-mediated anti-Hund's coupling $\hat{H}_{J_A}$, and intra-orbital Hund's coupling $\hat{H}_{J_H}$. 
At filling $\nu = 2.5$, we map out the phase diagram as a function of $U$ and $J_A$, 
and reveal a strongly correlated SC (SC-SC) phase dominates at large $U$, wherethe strong on-site interaction U strongly suppress the $f$-orbital charge fluctuations while maintaining finite pairing order and a sizeable quasiparticle weight Z, distinguishing it from a conventional Mott insulator. 
For a range of $J_{\rm A}$, SC-SC transitions to FL as $U$ decreases, until the weakly correlated BCS-like SC (BCS-SC) re-enters as $U \to 0$.
We further identify a novel small Fermi liquid (sFL) state with effective Fermi surface formed by $c$-orbitals, which is essentially different with the normal Fermi liquid. 
Interestingly, in the intermediate- ($U \lesssim 40$ meV) and large-$U$ ($U \gtrsim 40$ meV) regimes, the conventional FL and the sFL are the lowest-energy normal phases, respectively, potentially serve as the parent states of the SC-SC phase.
These results illuminate the interplay between strong correlations and unconventional pairing in MATBG, and establish a versatile Gutzwiller framework applicable to other strongly correlated superconductors.
\end{abstract}

\maketitle 
\emph{Introduction.}
The groundbreaking discovery of superconductivity (SC) in magic-angle twisted bilayer graphene (MATBG) \cite{Cao2018_nature, Oh2021_nature, Cao2021_science, Lin2022_science, Liu2021_science, Stepanov2020_nature, Yankowitz2019_science, Cao2020_PRL, Kerelsky2019_nature, Lu2019_nature, Tanaka2025_nature, Liu2022_nature, Yuan2021_nature, Arora2020_nature, Saito2020_nature, Nuckolls2023_nature} has established a new frontier in condensed matter physics. The phenomenology of MATBG exhibits striking parallels with high-$T_c$ cuprates—most notably the emergence of the SC phase upon doping a correlated insulator and the linear-$T$ resistivity \cite{Cao2020_PRL, polshyn_large_2019, jaoui_quantum_2022} of the normal state—pointing toward a strongly correlated pairing mechanism. 
Although early studies suggested that screening the Coulomb interaction enhances SC \cite{Stepanov2020_nature, Liu2021_science}, a recent experiment showed the opposite: By an {\it in situ} tuning of dielectric constant of the SrTiO$_3$ substrate, $T_c$ of the SC in MATBG monotonously decreases with as $U$ is reduced \cite{gao_double-edged_2024}.
Furthermore, the V-shape \cite{Oh2021_nature} and ``two-gap'' structures \cite{park_experimental_2025,kim_resolving_2026} in tunneling spectra, nematicity \cite{Cao2021_science} of Coulomb-driven origin \cite{NP_Zhang2026}, small coherence length \cite{Cao2018_nature,Lu2019_nature} imply an unconventional pairing mechanism.
These features have inspired a plethora of theoretical proposals \cite{YJWANG2024_PRL, wang_spin-valley_2026, Wu2018_PRL, BLian2019_PRL, FCWu2019_PRB, Sharma2020_PRR, zhao2025RVBsFL, Isobe2018_PRX, Chichinadze2020_PRB, Kennes2018_PRB, Gonzalez2019_PRL, You2019_npj, Huang2022_PRL, Khalaf2021_science, LEDWITH2021168646}, ranging from electron-phonon coupling \cite{Wu2018_PRL, BLian2019_PRL, FCWu2019_PRB}, collective-mode-mediated pairing \cite{Sharma2020_PRR} and instability from Fermi surface nesting \cite{Isobe2018_PRX, Chichinadze2020_PRB}, Kohn-Luttinger mechanisms \cite{Gonzalez2019_PRL,You2019_npj} to anti-Hund's coupling \cite{angeli_valley_2019,YJWang2025_PRB,HShi2025_PRB} driven molecular pairing \cite{YJWANG2024_PRL} or resonating-valence-bond theory \cite{zhao2025RVBsFL}, {\it etc.}  
Despite the intense theoretical effort, a quantitative self-consistent study of the SC state within a realistic lattice model remains a critical missing piece. 

In this Letter, we address this problem by applying a variational Gutzwiller theory \cite{PENG2022108348, DENG2009, Nicola2012, PhysRevB.76.165110, Gutzwiller1963_PRL, Metzner1987_PRL, Metzner1988_PRB, Metzner1989_PRL, Buneman1998_PRB} to a heavy-fermion–like \cite{song_magic-angle_2022} model containing two strongly correlated $f$-orbitals (per spin valley) and extended uncorrelated $c$-orbitals  \cite{haule_mott-semiconducting_2019, Bascones2020_PRB}.
Such models have been widely adopted to describe the normal-state properties of MATBG \cite{chou_kondo_2023, hu_symmetric_2023, Bascones2023_NC, zhou_kondo_2024, rai_dynamical_2024, PhysRevX.15.021028}. 
Accumulating experimental evidence has supported the heavy-fermion picture in which $f$- and $c$-electrons coexist \cite{saito_isospin_2021, rozen_entropic_2021, merino_interplay_2025, ghosh_thermopower_2025, QYHu2025_NC, xiao_interacting_2025}, as well as its relevance to the SC phase \cite{kim_resolving_2026}.
In addition, a recent angle-resolved photoemission spectroscopy study \cite{chen_strong_2024} revealed strong coupling between the inter-valley phonon [\onlinecite{liu_electron_2024}, \onlinecite{HShi2025_PRB}, \onlinecite{YJWang2025_PRB}] and flat band electrons.
As discussed in our previous work \cite{HShi2025_PRB}, a particular moir\'e phonon mode derived from the inter-valley in-plane transversal optical (iTO) branches of graphene couples strongly to the $f$-electrons. 
Since the energy of this mode ($\sim$150 meV) is much larger than the bandwidth of the moir\'e bands, the induced interaction can be regarded as a non-retarded inter-valley anti-Hund’s coupling among the $f$-electrons within the same moir\'e unit cell \cite{YJWANG2024_PRL}.

This anti-Hund’s coupling reveals a close analogy between MATBG and alkali-metal-doped fullerene superconductors \yjwNmsl{\cite{A3C60_RMP, Yue2023_PRB}}, where a similar mechanism arises from localized intramolecular phonons. 
A crucial difference, however, lies in the nontrivial band topology of MATBG originating from the hybridization between localized $f$-orbitals and extended $c$-orbitals, which has no counterpart in fullerene systems. 
Using the variational Gutzwiller method, we determine the superconducting phase diagram of this model and obtain three main results.
First, we identify a strongly correlated superconducting (SC-SC) state that occupies a large region of the phase diagram and is qualitatively distinct from BCS superconductivity (BCS-SC) in weakly correlated systems. 
Second, the competition between anti-Hund’s and conventional Hund’s couplings in the $f$-orbitals produces a mixed $s$+$d$ pairing state that breaks rotational symmetry, giving rise to a nematic superconducting phase. 
Third, we find that at large $U$ ($\gtrsim40$meV), a novel small Fermi liquid (sFL) replaces the conventional FL as the normal phase energetically approximate to the SC-SC ground state, whereas the conventional FL remains closer at intermediate $U$.


\emph{Model.}
To study strongly correlated superconductivity, we adopt an eight-band tight-binding model [\onlinecite{Po2019_PRB}, \onlinecite{Bascones2020_PRB}, \onlinecite{HShi2025_PRB}] (marked as $\hat{H}_0$) which encodes key features of correlation physics in MATBG such as the resets of compressibility with electron filling \cite{Bascones2023_NC, QYHu2025_NC}. 
The model consists of two correlated $f$-orbitals and six uncorrelated $c$-orbitals in each spin valley. 
$f$-electrons are denoted by $\hat{f}_{\mathbf{R};\alpha \eta s}$, where $\mathbf{R}$ labels moir\'e unit cells, $s=\uparrow\downarrow$ denotes spins, $\eta=\pm$ denotes valleys, and $(-1)^{\alpha-1} \eta$ (for $\alpha=1,2$) denotes the orbital angular momentum modulo 3.
The $f$-orbitals constitute most of the flat bands except for a small region near mori\'e $\Gamma$ point, making it faithful for describing low energy strongly correlated physics. 
We include three terms in our interaction Hamiltonian: 
The onsite Coulomb repulsion $\hat{H}_U$ leads to strong correlation; 
An anti-Hund's coupling $\hat{H}_{J_A}$ \cite{YJWANG2024_PRL} (mediated by coupling of flat band electrons and mori\'e optical phonons \cite{HShi2025_PRB, YJWang2025_PRB}) provides attractive interaction for inter-valley spin-singlet pairing channels; 
An intra-orbital Hund's couling $\hat{H}_{J_H}$ lets $d$-wave pairing energetically favorable. 
To ensure a reasonable filling of $f$-orbitals at $\nu=\pm(2+\delta \nu)$ at the presence of large $U$, phenomenological Hartree terms between $f$- and $c$-electrons ($\hat{H}_W$) and among $c$-electrons ($\hat{H}_V$) are added to our Hamiltonian to adjust the relative energy levels of $f$- and $c$-orbitals \cite{QYHu2025_NC, PhysRevX.15.021028, Bascones2023_NC}. 
The full Hamiltonian reads (for details, refer to Appendix \ref{appdx:TBG_implementation}): 
\begin{gather}
    \hat{H} = \hat{H}_0 + \hat{H}_U + \hat{H}_{J_A} + \hat{H}_{J_H} + \hat{H}_{W} + \hat{H}_{V} - \mu \hat{N} \ . 
\end{gather}
where $\mu$ is the chemical potential, and $\hat{N}$ is the total electron number. 
We tune $\mu$ to enforce filling $\nu=2.5$ in this work, which is around the optimal doping for SC in experiments \cite{Cao2018_nature, Oh2021_nature, Cao2021_science}. Since the main focus of this work is to investigate the competition between $J_A$ and $U$, we fix $J_H=1.5\mathrm{meV}$ in this work.

If $U=0$, $\hat{H}_{J_A} + \hat{H}_{J_H}$ hosts bare attractive channels. 
For $\frac{2}{3} J_H < J_A < 2J_H$, the leading attractive channel is two-fold degenerate, $\hat{\Delta}_{d,\alpha} = \hat{f}_{\alpha + \uparrow} \hat{f}_{\bar{\alpha} - \downarrow} - (\uparrow \leftrightarrow \downarrow)$ with $\alpha=1,2$; and for $J_A > 2J_H$, non-degenerate, $\hat{\Delta}_s = \sum_{\alpha} \hat{f}_{\alpha+\uparrow} \hat{f}_{\alpha - \downarrow} - (\uparrow\leftrightarrow\downarrow)$. 
$\hat{\Delta}_{d,1} \pm \hat{\Delta}_{d,2}$ transform as $d_{x^2-y^2}$ and $d_{xy}$ ``orbitals'' under the $D_6$ group, respectively, while $\hat{\Delta}_s$ forms the trivial representation. 
We dub pairing channels with the corresponding symmetry characters as $d$-wave and $s$-wave, respectively. 

\emph{Gutzwiller Theory.}
The variational ground state wavefunction is constructed by applying the Gutzwiller projector on $f$-orbitals $\hat{P}_{\vb{R}}=\sum_{I I'} \Lambda_{I I'} \ket{\vb{R}; I} \bra{\vb{R}; I'}$ to a Bardeen-Cooper-Schrieffer (BCS) wavefunction $\ket{\Phi_0}$: $\ket{\Psi_G} = \prod_{\vb{R}} \hat{P}_{\vb{R}} \ket{\Phi_0}$, where only intra-site correlation is considered. 
Here, $\mathbf{R}$ represents the sites of $f$-orbitals, $I$ and $I'$ denote the $2^8$ local many-body states, and 
$\Lambda_{I I'}$ are the many-body variational parameters to be optimized to minimize the ground state energy.
A crucial advance of our approach is that $\Lambda_{I I'}$ are allowed to break charge-U(1) symmetry, thereby enabling a description of SC states within the Gutzwiller framework \cite{PhysRevB.76.165110}.
For the model studied in this work, we classify $\Lambda_{II'}$ into irreducible representation (irrep) blocks according to both discrete symmetries ($\mathcal{T}, C_{2z}$, and $C_{2x}$) and continuous symmetries (spin-SU(2) and valley-U(1)), leaving 513 independent real parameters. 
In particular, $C_{2z}\mathcal{T}$ symmetry will enforce the condensed $d$-wave order parameter to take the form of $\langle e^{-i2\theta} \hat{\Delta}_{d,1} + e^{i 2\theta} \hat{\Delta}_{d,2} \rangle_G$, where $\theta$ parametrizes the spatial orientation. Such order spontaneously breaks $C_{3z}$. 
As the total energy anisotropy associated with $\theta$ is expected to be small, we thus utilize $C_{2x}$ to pin $\theta=0$ (up to $C_{3z}$ rotations), in order to further reduce the number of variational parameters.
The actual nematic direction in experiments is likely to be selected by extrinsic effects, such as strain in the sample. We restrict our analysis to singlet pairings in the present study, as $J_A$ energetically favors spin-singlets states. All phases we study in this work with different symmetries are summarized in Table \ref{table:phases}.

The ground state energy is obtained by minimizing the expectation value $\langle \Psi_G | \hat{H} | \Psi_G \rangle$ (abbreviated as $\langle \hat{H} \rangle_G$) with respect to $\Lambda_{II'}$ and $\ket{\Phi_0}$ under the \textit{Gutzwiller Approximation} \cite{Gutzwiller1963_PRL}, which becomes exact in the limit of infinite coordination number \cite{Metzner1987_PRL, Metzner1988_PRB, Metzner1989_PRL}. 
To handle the large set of variational parameters, we adopt the variational strategy from Refs.~\cite{PhysRevB.76.165110,Nicola2012, PENG2022108348, QYHu2025_NC}.
In this approach, the onset uncorrelated (Nambu) reduced density matrix 
\begin{equation}
    \bm{\varrho}^0 = \begin{pmatrix}
        \bm{\rho}^0 & \bm{\Delta}^0 \\
        (\bm{\Delta}^0)^\dagger & \mathbbm{1} - (\bm{\rho}^0)^\intercal
    \end{pmatrix},\  
    \begin{matrix} \bm{\rho}^0_{\alpha \eta, \beta\eta} = \langle \hat{f}^\dagger_{\vb{R}; \beta \eta \uparrow } \hat{f}_{\vb{R}; \alpha \eta \uparrow} \rangle_0 \\ 
    \bm{\Delta}^0_{\alpha \eta, \beta\bar{\eta}} = \langle \hat{f}_{\vb{R}; \beta \bar{\eta}\downarrow} \hat{f}_{\vb{R}; \alpha\eta\uparrow} \rangle_0
    \end{matrix} \label{def:rho0_ff}
\end{equation} 
is treated as an independent set of variational parameters in addition to $\Lambda_{II'}$ and $\ket{\Phi_0}$ where we have assumed translational invariance.
A valid parametrization of $\bm{\varrho}^0$ must have eigenvalues restricted in $[0,1]$, and we present two such parametrizations in [Appendix.~\ref{appndx:prmtrz_rho0}]. 
The grand energy functional reads
\begin{gather}
    \mathcal{L}[\mu, \bm{\varrho}^0, \ket{\Phi_0}, \bm{\Lambda}, \bm{\lambda}^{F}, \bm{\lambda}^B] = \langle \hat{H} \rangle_G + \sum_{l} \lambda^F_{l} \!\cdot\! g^F_{l}\left(\ket{\Phi_0}, \bm{\varrho}^0 \right) \nonumber \\
    + \sum_{l} \lambda^B_{l} \!\cdot\! g^B_{l}\left( \bm{\Lambda}, \bm{\varrho}^0 \right) - \mu  \langle \hat{N} \rangle_G \ .
\end{gather}
Here, $\bm{\lambda}^{F(B)} \cdot \bm{g}^{F(B)}$ are Lagrange multipliers for Gutzwiller constraints \cite{PENG2022108348, Nicola2012}.
The ground state for a given chemical potential $\mu$ is found by solving:
\begin{gather}
    \pfpx{}{x} \mathcal{L} \Big|_{\mu} = 0,\quad x \in (\bm{\varrho}^0, \ket{\Phi_0}, \bm{\Lambda}, \bm{\lambda}^{F}, \bm{\lambda}^B)
\end{gather}


Once the ground state is found, we can obtain the quasi-particle excitations as $|\Psi^{h(p)}_{\vb{k}, \alpha \eta s}\rangle = \hat{q}^{(\dagger)}_{\vb{k}, \alpha \eta s} \hat{P}_G |\Phi_0\rangle$ where $\hat{q}^{(\dagger)}_{\vb{k}, \alpha \eta s} = \hat{P}_G \hat{f}^{(\dagger)}_{\vb{k}, \alpha \eta s} \hat{P}^{-1}_G$, and the quasi-particle spectrum equals to the spectrum of effective Hamiltonian $\hat{H}^F$ that solves $\ket{\Phi_0}$: $\hat{H}^F\ket{\Phi_0} = E^F \ket{\Phi_0}$\cite{PhysRevB.76.165110, Bünemann2003_PRB}.
The normal and anomalous quasi-particle renormalization factors $\MC{R}$ and $\MC{Q}$ then arise when projecting operators from physical to effective space: ($\bar{\eta}$ and $\bar{s}$ denote the opposite valley and spin indices)
\begin{gather}
    \hat{P}^\dagger_{\vb{R}} \hat{f}^\dagger_{\vb{R} \alpha \eta s} \hat{P}_{\vb{R}} \rightarrow \sum_{ \beta} \hat{f}^\dagger_{\vb{R} \beta \eta s} \MC{R}_{\beta \alpha}  + (-1)^{\delta_{\bar{s} \uparrow}} \hat{f}_{\vb{R} \beta \bar{\eta} \bar{s}}  \MC{Q}_{\beta\alpha }
\end{gather}

\begin{table}[h]
\centering
\begin{tabular}{|c|c|c|c|c|c|c|c|}
\hline
\multicolumn{2}{|c|}{Phases} & $n_d$ & $\Delta_s$ & $\Delta_d$ & $C_{3z}$ & $\hat{P}_{\vb{R}}$ & $\ket{\Phi_0}$ \\
\hline
 \multirow{3}{*}{SC} & $s$-wave & 0 & $\mathbb{R}$ & 0 & yes & \multirow{3}{*}{$[\hat{P}_{\vb{R}}, \hat{N}_{\vb{R}}] \neq 0$} & \multirow{3}{*}{BCS} \\
\cline{2-6}
 & $d$-wave & $\mathbb{R}$ & 0 & $\mathbb{R}$ & no & & \\
 \cline{2-6}
 & $s$+$d$-wave & $\mathbb{R}$ & $\mathbb{R}$ & $\mathbb{R}$ & no & &  \\
\hline
 \multirow{2}{*}{FL} & symmetric & 0 & 0 & 0 & yes & \multirow{2}{*}{$[\hat{P}_{\vb{R}}, \hat{N}_{\vb{R}}] = 0$} & \multirow{2}{*}{SD} \\
 \cline{2-6}
 & nematic & $\mathbb{R}$ & 0 & 0 &no & & \\
\hline
 \multirow{2}{*}{sFL} & symmetric & 0 & 0 & 0 & yes & \multirow{2}{*}{$[\hat{P}_{\vb{R}}, \hat{N}_{\vb{R}}] = 2 \hat{P}_{\vb{R}} $} & \multirow{2}{*}{SD} \\
 \cline{2-6}
 & nematic & $\mathbb{R}$ & 0 & 0 & no & & \\
 \hline
\end{tabular}
\caption{
    Signatures for all the phases studied in this work, including local order parameters, $C_{3z}$ symmetry, and different ansatz for $\hat{P}_{\vb{R}}$ and $\ket{\Phi_0}$. 
    $n_d = \frac{1}{4} \sum_{\alpha\eta s} \langle f^\dagger_{\mathbf{R};1 \eta s} f_{\mathbf{R}; 2 \eta s} \rangle_G$, $\Delta_s = \frac{1}{4} \langle \hat{\Delta}_{s} \rangle_G$, and $\Delta_d = \frac{1}{4} \langle \hat{\Delta}_{d,1} + \hat{\Delta}_{d,2} \rangle_G$
    We have fixed the gauge of global charge U(1) breaking and utilized the $C_{2z}T$ symmetry, so that $n_d$, $\Delta_s$ and $\Delta_d$ are real [Appendix~.\ref{appdx:TBG_implementation}]. SD stands for Slater Determinant. 
}
\label{table:phases}
\end{table}

One crucial difference from normal Gutzwiller state is that our ansatz for $\ket{\Phi_0}$ and $\hat{P}_{\vb{R}}$ for SC phase permits a gauge degrees of freedom for the $f$-orbitals besides the charge-U(1) phase factor: $\ket{\Psi_G} = \prod_{\vb{R}} \hat{P}_{\vb{R}} \hat{\mathcal{U}}^\dagger \hat{\mathcal{U}} \ket{\Phi_0}$, with $\hat{\mathcal{U}}$ being a unitary operator mixing $f_{\alpha \eta s}$ with $f_{\alpha' \bar\eta \bar s}^\dagger$.
Since $\hat{\mathcal{U}}$ commutes with all the symmetries except charge-U(1), it can absorb the local pairing components of $\ket{\Phi_0}$ and equivalently encode them into the local operator $\hat{P}_{\mathbf R}$.  We prove in the [Appendix.~\ref{appdx:gauge}] that one can always find a $\hat{\mathcal{U}}_{\Delta^0=0}$ such that $\bm{\Delta}^0_{\vb{R}}=0$ for $\bm{\varrho}^0_{\vb{R}}$ defined in Eq.~\ref{def:rho0_ff}, which is a general statement of the natural basis defined in \cite{PhysRevB.76.165110}. 
Under this gauge, the anomalous renormalization factor $\MC{Q}$ becomes non-zero only when the system turns into SC phases, while stays strictly zero for normal phases. 
The overall quasiparticle weight $Z_{\alpha \beta} = (\MC{R}^\dagger \MC{R})_{\alpha \beta} + (\MC{Q}^\dagger \MC{Q})_{\alpha \beta}$ [Appendix.~\ref{appdx:gauge}] is however a $\hat{\mathcal{U}}$-independent quantity and shall still serves as a measure of hybridization strength between the $f$ and $c$-orbitals just as in normal states.

\begin{figure}[t]
\centering
\includegraphics[width=0.48\textwidth]{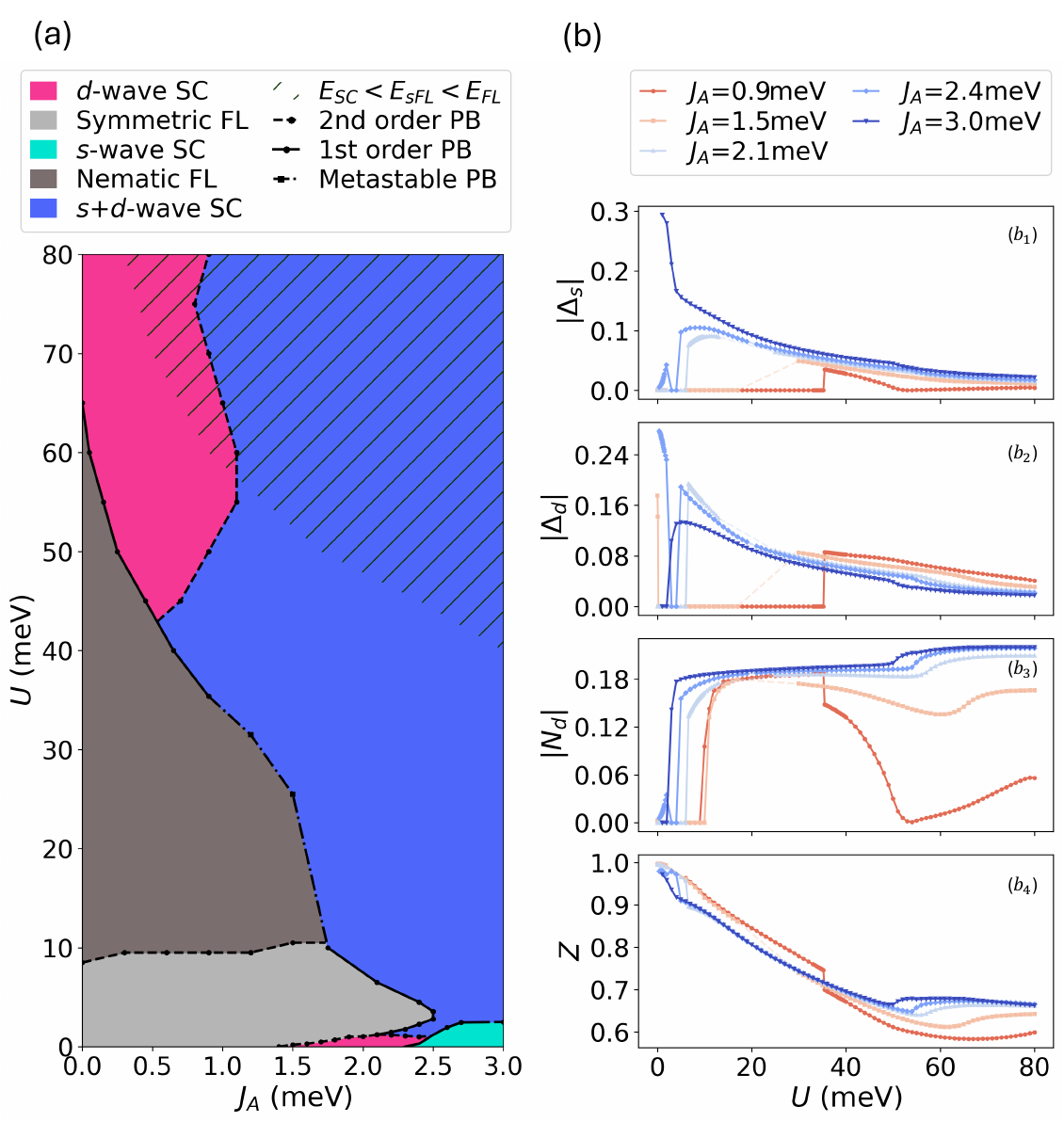}
\caption{(a) Phase diagram varying $U$ and $J_A$ at $\nu=2.5$ and $J_H=1.5\, \mathrm{meV}$. The different phases and phase boundaries (PB) are dubbed by different colors defined in the legend above. The phase boundaries are marked by black dots connected by guide lines where the solid line and dash line marks first and second order phase transitions respectively \footnote{We identify a second order transition from $d$-wave to $s$+$d$-wave when $\abs{\Delta_s} / \abs{\Delta_d} > 0.05$}. The dash-dot line marks metastable PB \footnote{The metastable PB denotes the transition from nematic FL to $d$-wave SC because the $s$+$d$-wave does not converge in the nearby region.}. The Fermi liquid regime dubbed by gray and dark gray colors separates the BCS and SCS phases. The shaded region marks the metastable normal state is sFL instead of conventional FL ($E_{SC}<E_{sFL}< E_{FL}$). (b) Plots of absolute values of intervalley local pairing order parameters $\abs{\Delta_s}$ and $\abs{\Delta_d}$, intravalley local interorbital order parameter $\abs{N_d}$ and quasi-particle weight $Z$ versus $U$ at $J_A=0.9,1.5,2.1,2.4,3.0 \, \mathrm{meV}$. The dented regions in ($\text{b}_1$) and ($\text{b}_2$) where both $\abs{\Delta_s}$ and $\abs{\Delta_d}$ are zero marks the FL phase which shrinks as $J_A$ increases. The discontinuities in ($\text{b}_4$) indciates first order phase transitions between SC and FL phases where FL tends to have larger $Z$ than SC states at same $J_A$.}
\label{fig:phase_diagram}
\end{figure}

\emph{Results}: [Fig.~\ref{fig:phase_diagram}] shows the phase diagram at filling $\nu=2.5$ as a function of $U$ and $J_A$ for fixed $J_H=1.5\,\mathrm{meV}$, highlighting the competition between onsite repulsion and phonon-mediated attraction. 
Owing to an approximate particle--hole symmetry of $\hat{H}$ \cite{song_magic-angle_2022, YJWang2025_PRB}, the results at $\nu=-2.5$ should differ only quantitatively. 
We observe a tilted dome-shaped FL phase with a tip at $(U, J_A)\approx(5\mathrm{meV}, 2.5\mathrm{meV})$ cuts through the SC phase, separating it into a weakly correlated BCS-like SC (BCS-SC) at small $U$ and a strongly correlated SC (SC-SC) at large $U$, similar to the Coulomb interaction driven superconductivity studied in \cite{Capone2004_PRL}. 
At $U=0$, a BCS-like $d$-wave SC state characterized by $\Delta_d=\frac{1}{4}\langle \hat{\Delta}_{d,1} + \hat{\Delta}_{d,2} \rangle_G$ develops out of the symmetric FL for $1.4\,\mathrm{meV} \lesssim J_{A} \lesssim 2.3\,\mathrm{meV}$, aided by $\hat{H}_{J_H}$ which favors inter-orbital pairing. 
In this regime, the condensation energy scales as $\sim J_A \Delta_d^2$ and is small for $J_A<2.0\,\mathrm{meV}$, so a modest $U$ drives a second-order transition back to the symmetric FL. 
Increasing $J_A$ at small $U$ raises the critical interaction $U_c$, and the transition becomes weakly first order for $J_A\geq 2.0\,\mathrm{meV}$. 
As the coupling strength further increases to $J_A > 2.4\,\mathrm{meV}$ at $U=0$, the pairing order with lowest energy shifts to $s$-wave from $d$-wave, where the variational wavefunction differs markedly from the metastable symmetric FL at the same $(U,J_A)$: the $f$-orbital occupation distribution $\mathcal{P}(N_f)$ Fig.~\ref{fig:Occ_Atomic_prob}($a_{31}$) indicates enhanced pairing, in accord with the large $\Delta_s$ for $J_A=3.0\,\mathrm{meV}$ in Fig.~\ref{fig:phase_diagram}(b), while the spin-valley-orbital fluctuations are strongly quenched as shown by Fig.~\ref{fig:Occ_Atomic_prob}($b_{31}$), leaving $\Ket{\psi_s}$ as the only configuration with significant weight. Its strong-coupling character also enables a smooth crossover into the SC-SC regime at large $U$ without going into the FL phase [Fig.~\ref{fig:phase_diagram}($a$)].

\begin{figure}[t]
\centering
\includegraphics[width=0.48\textwidth]{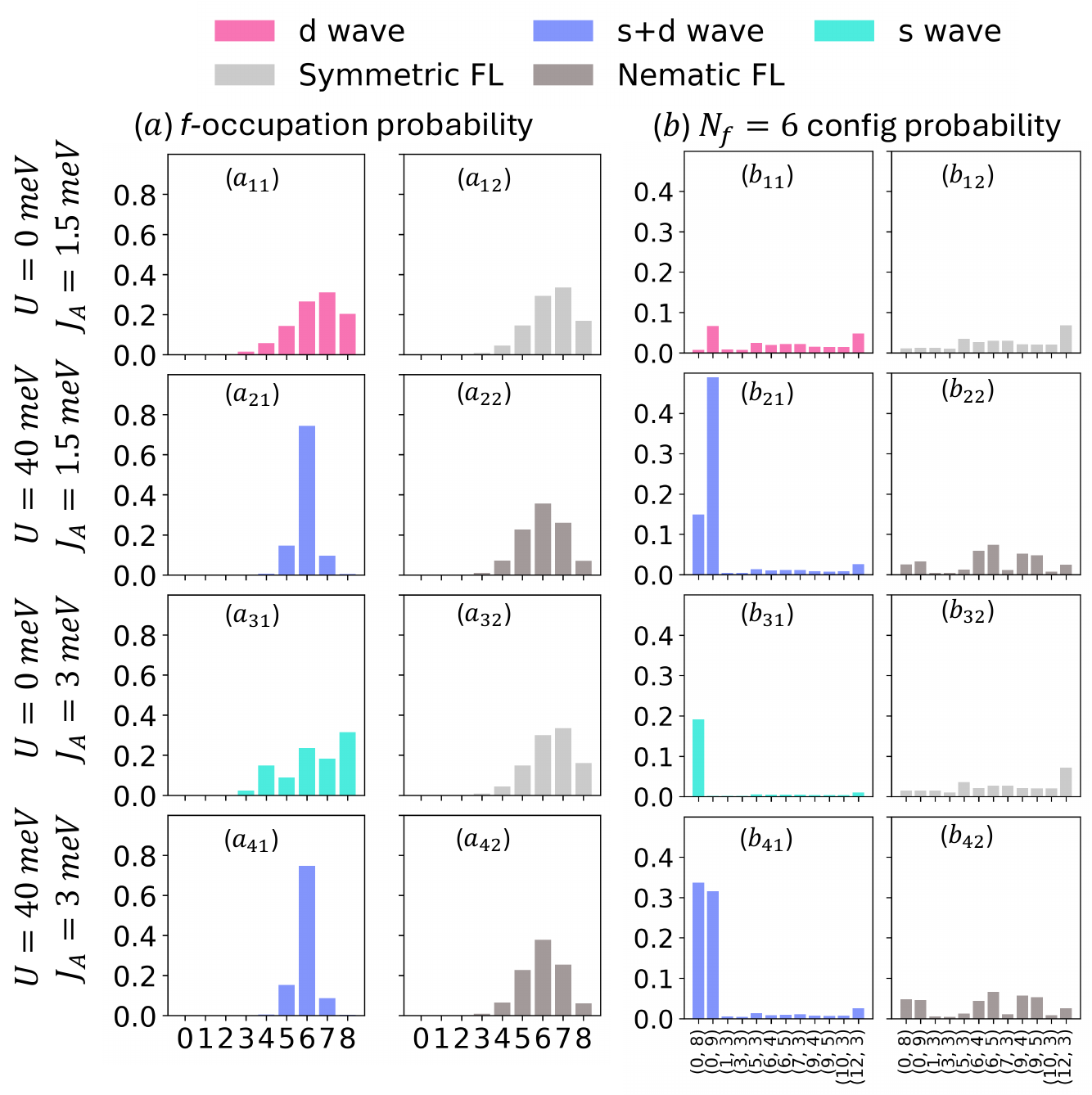}
\caption{(a) Occupation probability for $f$-orbital $\mathcal{P}(N_f)$ and (b) atomic configuration probability $\mathcal{P}_{N_f}(\Gamma)$ \footnote{$\Gamma$ are the irrep states and $\mathcal{P}_{N_f}(\Gamma)$ is multiplied by the dimension of the irrep.}
at $N_f=6$ sector at different sets of $(U, J_A)$ for each row. The left and right columns in (a) and (b) correspond to stable SC and meta-stable FL phases respectively. The x-axis in (a) labels $f$-orbital charge number $N_f$. The x-axis labels in (b): $(r, d)$ represents the $d$-th state in the $r$-th irreducible representation block defined in [Appendix.~\ref{appdx:TBG_implementation}]. $\Gamma_{(0, 8)}\equiv\ket{\psi_s}$ and $\Gamma_{(0,9)}\equiv\ket{\psi_d}$ are the $s$-wave and $d$-wave wavefunctions respectively.
}
\label{fig:Occ_Atomic_prob}
\end{figure}

Strong-correlation effects become prominent once $U$ exceeds the energy scale set by $J_A$, where a simple mean-field decoupling of the $f$-orbitals interactions fails to stabilize any superconducting solutions. Such a state is qualitatively distinct from a BCS mean-field state, in which Cooper pairs are formed in momentum space. Here, by contrast, electron pairs are formed locally in real space among the spin, orbital, and valley degrees of freedom within a single moir\'e unit cell. These ``preformed'' pairs acquire partial itinerancy through the residual charge fluctuations induced by doping away from the integer filling $\nu=2$, thereby driving the system into a superconducting phase. This mechanism can be captured by a charge-U(1)-breaking Gutzwiller wavefunction, with projector
$
\hat{P}^{\text{SC-SC}}_G \simeq (\Lambda^{\text{SC-SC}}_{6_s8} \ket{\psi_s} + \Lambda^{\text{SC-SC}}_{6_d8} \ket{\psi_d}) \bra{8}
+ \sum_{|I|=|I'|} \Lambda^{\text{SC-SC}}_{I I'} \Ket{I} \Bra{I'},
$ 
and an uncorrelated state $\Ket{\Phi^{\text{SC-SC}}_0}$ with nearly filled $f$ orbitals,
$
7 < \Bra{\Phi^{\text{SC-SC}}_0} \hat{N}_{\vb{R}, f} \Ket{\Phi^{\text{SC-SC}}_0} \lesssim 8.
$
The ``preformed'' pairs originate from the first term in $\hat{P}^{\text{SC-SC}}_G$, while their mobility is enabled by both the second term in $\hat{P}^{\text{SC-SC}}_G$ and the hybridization with $c$ electrons, as reflected in the large quasiparticle weight $Z$ shown in [Fig.~\ref{fig:phase_diagram}($b_4$)]. The SC-SC state is stabilized relative to the FL state by its smaller Coulomb energy $E_U$, achieved through a strong suppression of $f$-orbital charge fluctuations. This is evident from the sharply peaked occupation distribution $\mathcal{P}(N_f)$ at $N_f=6$ for $U=40\,\mathrm{meV}$ [Fig.~\ref{fig:Occ_Atomic_prob}($a_{21}$) and ($a_{41}$)], compared with the broader distribution in the FL state [Fig.~\ref{fig:Occ_Atomic_prob}($a_{22}$) and ($a_{42}$)]. The main difference between the cases $J_A = 1.5\,\mathrm{meV}$ and $J_A = 3\,\mathrm{meV}$ at $U=40\,\mathrm{meV}$ lies in the relative configuration weights of $\Ket{\psi_s}$ and $\Ket{\psi_d}$ [Fig.~\ref{fig:Occ_Atomic_prob}($b_{21}$) and ($b_{41}$)]. The large separation of scales between $U$ and $J_A$ therefore suggests that charge fluctuations are governed primarily by $U$, whereas the spin-valley-orbital structure of the condensate is controlled mainly by $J_A$, consistent with the mechanism proposed in Refs.~\cite{YJWANG2024_PRL,Capone2004_PRL}, where the effective $\tilde{U}$ is substantially renormalized downward while $\tilde{J}_A$ remains largely intact.

The $s$+$d$-wave state obtained here can host nodal points with appropriate interaction parameters[Fig.~\ref{fig:node_dos}(a)] due to the interplay between the $s$ and $d$ wave order parameters, when they are comparable as illustrated in [Fig.~\ref{fig:phase_diagram}(b)]. 
While $C_{2z}\mathcal{T}$ symmetry stabilizes each individual nodal point \cite{YJWANG2024_PRL}, the two nodal points with opposite winding numbers can annihilate each other \cite{ahn_failure_2019}, thereby opening a gap. 
Nonetheless, the gap structure in the momentum space remains highly anisotropic [Fig.~\ref{fig:node_dos}(a)], and the BdG quasi-particle density of states [Fig.~\ref{fig:node_dos}(b)] remains sharply distinct from the conventional $s$-wave superconductors.

\begin{figure}[t]
\centering
\includegraphics[width=0.48\textwidth]{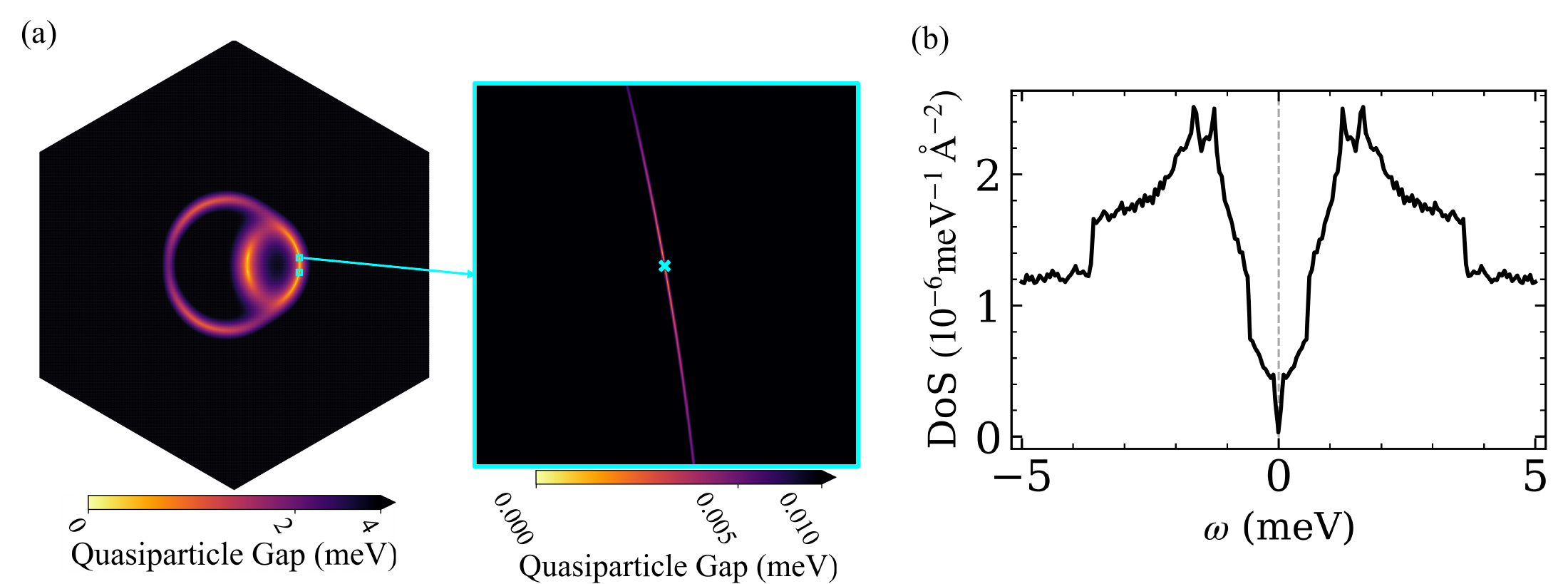}
\caption{Nodal $s+d$-wave at $U=80\mathrm{meV}$, $J_A=2.8\mathrm{meV}$: (a) Anisotropic quasiparticle gap structure near Fermi surface where a pair of nodes are found, as shown in the zoom-in gap plot with higher energy resolution; (b) Quasiparticle density of state plotted with a Lorentzian spread of $0.001\mathrm{meV}$, which has a V-shape structure at the vicinity of $\omega=0$.}
\label{fig:node_dos}
\end{figure}

Regardless of nematicity, the conventional FL remains the unique normal ground state across the phase diagram when SC does not develop (Fig.~\ref{fig:phase_diagram}).
However, we identify a metastable small Fermi liquid (sFL) phase that also becomes energetically proximate to the SC-SC state at large $U$. 
Unlike the conventional FL where $[\hat{P}_{\vb{R}}, \hat{N}_{\vb{R}}] = 0$, this novel metallic state is characterized by $[\hat{P}_{\vb{R}}, \hat{N}_{\vb{R}}] = 2 \hat{P}_{\vb{R}}$. 
As a consequence, the wavefunction $\ket{\Psi^{sFL}_G} = \hat{P}^{sFL}_G \ket{\Phi^{sFL}_0}$ is still an eigenstate of $\hat{N}$, but it has 2 less electrons per moir\'e unit cell than $|\Phi^{sFL}_0\rangle$ before the Gutzwiller projection[Appendix~.\ref{appdx:FS_sFL}]. Since the Gutzwiller projector $\hat{P}^{sFL}_G$ is not invertible by having rank $<2^8$, the usual Landau-Gutzwiller quasi-particle $\hat{q}^{(\dagger)}_{\vb{k}, \alpha \eta s} = \hat{P}_G \hat{f}^{(\dagger)}_{\vb{k}, \alpha \eta s} \hat{P}^{-1}_G$ \cite{Bünemann2003_PRB} cannot be defined which suggests sFL is not a heavy Fermi liquid and we shall not view the spectrum of $\hat{H}^F_{sFL}$ as the quasi-particle band. Instead, it is just an auxiliary Hamiltonian taking us to the non-correlated wave function before the projection.
For doping $\nu=2.5$ where we have $\nu_f \lesssim 2$ ($N_f \lesssim 6$) for the sFL state, $\hat{P}^{sFL}_G$ consists mostly of $(\Lambda_{6_s8} \ket{\psi_s} + \Lambda_{6_d8} \ket{\psi_d}) \bra{8}$ and the effective $f$-orbitals in $\ket{\Phi^{sFL}_0}$ are nearly fully occupied similar to SC-SC state at $U \gtrsim 60 \mathrm{meV}$. As a consequence, most of the mobile charge carriers are $c$-electrons which forms an effective small Fermi surface as shown in [Fig.~\ref{fig:Ecomponent}(c)].
The remaining $\lesssim 6$ electrons in the f-orbitals after the projection form a typical local RVB states \cite{Anderson1987_science,PWAnderson_2004}, which are completely localized and have no contribution to the formation of Fermi surface.
These signatures suggest that the sFL state can lower $\hat{H}_U$ and $\hat{H}_{J_A}$ much more effectively than FL state, with a cost of losing kinetic energy due to highly suppressed valence fluctuations of $f$-orbitals, which are confirmed by our numerical results in [Fig.~\ref{fig:Ecomponent}(b)]. The sFL becomes nematic if it has mixing components of $\ket{\psi_s}$ and $\ket{\psi_d}$ resulting finite $n_d$, which happens for $J_A \geq 2.3\, \mathrm{meV}$ at $U=60\, \mathrm{meV}$ [Appendix.~\ref{appdx:Uvar_JAvar}].

\begin{figure}[t]
\centering
\includegraphics[width=0.48\textwidth]{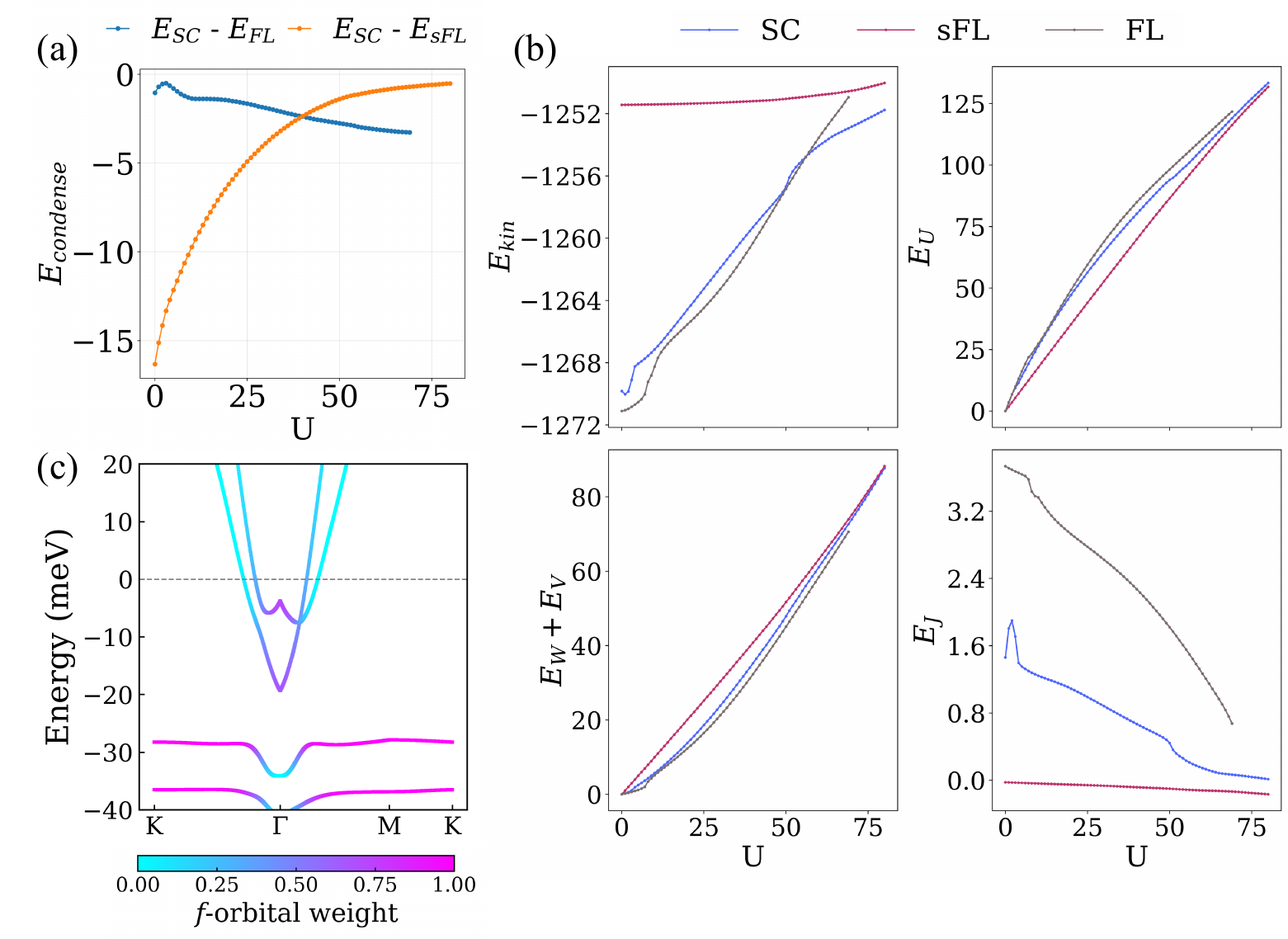}
\caption{(a) and (b) are Energy comparisons at $J_A=3\mathrm{meV}$ and different $U$, where all units are in meV: (a) Condensation energy of the SC state against the FL and sFL states; (b) Kinetic energy $E_{kin}$, Coulomb energy among $f$-orbitals $E_U$, Hartree energy between $f$- and $c$-orbitals plus Hartree energy among $c$-orbitals $E_V + E_W$ and Hund's coupling plus anti-Hund's coupling energy $E_{J_H} + E_{J_A}$ for SC, FL and sFL states respectively; (c) Effective single-particle band for nematic sFL state at $U=60\, \mathrm{meV}$ and $J_A = 3 \mathrm{meV}$, where the gray line shows the Fermi level.}
\label{fig:Ecomponent}
\end{figure}

We further compare the energies of the metastable FL and sFL phases within the SC-SC regime of the phase diagram [Fig.~\ref{fig:Ecomponent}(a)]. 
As $U$ increases, the condensation energy relative to the sFL state, \(E_{SC} - E_{sFL}\), diminishes in magnitude, while the energy relative to the FL state, \(E_{SC} - E_{FL}\), grows. 
Around $U\approx40$meV, an energy inversion occurs, after which the sFL becomes more proximate to the SC-SC than the conventional FL.
This energetic inversion may suggest that the sFL state, rather than the conventional FL, acts as the parent state of the SC-SC phase in the large $U$ limit. 
However, this identification remains subtle. 
As the SC-SC order melts upon heating, the system selects the normal phase with the lower free energy. The conventional FL, characterized by heavy flat bands near the Fermi level, contributes significant entropy that provides an entropic advantage over the sFL \cite{Saito_nature_2021}.

\emph{Conclusion}: 
We developed a variational Gutzwiller framework with charge-U(1)-breaking local projectors and applied it to an eight-band model for MATBG. The resulting phase diagram reveals a weak-coupling BCS-SC regime, a strongly correlated SC-SC regime stabilized at large $U$, and an intervening Fermi-liquid dome. In the SC-SC regime, superconductivity wins over FL state by superior suppression on $f$-orbital charge fluctuations, often with nematic $s$+$d$ character, and is accompanied by the emergence of a small Fermi liquid that becomes energetically closer to the superconducting state than the conventional Fermi liquid. These results support a strong-correlation route to pairing in MATBG and establish the present Gutzwiller approach as a useful framework for moiré superconductors with intertwined localization and pairing.

\emph{Acknowledgment}:
We thank Hao Shi for discussions about 8-band model of MATBG. We also thank Shiyu Peng and Zhanghuan Li for discussions about Gutzwiller theory. M. S. Liang and X. Dai are supported by the New Cornerstone Foundation and a fellowship award from Hong Kong Research Grant Council (Project No. SRFS2324-6S01). Z.-
D. S., Y.-J. Wang, and G.-D. Zhou. are supported by Na-
tional Natural Science Foundation of China (General Pro-
gram No. 12274005), National Key Research and Develop-
ment Program of China (No. 2021YFA1401900), and Quan-
tum Science and Technology-National Science and Technol-
ogy Major Project (No. 2021ZD0302403).

\bibliography{TBG_GASC.bib}
\onecolumngrid

\clearpage
\tableofcontents
\clearpage

\appendix

\section{Variational Gutzwiller Approximation Method for Superconductivity}
Here we derive our GA formalism for superconducting problem with original basis. While `natural basis' is commonly used in the literature for simplicity, we find that analytical derivatives are more conveniently derived in the original basis as we bypass the difficulty of taking derivatives of unitary transformation in the natural basis. We assume spin-SU(2) and time-reversal symmetry in our derivation since we are only interested in singlet pairing case for TBG. Such symmetry assumptions make our derivations compact and simple, while they can be easily generalized.

We start with a Periodic Anderson Model where $\{ \hat{f}_{s \alpha}, \hat{f}_{s' \beta}, \dots \}$ denotes the correlated electron operator and $\{ \hat{c}_{s a}, \hat{c}_{s' b}, \dots \}$ denotes the itinerant electron operator. $s, s', \dots$ are spin indices,  $\alpha, \beta, \dots$ and $a, b, \dots$ are orbital indices for $f$- and $c-$orbitals respectively. $\vb{R}$ is the lattice site index, $\vb{k}$ is the reciprocal lattice vector within the Brillouin zone.
\begin{gather}
    \hat{H}^0 = \sum_{\vb{R}_i \vb{R}_j} \sum_{s\alpha, s' \beta} t_{\vb{R}_i \vb{R}_j;\alpha \beta} \hat{f}^\dagger_{\vb{R}_i s \alpha} \hat{f}_{\vb{R}_j s' \beta} + \frac{1}{\sqrt{N}} \sum_{\vb{k}} \sum_{\vb{R}} \sum_{s \alpha, s' a} \left(e^{i \vb{k} \cdot \vb{R}} V_{\vb{k}; \alpha a} \hat{f}^\dagger_{\vb{R} s \alpha} \hat{c}_{\vb{k} s' a} + h.c.\right) + \sum_{\vb{k}} \sum_{s a} \epsilon^c_{\vb{k} a} \hat{c}^\dagger_{\vb{k} s a} \hat{c}_{\vb{k} s a} \label{H0:PAM}
\end{gather}
We consider local interactions amongst $f$-orbital which can take the form of any Hermitian 4-fermionic operator:
\begin{gather}
    \hat{H}^{\text{int}} = \sum_{\vb{R}}  \sum_{\substack{s_1 s_2 s_3 s_4 \\ \alpha \beta \gamma \delta}} U_{\alpha \beta \gamma \delta} \left( \hat{f}^\dagger_{\vb{R} s_1 \alpha} \hat{f}^\dagger_{\vb{R} s_2 \beta} \hat{f}_{\vb{R} s_3 \gamma} \hat{f}_{\vb{R} s_4 \delta} + h.c. \right)
\end{gather}
Our goal is to evaluate:
\begin{gather}
    E_G = \bra{\Psi_G} \hat{H}^0 + \hat{H}^{\text{int}} \ket{\Psi_G} / \bra{\Psi_G} \ket{\Psi_G}
\end{gather}
We first evaluate the kinetic energy $E_{\text{kin}}$. It is convenient to separate the onsite contribution of the $f$-orbitals from $\hat{H}^0$:
\begin{gather}
    \hat{H}^0_{\text{onsite}} = \sum_{\vb{R}} \sum_{\alpha \beta} \sum_{s} t_{\vb{R} \vb{R}; \alpha \beta} \hat{f}^\dagger_{\vb{R} s \alpha} \hat{f}_{\vb{R} s \beta} \\
    E_{\text{kin}} = \bra{\Psi_G} \hat{H}^0 - \hat{H}^0_{\text{onsite}} \ket{\Psi_G} / \bra{\Psi_G} \ket{\Psi_G} \label{def:Ekin}
\end{gather}
The main difficulty of evaluating $E_{\text{kin}}$ arises from the following terms:
\begin{gather}
    \bra{\Psi_G} \hat{f}^\dagger_{\vb{R}_i s \alpha} \hat{f}_{\vb{R}_j s \beta} \ket{\Psi_G} / \bra{\Psi_G} \ket{\Psi_G} \ ,\ \vb{R}_i \neq \vb{R}_j \label{expval:GffG} \\
    \bra{\Psi_G} \hat{f}^\dagger_{\vb{R} s \alpha} \hat{c}_{\vb{k} s a} \ket{\Psi_G} / \bra{\Psi_G} \ket{\Psi_G}
\end{gather}
The numerator in \eqref{expval:GffG} can be written as:
\begin{gather} 
    \bra{\Psi_G} \hat{f}^\dagger_{\vb{R}_i s \alpha} \hat{f}_{\vb{R}_j s \beta} \ket{\Psi_G} = \bra{\Phi_0} \left( \prod_{\vb{R} \neq \vb{R}_i, \vb{R}_j} \hat{P}^\dagger_{\vb{R}} \hat{P}_{\vb{R}} \right) \hat{P}^\dagger_{\vb{R}_i} \hat{f}^\dagger_{\vb{R}_i s \alpha} \hat{P}_{\vb{R}_i} \hat{P}^\dagger_{\vb{R}_j} \hat{f}_{\vb{R}_j s \beta} \hat{P}_{\vb{R}_j} \ket{\Phi_0} \label{expval:numerator_GffG}
\end{gather}
Use Wick's theorem, the above expectation value can be expanded to sums of products of contractions. The contraction $\bra{\Phi_0}\hat{f}^{(\dagger)}_{\vb{R}_1 s \alpha}  \hat{f}^{(\dagger)}_{\vb{R}_2 s' \beta}\ket{\Phi_0}$ is called a line connecting two sites $\vb{R}_1$ and $\vb{R}_2$ if $\vb{R}_1 \neq \vb{R}_2$, or is called a self-HF(Hartree Fock) Bubble if $\vb{R}_1 = \vb{R}_2$. The Gutzwiller approximation states that any products of contractions containing 2 sites connected by 3 lines(paths) or more are neglected. To cancel the self-HF bubbles, the \textbf{Gutzwiller Constraints} are introduced:
\begin{equation}\label{eq:gutzwiller_constraints}
\begin{aligned}
\bra{\Phi_0} \hat{P}_{\vb{R}}^\dagger \hat{P}_{\vb{R}} \hat{f}^\dagger_{\vb{R}s \alpha} \hat{f}_{\vb{R}s' \beta} \ket{\Phi_0}
    &= \bra{\Phi_0} \hat{f}^\dagger_{\vb{R} s \alpha} \hat{f}_{\vb{R} s\beta} \ket{\Phi_0}, \\
\bra{\Phi_0} \hat{P}_{\vb{R}}^\dagger \hat{P}_{\vb{R}} \hat{f}^\dagger_{\vb{R} s \alpha} \hat{f}^\dagger_{\vb{R} \bar{s} \beta} \ket{\Phi_0}
    &= \bra{\Phi_0} \hat{f}^\dagger_{\vb{R} s \alpha} \hat{f}^\dagger_{\vb{R} \bar{s} \beta} \ket{\Phi_0}, \\
\bra{\Phi_0} \hat{P}_{\vb{R}}^\dagger \hat{P}_{\vb{R}} \hat{f}_{\vb{R} s \alpha} \hat{f}_{\vb{R} \bar{s} \beta} \ket{\Phi_0}
    &= \bra{\Phi_0} \hat{f}_{\vb{R} s \alpha} \hat{f}_{\vb{R} \bar{s} \beta} \ket{\Phi_0},
\end{aligned}
\end{equation}
such that any 2 sites connected by 2 lines will also vanish because such cases must contain self-HF Bubbles.
Bestowed by GA and Gutzwiller Constraints(G.C.), we may proceed to calculate \eqref{expval:numerator_GffG}:
\begin{align}
    \bra{\Psi_G} \hat{f}^\dagger_{\vb{R}_i s \alpha} \hat{f}_{\vb{R}_j s \beta} \ket{\Psi_G} \overset{\text{GA \& G.C.}}{\approx}&  \left[\prod_{\vb{R} \neq \vb{R}_i, \vb{R}_j} \bra{\Phi_0} \hat{P}^\dagger_{\vb{R}} \hat{P}_{\vb{R}} \ket{\Phi_0} \right]\bra{\Phi_0} \hat{P}^\dagger_{\vb{R}_i} \hat{f}^\dagger_{\vb{R}_i s \alpha} \hat{P}_{\vb{R}_i} \hat{P}^\dagger_{\vb{R}_j} \hat{f}_{\vb{R}_j s \beta} \hat{P}_{\vb{R}_j} \ket{\Phi_0} \\
    \overset{\text{G.C.}}{=}& \bra{\Phi_0} \hat{P}^\dagger_{\vb{R}_i} \hat{f}^\dagger_{\vb{R}_i s \alpha} \hat{P}_{\vb{R}_i} \hat{P}^\dagger_{\vb{R}_j} \hat{f}_{\vb{R}_j s \beta} \hat{P}_{\vb{R}_j} \ket{\Phi_0},  \label{GA:EkinBlocks}
\end{align}
And $\ket{\Psi_G}$ is normalised to 1:
\begin{gather}
    \bra{\Psi_G} \ket{\Psi_G} \overset{\text{GA \& G.C.}}{\approx} \prod_{\vb{R}} \bra{\Phi_0} \hat{P}^\dagger_{\vb{R}} \hat{P}_{\vb{R}} \ket{\Phi_0} \overset{\text{G.C.}}{=} 1 \label{GA:normalization}
\end{gather}
The onsite energy consists of calculation of local observable $\hat{\mathcal{O}}_{\vb{R}}$:
\begin{align}
    \bra{\Psi_G} \hat{\mathcal{O}}_{\vb{R}} \ket{\Psi_G} \overset{\text{GA \& G.C.}}{\approx}&\left[\prod_{\vb{R}' \neq \vb{R}} \bra{\Phi_0} \hat{P}^\dagger_{\vb{R}'} \hat{P}_{\vb{R}'} \ket{\Phi_0} \right] \bra{\Phi_0} \hat{P}^\dagger_{\vb{R}} \hat{\mathcal{O}}_{\vb{R}} \hat{P}_{\vb{R}} \ket{\Phi_0} \\
    \overset{\text{G.C.}}{=}& \bra{\Phi_0} \hat{P}^\dagger_{\vb{R}} \hat{\mathcal{O}}_{\vb{R}} \hat{P}_{\vb{R}} \ket{\Phi_0} \label{eq:onsite_observable}
\end{align}
What remains is to compute the approximate expectation values, which can be done relative easily using either the natural-basis or mixed-basis representation of $\hat{P}_{\vb{R}}$. Here, however, we take a brief detour and work in the original basis. This choice is motivated by the need to derive analytical expressions for the derivatives of the energy functional, as will become clear in later sections B. We abbreviate the expectation value $ \langle \dots \rangle_G \defeq \bra{\Psi_G} \dots \ket{\Psi_G}$ and $ \langle \dots \rangle_0 \defeq \bra{\Phi_0} \dots \ket{\Phi_0}$.\\

\subsection{Energy Functional $\mathcal{L}$}
As discussed in the main text, additional Lagrange multipliers are introduced to $\mathcal{L}$ to enable $\bm{\varrho}^0$ as an extra degrees of variational freedom. The full energy functional (averaged per unit cell) reads:
\begin{gather}
    \mathcal{L}[\mu, \bm{\varrho}^0, \ket{\Phi_0}, \bm{\Lambda}, \bm{\lambda}^{F(B)}] = \langle \hat{H} \rangle_G + \sum_{l} \lambda^F_{l} g^F_{l}(\ket{\Phi_0}, \bm{\varrho}^0) + \sum_{l} \lambda^B_{l} g^B_{l}(\bm{\Lambda}, \bm{\varrho}^0) - \mu \sum_{l} \langle \hat{n}_{\vb{R}, l} \rangle_G, \label{eq:energy_functional}
\end{gather}
where $\langle \hat{H} \rangle_G = E_{\text{kin}} + E_{\text{atom}}$. We shall first calculate $E_{\text{kin}}$. The building blocks involving $f$-orbital for $E_{\text{kin}}$ are of the form: ($ i \neq j$)
\begin{gather}
    \bra{\Psi_G} \hat{f}^\dagger_{\vb{R}_i s \alpha} \hat{f}_{\vb{R}_j s' \beta} \ket{\Psi_G}, \quad \bra{\Psi_G} \hat{f}^\dagger_{\vb{R}_i s \alpha} \hat{f}^\dagger_{\vb{R}_j s' \beta} \ket{\Psi_G}, \quad \bra{\Psi_G} \hat{f}^\dagger_{\vb{R}_i s \alpha} \hat{c}_{\vb{k} s' a} \ket{\Psi_G} , \quad \bra{\Psi_G} \hat{f}_{\vb{R}_i s \alpha} \hat{c}_{\vb{k} s' a} \ket{\Psi_G} \label{Ekin:building_blocks}
\end{gather}
and their Hermitian conjugates. Under GA, they can be evaluated using \eqref{GA:EkinBlocks} and a convenient substitution can be made \cite{PhysRevB.76.165110, Nicola2012}:
\begin{gather}
    \hat{P}^\dagger_{\vb{R}_i} \hat{f}^\dagger_{\vb{R}_i s \alpha } \hat{P}_{\vb{R}_i} \rightarrow \sum_{s' \beta} \hat{f}^\dagger_{\vb{R}_i s' \beta} \MC{R}_{s' \beta; s \alpha}  + \hat{f}_{\vb{R}_i s' \beta}  \MC{Q}_{s' \beta ;s \alpha } 
\end{gather}
$\MC{R}, \MC{Q}$ are called the \textbf{Renormalisation Factors} for Bogliubov-Landau-Gutzwiller quasiparticle. With spin SU(2) symmetry on $\hat{P}_i$, the above substitution reduces to:
\begin{gather}
    \hat{P}^\dagger_{\vb{R}_{i}} \hat{f}^\dagger_{\vb{R}_{i} s \alpha } \hat{P}_{\vb{R}_{i}} \rightarrow \sum_{s' \beta} \delta_{s s'}\hat{f}^\dagger_{\vb{R}_{i} s \beta} \MC{R}_{s \beta; s \alpha}  + \delta_{s' \bar{s}} \hat{f}_{\vb{R}_{i} \bar{s} \beta}  \MC{Q}_{\bar{s} \beta ;s \alpha } \label{eq:renrml_substitution}
\end{gather}
The rotation along spin y-axis gives: $\hat{i \sigma_y}\hat{f}^\dagger_{\uparrow}\hat{i \sigma}^\dagger_y = \hat{f}^\dagger_{\uparrow}$, $\hat{i \sigma_y}\hat{f}^\dagger_{\downarrow}\hat{i \sigma}^\dagger_y = -\hat{f}^\dagger_{\downarrow}$. Apply it to both sides of the substitution above, we have:
\begin{gather}
    \MC{R} \defeq \MC{R}_{\uparrow \uparrow} = \MC{R}_{\downarrow \downarrow}, \quad \MC{Q} \defeq \MC{Q}_{\downarrow \uparrow} = - \MC{Q}_{\uparrow \downarrow},
\end{gather}
Now we only need to solve \eqref{eq:renrml_substitution} for spin up case by evaluating the following expectation values:
\begin{align}
    \mathcal{K}_{\uparrow \uparrow; \beta \alpha} \defeq \bra{\Phi_0} \hat{P}^\dagger_{\vb{R}_{i}} \hat{f}^\dagger_{\vb{R}_{i} \uparrow \alpha } \hat{P}_{\vb{R}_{i}} \hat{f}_{\vb{R}_{i} \uparrow \beta} \ket{\Phi_0} =& \sum_{\gamma} \langle \hat{f}^\dagger_{\vb{R}_{i} \uparrow \gamma} \hat{f}_{\vb{R}_{i} \uparrow \beta} \rangle_0 \MC{R}_{\gamma \alpha} + \langle \hat{f}_{\vb{R}_{i} \downarrow \gamma} \hat{f}_{\vb{R}_{i} \uparrow \beta} \rangle_0 \MC{Q}_{\gamma \alpha} \\
    \mathcal{K}_{\downarrow \uparrow; \beta \alpha} \defeq \bra{\Phi_0} \hat{P}^\dagger_{\vb{R}_{i}} \hat{f}^\dagger_{\vb{R}_{i} \uparrow \alpha } \hat{P}_{\vb{R}_{i}} \hat{f}^\dagger_{\vb{R}_{i} \downarrow \beta} \ket{\Phi_0} =& \sum_{\gamma} \langle \hat{f}^\dagger_{\vb{R}_{i} \uparrow \gamma} \hat{f}^\dagger_{\vb{R}_{i} \downarrow \beta} \rangle_0 \MC{R}_{\gamma \alpha} + \langle \hat{f}_{\vb{R}_{i} \downarrow \gamma} \hat{f}^\dagger_{\vb{R}_{i} \downarrow \beta} \rangle_0 \MC{Q}_{\gamma \alpha}
\end{align}
where the expectation values $\langle \rangle_0$ at the RHS are just the variational Nambu reduced local density matrix elements defined in \eqref{def:rho0_ff}. Thus, $\MC{R}$ and $\MC{Q}$ are solved by the following matrix equation:
\begin{gather}
    \begin{pmatrix}
        \MC{R} \\ \MC{Q}
    \end{pmatrix} = 
    (\bm{\varrho}^0)^{-1} 
    \begin{pmatrix}
        \mathcal{K}_{\uparrow \uparrow} \\ \mathcal{K}_{\downarrow \uparrow}
    \end{pmatrix},  \quad
    \begin{pmatrix}
        \MC{R}^\dagger & \MC{Q}^\dagger
    \end{pmatrix} = 
    \begin{pmatrix}
        \mathcal{K}^\dagger_{\uparrow \uparrow} & \mathcal{K}^\dagger_{\downarrow \uparrow}
    \end{pmatrix} (\bm{\varrho}^0)^{-1}
     \label{eq:RQ_solution}
\end{gather}
The inverse on $\bm{\varrho}^0$ means its eigenvalues must be non-zero, and spin-SU(2) symmetry guarantees that the spectrum of $\bm{\varrho}^0$ comes in pairs $\{d_n, 1-d_n\}$, thus the eigenvalues must be strictly within $(0,1)$. The legal parametrizations of $\bm{\varrho}^0$ are discussed in appendix?. The $\mathcal{K}$ matrices are calculated by expanding $\hat{P}_{\vb{R}_i}$ (site index $\vb{R}_i$ omitted hereafter):
\begin{align}
    \mathcal{K}_{\uparrow \uparrow; \beta \alpha} =& \bra{\Phi_0} \hat{P}^\dagger \hat{f}^\dagger_{\uparrow \alpha} \hat{P} \hat{f}_{\uparrow \beta} \ket{\Phi_0} \nonumber\\
    =& \sum_{I_1 \dots I_4} (\Lambda^\dagger)_{I_2 I_1} \bra{\Phi_0} \ket{I_2} \bra{I_1} \hat{f}^\dagger_{\uparrow \alpha} \ket{I_3} \bra{I_4} \hat{f}_{\uparrow \beta} \ket{\Phi_0} \Lambda_{I_3 I_4} \nonumber \\
    =& \sum_{I_1 \dots I_5}  (\Lambda^\dagger)_{I_2 I_1} \bra{\Phi_0} \ket{I_2} \bra{I_1} \hat{f}^\dagger_{\uparrow \alpha} \ket{I_3} \bra{I_4} \hat{f}_{\uparrow \beta}  \ket{I_5} \bra{I_5}  \ket{\Phi_0} \Lambda_{I_3 I_4} \nonumber \\
    =& \Tr[\bm{\Lambda}^\dagger \bm{S}^\dagger_{\uparrow \alpha} \bm{\Lambda} \bm{S}_{\uparrow \beta} \bm{m}^0], \quad (\bm{S}^{(\dagger)}_{\uparrow \alpha})_{I I'} \defeq \bra{I} \hat{f}^{(\dagger)}_{\uparrow \alpha} \ket{I'} \\[1em]
    \mathcal{K}^\dagger_{\uparrow \uparrow; \beta \alpha} =& \Tr[\bm{m}^0 \bm{S}^\dagger_{\uparrow \alpha} \bm{\Lambda}^\dagger \bm{S}_{\uparrow \beta} \bm{\Lambda}] \\[1em]
    \mathcal{K}_{\downarrow \uparrow; \beta \alpha} =& \Tr[\bm{\Lambda}^\dagger \bm{S}^\dagger_{\uparrow \alpha} \bm{\Lambda} \bm{S}^\dagger_{\downarrow \beta} \bm{m}^0] \\[1em]
    \mathcal{K}^\dagger_{\downarrow \uparrow; \beta \alpha} =& \Tr[\bm{m}^0 \bm{S}_{\downarrow \alpha} \bm{\Lambda}^\dagger \bm{S}_{\uparrow \beta} \bm{\Lambda}]
\end{align}
where we define the uncorrelated reduced many-body density matrix $\bm{m}^0$:
\begin{gather}
    \bm{m}^0_{I I'} \defeq \bra{\Phi_0} \ket{I'} \bra{I} \ket{\Phi_0} = (-1)^{\abs{\overline{(I \cup I')}}}\det(\mathcal{M}^{I' I}_{\rho}) \label{eq:m0_cal}, 
\end{gather}
where (we abbreviate $\sigma$ as a combined spin-orbital index $s \alpha$, $\sigma'_i \in I'$ and $\sigma_j \in I$)
\begin{gather}
\mathcal{M}^{I' I}_{\varrho^0} = \begin{pmatrix}
\mathcal{P}^{I' I}_{\varrho^0} & \mathcal{P}_{\varrho^0}^{I' \overline{(I \cup I')}} \\
\mathcal{P}_{\varrho^0}^{\overline{(I \cup I')} I} & \mathcal{P}_{\varrho^0}^{\overline{(I \cup I')} \overline{(I \cup I')}}  - \mbbm{1}
\end{pmatrix}
, \quad 
\mathcal{P}^{I' I}_{\varrho^0} = \begin{pmatrix}
\bm{\varrho}^0_{\bm{\sigma'}_0 \bm{\sigma}_0} & \bm{\varrho}^0_{\bm{\sigma'}_0 \bm{\sigma}_1} & \dots & \bm{\varrho}^0_{\bm{\sigma'}_0 \bm{\sigma}_{\abs{I}}} \\
\bm{\varrho}^0_{\bm{\sigma'}_1 \bm{\sigma}_0} & \bm{\varrho}^0_{\bm{\sigma'}_1 \bm{\sigma}_1} & \dots & \vdots \\
\vdots & \vdots & \ddots & \vdots \\
\bm{\varrho}^0_{\bm{\sigma'}_{\abs{I'}} \bm{\sigma}_0} & \dots & \dots & \bm{\varrho}^0_{\bm{\sigma'}_{\abs{I'}} \bm{\sigma}_{\abs{I}}}
\end{pmatrix} \label{def:M_and_P}
\end{gather}

The building blocks for $E_{\text{kin}}$ \eqref{Ekin:building_blocks} can now be explicitly expressed as functions of $\ket{\Phi_0}$, $\bm{\Lambda}$ and $\bm{\varrho}^0$:
\begin{equation}
\begin{aligned}
    \bra{\Psi_G} \hat{f}^\dagger_{\vb{R}_i \uparrow \alpha} \hat{f}_{\vb{R}_j \uparrow \beta} \ket{\Psi_G} =& 
        \sum_{\alpha' \beta'} \begin{aligned}[t] & \MC{R}^\dagger_{\beta \beta'} \Expval{\hat{f}^\dagger_{\vb{R}_i \uparrow \alpha'} \hat{f}_{\vb{R}_j \uparrow \beta'}}{0} \MC{R}_{\alpha' \alpha} + \MC{Q}^\dagger_{\beta \beta'} \Expval{\hat{f}^\dagger_{\vb{R}_i \uparrow \alpha'} \hat{f}^\dagger_{\vb{R}_j \downarrow \beta'}}{0} \MC{R}_{\alpha' \alpha} \\
        &+ \MC{R}^\dagger_{\beta \beta'} \Expval{\hat{f}_{\vb{R}_i \downarrow \alpha'} \hat{f}_{\vb{R}_j \uparrow \beta'}}{0} \MC{Q}_{\alpha' \alpha} + \MC{Q}^\dagger_{\beta \beta'} \Expval{\hat{f}_{\vb{R}_i \downarrow \alpha'} \hat{f}^\dagger_{\vb{R}_j \downarrow \beta'}}{0} \MC{Q}_{\alpha' \alpha}
        \end{aligned} \\
        =& \frac{1}{N_k} \sum_{\vb{k}} e^{i \vb{k} \cdot (\vb{R}_j - \vb{R}_i)} \sum_{\alpha' \beta'} \begin{aligned}[t] & \MC{R}^\dagger_{\beta \beta'} \Expval{\hat{f}^\dagger_{\vb{k} \uparrow \alpha'} \hat{f}_{\vb{k} \uparrow \beta'}}{0} \MC{R}_{\alpha' \alpha} + \MC{Q}^\dagger_{\beta \beta'} \Expval{\hat{f}^\dagger_{\vb{k} \uparrow \alpha'} \hat{f}^\dagger_{-\vb{k} \downarrow \beta'}}{0} \MC{R}_{\alpha' \alpha} \\
        &+ \MC{R}^\dagger_{\beta \beta'} \Expval{\hat{f}_{- \vb{k} \downarrow \alpha'} \hat{f}_{\vb{k} \uparrow \beta'}}{0} \MC{Q}_{\alpha' \alpha} + \MC{Q}^\dagger_{\beta \beta'} \Expval{\hat{f}_{-\vb{k} \downarrow \alpha'} \hat{f}^\dagger_{-\vb{k} \downarrow \beta'}}{0} \MC{Q}_{\alpha' \alpha}
        \end{aligned} 
    \\[1em]
    \bra{\Psi_G} \hat{f}^\dagger_{\vb{R}_i \uparrow \alpha} \hat{f}^\dagger_{\vb{R}_j \downarrow \beta} \ket{\Psi_G}
        =& \frac{1}{N_k} \sum_{\vb{k}} e^{i \vb{k} \cdot (\vb{R}_j - \vb{R}_i)} \sum_{\alpha' \beta'} \begin{aligned}[t] & \MC{R}^\intercal_{\beta \beta'} \Expval{\hat{f}^\dagger_{\vb{k} \uparrow \alpha'} \hat{f}^\dagger_{-\vb{k} \downarrow \beta'}}{0} \MC{R}_{\alpha' \alpha} - \MC{Q}^\intercal_{\beta \beta'} \Expval{\hat{f}^\dagger_{\vb{k} \uparrow \alpha'} \hat{f}_{\vb{k} \uparrow \beta'}}{0} \MC{R}_{\alpha' \alpha} \\
        &+ \MC{R}^\intercal_{\beta \beta'} \Expval{\hat{f}_{- \vb{k} \downarrow \alpha'} \hat{f}^\dagger_{-\vb{k} \downarrow \beta'}}{0} \MC{Q}_{\alpha' \alpha} - \MC{Q}^\intercal_{\beta \beta'} \Expval{\hat{f}_{- \vb{k} \downarrow \alpha'} \hat{f}_{\vb{k} \uparrow \beta'}}{0} \MC{Q}_{\alpha' \alpha}
        \end{aligned} 
    \\[1em]
    \bra{\Psi_G} \hat{f}^\dagger_{\vb{R}_i \uparrow \alpha} \hat{c}_{\vb{k} \uparrow a} \ket{\Psi_G} =&
        \frac{1}{\sqrt{N_k}} e^{-i \vb{k} \cdot \vb{R}_i} \sum_{\alpha'} \Expval{\hat{f}^\dagger_{\vb{k} \uparrow \alpha} \hat{c}_{\vb{k} \uparrow a}}{0} \MC{R}_{\alpha' \alpha} + \Expval{\hat{f}_{-\vb{k} \downarrow \alpha'} \hat{c}_{\vb{k} \uparrow a}}{0} \MC{Q}_{\alpha' \alpha}\\[1em]
    \bra{\Psi_G} \hat{f}_{\vb{R}_i \downarrow \alpha} \hat{c}_{\vb{k} \uparrow a} \ket{\Psi_G} =& \frac{1}{\sqrt{N_k}} e^{- i \vb{k} \cdot \vb{R}_i} \sum_{\alpha'} \Expval{\hat{f}_{-\vb{k} \downarrow \alpha'} \hat{c}_{\vb{k} \uparrow a}}{0} \MC{R}_{\alpha' \alpha} - \Expval{\hat{f}^\dagger_{\vb{k} \uparrow \alpha'} \hat{c}_{\vb{k} \uparrow a}}{0} \MC{Q}_{\alpha' \alpha}
\end{aligned} \label{cal:non-local pairing and hopping}
\end{equation}
Where we've used the translational invariance for $\ket{\Phi_0}$ and the Fourier transform is defined as $\hat{f}_{\vb{k}} = \frac{1}{\sqrt{N_k}} \sum_{i} e^{-i\vb{k} \cdot \vb{R}_i} \hat{f}_{\vb{R}_i}$. To have a more compact expression, we define the Nambu basis in momentum space:
\begin{equation}
\begin{aligned}
    \bm{\Psi}^\dagger_{\vb{k}} =& \begin{pmatrix}
        \bm{\psi}^\dagger_{\vb{k} \uparrow} & \bm{\psi}_{-\vb{k} \downarrow}
    \end{pmatrix} \\
    \bm{\psi}^\dagger_{\vb{k} \uparrow} =& \begin{pmatrix}
        \hat{f}^\dagger_{\vb{k} \uparrow \alpha_1} & \dots & \hat{f}^\dagger_{\vb{k} \uparrow \alpha_{\mathcal{N}_f}}  & \hat{c}^\dagger_{\vb{k} \uparrow a_0}& \dots  & \hat{c}^\dagger_{\vb{k} \uparrow a_{\mathcal{N}_c}} 
    \end{pmatrix} \\
    \bm{\psi}_{-\vb{k} \downarrow} =& \begin{pmatrix}
        \hat{f}_{-\vb{k} \downarrow \alpha_1} & \dots & \hat{f}_{-\vb{k} \downarrow \alpha_{\mathcal{N}_f}} & \hat{c}_{-\vb{k} \downarrow a_0} & \dots & \hat{c}_{-\vb{k} \downarrow a_{\mathcal{N}_c}}
    \end{pmatrix}
\end{aligned} \label{def:Nambu_basis_k}
\end{equation}
The uncorrelated Nambu reduced density matrix in momentum space is defined as: (The superscript $F$ marks expecation value is taken with respect to $\ket{\Phi_0}$)
\begin{gather}
    (\bm{\varrho}^F_{\vb{k}})_{s \alpha, s' \beta} = \bra{\Phi_0} \bm{\Psi}^\dagger_{\vb{k}; s' \beta} \bm{\Psi}_{\vb{k}; s \alpha} \ket{\Phi_0} \label{def:varrho0_k}
\end{gather}
One can check that $E_{\text{kin}}$ can be written as:
\begin{gather}
    E_{\text{kin}} = 2 \frac{1}{N_k} \sum_{\vb{k}} \Tr[
        \mathcal{H}_{\vb{k}}
        \begin{pmatrix}
            \mathcal{R}^\dagger & \mathcal{Q}^\dagger
        \end{pmatrix} \bm{\varrho}^F_{\vb{k}} 
        \begin{pmatrix}
            \mathcal{R} \\ \mathcal{Q}
        \end{pmatrix}
    ], \\
    \mathcal{H}_{\vb{k}} \defeq \begin{pmatrix}
            t_{\vb{k}} &  V_{\vb{k}}^\dagger \\
            V_{\vb{k}} & \epsilon^c_{\vb{k}}
        \end{pmatrix}, \quad t_{\vb{k}; \alpha \beta} \defeq \sum_{i \neq j} e^{i \vb{k} \cdot (\vb{R}_j - \vb{R}_i)} t_{ij; \alpha \beta}
\end{gather}
where the factor 2 comes from spin degeneracy. $\mathcal{R}$ and $\mathcal{Q}$ are both $\mathcal{N}_{f+c} \times \mathcal{N}_{f+c}$ size matrices promoted from $\MC{R}$ and $\MC{Q}$: 
\begin{gather}
    \mathcal{R} = \begin{pmatrix*}[l]
        \MC{R} & 0 \\ 0 & \mathbb{I}_{\scriptscriptstyle \mathcal{N}_c \times \mathcal{N}_c}
    \end{pmatrix*}, \quad
    \mathcal{Q} = \begin{pmatrix*}[l]
        \MC{Q} & 0 \\ 0 & 0_{\scriptscriptstyle \mathcal{N}_c \times \mathcal{N}_c}
    \end{pmatrix*}
\end{gather}
Up to a constant shift (The constant is actually \[\sum_{\vb{k}} \Tr[(\mathcal{R}^\dagger \mathcal{R} + \mathcal{Q}^\dagger \mathcal{Q}) \mathcal{H}_{-\vb{k}}]=\sum_{\vb{k}} \Tr[(\MC{R}^\dagger \MC{R} + \MC{Q}^\dagger \MC{Q})t_{\vb{k}}] + \sum_{\vb{k} a} \epsilon^c_{\vb{k} a}\], but since onsite terms among $f$-orbital are absent in $t_{\vb{k}}$, we have $\sum_{\vb{k}} t_{\vb{k}} = 0$) $\sum_{\vb{k} a} \epsilon^c_{\vb{k} a}$, $E_{\text{kin}}$ can also be expressed as:
\begin{gather}
    E_{\text{kin}} = \frac{1}{N_k}\sum_{\vb{k}}\Tr[
        \begin{pmatrix}
            \mathcal{H}_{\vb{k}} & 0 \\ 0 & -\mathcal{H}^*_{-\vb{k}}
        \end{pmatrix}
        \begin{pmatrix}
            \mathcal{R}^\dagger & \mathcal{Q}^\dagger \\
            -\mathcal{Q}^\intercal & \mathcal{R}^\intercal
        \end{pmatrix} \bm{\varrho}^F_{\vb{k}} 
        \begin{pmatrix}
            \mathcal{R} & - \mathcal{Q}^* \\ \mathcal{Q} & \mathcal{R}^*
        \end{pmatrix}
    ], \label{res:Ekin}
\end{gather}
where the last three terms in trace can be identified as physical Nambu reduced density matrix $\bm{\varrho}^G_{\vb{k}}$ without onsite contributions from the $f$-orbital:
\begin{gather}
    \bm{\varrho}^G_{\vb{k}} \defeq        
         \begin{pmatrix}
            \mathcal{R}^\dagger & \mathcal{Q}^\dagger \\
            -\mathcal{Q}^\intercal & \mathcal{R}^\intercal
        \end{pmatrix} \bm{\varrho}^F_{\vb{k}} 
        \begin{pmatrix}
            \mathcal{R} & - \mathcal{Q}^* \\ \mathcal{Q} & \mathcal{R}^*
        \end{pmatrix} + \bm{\varrho}^G_{ff; \text{onsite}} 
\end{gather}
$E_{\text{atom}}$ can be directly calculated from \eqref{eq:onsite_observable}:
\begin{align}
    E_{\text{atom}} =& \Tr[\bm{\Lambda}^\dagger \bm{H}_{\text{atom}} \bm{\Lambda} \bm{m}^0] \\
    (\bm{H}_{\text{atom}})_{I I'} =& \bra{I} \hat{H}^{\text{int}} + \hat{H}^0_{\text{onsite}} \ket{I'}
\end{align}
We've worked out all the energy components in $\mathcal{L}$, now we let's look at the constraints $\bm{g}^F$ and $\bm{g}^B$ in \eqref{eq:energy_functional}. We denote $l=0$ as the normalization constraint for $\ket{\Phi_0}$ and \eqref{GA:normalization}:
\begin{equation}
    \begin{aligned}
        \lambda^F_0 g^F_0 =& E^F \left(1 - \bra{\Phi_0} \ket{\Phi_0}\right), \quad \lambda^F_0 \defeq E^F \\
        \lambda^B_0 g^B_0 =& E^B \left(1 - \Tr[\bm{\Lambda}^\dagger \bm{\Lambda} \bm{m}^0] \right), \quad \lambda^B_0 \defeq E^B
    \end{aligned}
\end{equation}
As mentioned in the main text, we promoted the $\bm{\varrho}^0$ to be an extra variational degree of freedom in addition to $\ket{\Phi_0}$ and $\hat{P}$. For consistency, we need to make sure the uncorrelated density matrix calculated from $\ket{\Phi_0}$:
\begin{gather}
    \bm{\varrho}^F = \frac{1}{N_k} \sum_{\vb{k}} (\varrho^0_{\vb{k}})_{ff}, \quad \text{We take $f$-orbital part from $\bm{\varrho}^F_{\vb{k}}$}
\end{gather}
with $\varrho^0_{\vb{k}}$ defined in \eqref{def:varrho0_k} equals to $\bm{\varrho}^0$ during the variational process. The similar constraints for $\hat{P}$ are just the Gutzwiller constraints \eqref{eq:gutzwiller_constraints} where we can define the left hand side as $\bm{\varrho}^B$:
\begin{equation}
    \begin{gathered}
        \bm{\varrho}^B = \begin{pmatrix}
            \bm{\rho}^B & \bm{\Delta}^B \\
            (\bm{\Delta}^B)^\dagger & \mathbb{I} - (\bm{\rho}^B)^\intercal
        \end{pmatrix} \\
        \bm{\rho}^B_{\alpha \beta} \defeq \bra{\Phi_0} \hat{P}^\dagger \hat{P} \hat{f}^\dagger_{\uparrow \beta} \hat{f}_{\uparrow \alpha} \ket{\Phi_0} = \Tr[\bm{\Lambda}^\dagger \bm{\Lambda} \bm{S}^\dagger_{\uparrow \beta} \bm{S}_{\uparrow \alpha} \bm{m}^0] \\
        \bm{\Delta}^B_{\alpha \beta} \defeq \bra{\Phi_0} \hat{P}^\dagger \hat{P} \hat{f}_{\downarrow \beta} \hat{f}_{\uparrow \alpha} \ket{\Phi_0} = \Tr[\bm{\Lambda}^\dagger \bm{\Lambda} \bm{S}_{\downarrow \beta} \bm{S}_{\uparrow \alpha} \bm{m}^0]
    \end{gathered}
\end{equation}
In $\bm{g}^{F(B)}$, the $l \geq 1$ constraints are just to enforce $\bm{\varrho}^{F(B)} = \bm{\varrho}^0$ where the indcies $l$ denote the Hermitian matrix basis $\{ \mathcal{O}_l \}$ to expand $\bm{\varrho}^0$, $\bm{\varrho}^{F(B)}$ and $\bm{\lambda}^{F(B)}$. The basis sets $\{ \mathcal{O}_l \}$ should commute with all symmetry operations imposed on $\hat{P}$ and $\ket{\Phi_0}$ and they're chosen to be orthonormal for convenience:
\begin{gather}
    \Tr[\mathcal{O}_l \mathcal{O}_{l'}] = \delta_{l l'} \label{def:singlebodybasis}
\end{gather}
Thus, the constraints read:
\begin{equation}
    \begin{aligned}
        \Tr[\bm{\lambda}^{F(B)} \left(\bm{\varrho}^{F(B)} - \bm{\varrho}^0\right)] =& \Tr[\sum_{l} \lambda^{F(B)}_l \mathcal{O}_l \left(\sum_{l'} (\varrho^{F(B)}_{l'} - \varrho^0_{l'}) \mathcal{O}_{l'} \right)] \\
        =& \sum_{l} \lambda^{F(B)}_l \left(\varrho^{F(B)}_{l} - \varrho^0_{l} \right)
    \end{aligned} \label{eq:varrho_constraints}
\end{equation}
Where 
\begin{gather}
    \varrho^F_l = \Tr[\mathcal{O}_l \bm{\varrho}^F] = \bra{\Phi_0} \hat{\varrho}^F_l \ket{\Phi_0}, \quad \varrho^B_l = \Tr[\mathcal{O}_l \bm{\varrho}^B] = (\text{The bilinear form is given below}) \\
    \hat{\varrho}^F_l = \sum_{\alpha \beta} (\mathcal{O}_l)_{\uparrow \alpha; \uparrow \beta} \hat{f}^\dagger_{\uparrow \alpha} \hat{f}_{\uparrow \beta} + (\mathcal{O}_l)_{\uparrow \alpha; \downarrow \beta} \hat{f}^\dagger_{\uparrow \alpha} \hat{f}^\dagger_{\downarrow \beta} + (\mathcal{O}_l)_{\downarrow \alpha; \uparrow \beta} \hat{f}_{\downarrow \alpha} \hat{f}_{\uparrow \beta} + (\mathcal{O}_l)_{\downarrow \alpha; \downarrow \beta} \hat{f}_{\downarrow \alpha} \hat{f}^\dagger_{\downarrow \beta} = \big(\bm{\Psi}^\dagger_{\vb{k}} \big)_f \mathcal{O}_l \big(\bm{\Psi}_{\vb{k}}\big)_f \label{def:operator_rhoF}
\end{gather}
For the sake of convenience, we can also expand $\bm{\Lambda}$ in terms of many-body basis constrained by the system's symmetry:
\begin{gather}
    \bm{\Lambda} = \sum_{\nu} a_{\nu} \bm{\Gamma}_{\nu} \label{def:many-body basis}
\end{gather}
where $\bm{a}$ is the vector of expansion coefficients (MVP: many-body variational parameters) and $\{ \bm{\Gamma}_{\nu} \}$ is the basis matrix. Therefore, all relevant quantities: $\MC{R} (\MC{R}^\dagger),\MC{Q} (\MC{Q}^\dagger), E_{\text{atom}}, \Tr[\bm{\Lambda}^\dagger \bm{\Lambda} \bm{m}^0], \varrho_l^B$ in $\mathcal{L}$ can be expressed as a bilinear function of $\bm{a}$, whose kernel is denoted by a hollow matrix: $\mbbm{R}(\mbbm{R}^\dagger)$, $\mbbm{Q}(\mbbm{Q}^\dagger)$, $\mbbm{H}_{\text{atom}}$, $\mbbm{F}, \mbbm{P}_l$: (all traces operate on Fock space inices)
\begin{alignat}{2}
    \begin{pmatrix}
        \MC{R} \\ \MC{Q}
    \end{pmatrix} =& 
    \bm{a}^\dagger \begin{pmatrix}
        \mbbm{R} \\ \mbbm{Q}
    \end{pmatrix} \bm{a}, &\quad
    \begin{pmatrix}
        \mbbm{R}_{\nu \nu'} \\ \mbbm{Q}_{\nu \nu'}
    \end{pmatrix} =& \sum_{\alpha \beta}
    \begin{pmatrix} 
        [(\bm{\varrho}^0)^{-1}]_{\uparrow \alpha; \uparrow \beta} \Tr[\bm{\Gamma}^\dagger_{\nu} \bm{S}^\dagger_{\uparrow \alpha} \bm{\Gamma}_{\nu'} \bm{S}_{\uparrow \beta} \bm{m}^0] + [(\bm{\varrho}^0)^{-1}]_{\uparrow \alpha; \downarrow \beta} \Tr[\bm{\Gamma}^\dagger_{\nu} \bm{S}^\dagger_{\uparrow \alpha} \bm{\Gamma}_{\nu'} \bm{S}^\dagger_{\downarrow \beta} \bm{m}^0] \\
        [(\bm{\varrho}^0)^{-1}]_{\downarrow \alpha; \uparrow \beta}  \Tr[\bm{\Gamma}^\dagger_{\nu} \bm{S}^\dagger_{\uparrow \alpha} \bm{\Gamma}_{\nu'} \bm{S}_{\uparrow \beta} \bm{m}^0] + [(\bm{\varrho}^0)^{-1}]_{\downarrow \alpha; \downarrow \beta} \Tr[\bm{\Gamma}^\dagger_{\nu} \bm{S}^\dagger_{\uparrow \alpha} \bm{\Gamma}_{\nu'} \bm{S}^\dagger_{\downarrow \beta} \bm{m}^0]
    \end{pmatrix} \nonumber \\
    & \ & =& (\bm{\varrho}^0)^{-1} \begin{pmatrix}
        (\mbbm{K}_{\uparrow \uparrow})_{\nu \nu'} \\ (\mbbm{K}_{\downarrow \uparrow})_{\nu \nu'}
    \end{pmatrix}
    \label{def:HM_RQ} \\[1.5em] 
    \begin{pmatrix}
        \MC{R}^\dagger \\ \MC{Q}^\dagger
    \end{pmatrix} =& \bm{a}^\dagger \begin{pmatrix}
        \mbbm{R}^\dagger \\ \mbbm{Q}^\dagger
    \end{pmatrix} \bm{a}, &\quad
    \begin{pmatrix}
        \mbbm{R}^\dagger_{\nu \nu'} \\ \mbbm{Q}^\dagger_{\nu \nu'}
    \end{pmatrix} =& \sum_{\alpha \beta} 
    \begin{pmatrix}
        \Tr[\bm{m}^0 \bm{\Gamma}^\dagger_{\nu} \bm{S}^\dagger_{\uparrow \alpha} \bm{\Gamma}_{\nu'} \bm{S}_{\uparrow \beta}] [(\bm{\varrho}^0)^{-1}]_{\uparrow \alpha; \uparrow \beta} + \Tr[\bm{m}^0 \bm{\Gamma}^\dagger_{\nu} \bm{S}_{\uparrow \alpha} \bm{\Gamma}_{\nu'} \bm{S}_{\downarrow \beta} [(\bm{\varrho}^0)^{-1}]]_{\downarrow \alpha; \uparrow \beta} \\
        \Tr[\bm{m}^0 \bm{\Gamma}^\dagger_{\nu} \bm{S}^\dagger_{\uparrow \alpha} \bm{\Gamma}_{\nu'} \bm{S}_{\uparrow \beta}] [(\bm{\varrho}^0)^{-1}]_{\uparrow \alpha; \downarrow \beta} + \Tr[\bm{m}^0 \bm{\Gamma}^\dagger_{\nu} \bm{S}_{\uparrow \alpha} \bm{\Gamma}_{\nu'} \bm{S}_{\downarrow \beta} [(\bm{\varrho}^0)^{-1}]]_{\downarrow \alpha; \downarrow \beta}
    \end{pmatrix} \\
    & & =& \begin{pmatrix}
        (\mbbm{K}^\dagger_{\uparrow \uparrow})_{\nu \nu'} & (\mbbm{K}^\dagger_{\downarrow \uparrow})_{\nu \nu'}
    \end{pmatrix} (\bm{\varrho}^0)^{-1}
    \label{def:HM_RQdgr} \\[1.5em] 
    E_{\text{atom}} =& N_k \bm{a}^\dagger \mbbm{H}_{\text{atom}} \bm{a}, &\quad \mbbm{H}_{\text{atom}} =& \Tr[\bm{\Gamma}^\dagger_{\nu} (\bm{H}_{\text{atom}})_{\nu \nu'} \bm{\Gamma}_{\nu'} \bm{m}^0] \label{def:HM_atom} \\[2em] 
    \Tr[\bm{\Lambda}^\dagger \bm{\Lambda} \bm{m}^0] =& \bm{a}^\dagger \mbbm{F} \bm{a}, &\quad \mbbm{F}_{\nu \nu'} =& \Tr[\bm{\Gamma}^\dagger_{\nu} \bm{\Gamma}_{\nu'} \bm{m}^0] \label{def:HM_F} \\[2em]
    \varrho^B_l =& \bm{a}^\dagger \mbbm{P}_l \bm{a}, &\quad (\mbbm{P}_l)_{\nu \nu'} =& \sum_{\alpha; \beta} (\mathcal{O}_{l})_{\uparrow \alpha; \uparrow \beta} \Tr[\bm{\Gamma}^\dagger_{\nu} \bm{\Gamma}_{\nu'} \bm{S}^\dagger_{\uparrow \alpha} \bm{S}_{\uparrow \beta} \bm{m}^0] + (\mathcal{O}_l)_{\uparrow \alpha; \downarrow \beta} \Tr[\bm{m}^0 \bm{S}^\dagger_{\uparrow \alpha} \bm{S}^\dagger_{\downarrow \beta} \bm{\Gamma}^\dagger_{\nu} \bm{\Gamma}_{\nu'}] \nonumber \\
    & & &+ (\mathcal{O}_l)_{\downarrow \alpha; \uparrow \beta} \Tr[\bm{\Gamma}^\dagger_{\nu} \bm{\Gamma}_{\nu'} \bm{S}_{\downarrow \alpha} \bm{S}_{\uparrow \beta} \bm{m}^0] + (\mathcal{O}_l)_{\downarrow \alpha; \downarrow \beta} \left(\delta_{\alpha \beta} - \Tr[\bm{\Gamma}^\dagger_{\nu} \bm{\Gamma}_{\nu'} \bm{S}^\dagger_{\uparrow \beta} \bm{S}_{\uparrow \alpha} \bm{m}^0] \right) \label{def:HM_P}
\end{alignat}

\subsection{Variational Gutzwiller Equations}
We have worked out the explicit form of $\mathcal{L}$ in the last section, now we can derive the variational equations by fixing $\bm{\varrho}^0$:
\begin{align}
    \encvert{\pfpx{\mathcal{L}}{\bra{\Phi_0}}}{\bm{\varrho}^0} = 0 \Rightarrow& \hat{H}^F \ket{\Phi_0} = E^F \ket{\Phi_0} \label{eq:Fermi}\\
    \encvert{\pfpx{\mathcal{L}}{\bra{\bm{a}}}}{\bm{\varrho}^0} = 0 \Rightarrow& \mbbm{H}^B \ket{\bm{a}} = E^B \mbbm{F} \ket{\bm{a}} \label{eq:Bose}
\end{align}
Both $\ket{\Phi}$ and $\ket{\bm{a}}$ are the ground state of the respective Hamiltonians. These two are co-dependent linear equations because $\hat{H}^F$ depends on $\bm{\varrho}^0$ and $\ket{\bm{a}}$ while $\mbbm{H}^B$ depends on $\bm{\varrho}^0$ and $\ket{\Phi_0}$, thus they need to be solved self-consistently. $\hat{H}^F$ can be derived referring to \eqref{def:varrho0_k}, \eqref{res:Ekin} and \eqref{eq:varrho_constraints}:
\begin{gather}
    \hat{H}^F = \sum_{\vb{k}} \bm{\Psi}^\dagger_{\vb{k}} \Bigg[ 
    \begin{pmatrix}
        \mathcal{R} & - \mathcal{Q}^* \\ \mathcal{Q} & \mathcal{R}^*
    \end{pmatrix}
    \begin{pmatrix}
        \mathcal{H}_{\vb{k}} & 0 \\ 0 & -\mathcal{H}^*_{-\vb{k}}
    \end{pmatrix}
    \begin{pmatrix}
        \mathcal{R}^\dagger & \mathcal{Q}^\dagger \\
        -\mathcal{Q}^\intercal & \mathcal{R}^\intercal
    \end{pmatrix} 
    + \sum_{l} \lambda^F_l \mathcal{O}^F_l \Bigg]
    \bm{\Psi}_{\vb{k}}, \label{def:HF} \\
    (\mathcal{O}^F_l)_{ff} = \mathcal{O}_l, \quad (\mathcal{O}^F_l)_{fc} = (\mathcal{O}^F_l)_{cf} = (\mathcal{O}^F_l)_{cc} = 0
\end{gather}
Similarly, $\mbbm{H}^B$ can be derived by referring to \eqref{res:Ekin} and Eqs.~\eqref{def:HM_RQ}--\eqref{def:HM_P}. First we calculate $\ket{\bm{a}}$ dependent part in $E_{\text{kin}}$ (connecting bath to the impurity):
\begin{align}
    \chi^\dagger_{\alpha \beta} \defeq& \encvert{\pfpx{E_{\text{kin}}}{\MC{R}_{\beta \alpha}}}{\ket{\Phi_0}, \bm{\varrho}^0} = \frac{1}{N_k}\sum_{\vb{k}} \Bigg\{ \left[ \begin{pmatrix}
            \mathcal{H}_{\vb{k}} & 0 \\ 0 & -\mathcal{H}^*_{-\vb{k}}
        \end{pmatrix}
        \begin{pmatrix}
            \mathcal{R}^\dagger & \mathcal{Q}^\dagger \\
            -\mathcal{Q}^\intercal & \mathcal{R}^\intercal
        \end{pmatrix} \bm{\varrho}^0_{\vb{k}} 
 \right]_{\uparrow \alpha; \uparrow \beta} + \left[\bm{\varrho}^0_{\vb{k}}  \begin{pmatrix}
        \mathcal{R} & - \mathcal{Q}^* \\ \mathcal{Q} & \mathcal{R}^*
    \end{pmatrix} \begin{pmatrix}
            \mathcal{H}_{\vb{k}} & 0 \\ 0 & -\mathcal{H}^*_{-\vb{k}}
        \end{pmatrix} \right]_{\downarrow \beta; \downarrow \alpha} \Bigg\}\\
    \chi_{\alpha \beta} \defeq& \encvert{\pfpx{E_{\text{kin}}}{\MC{R}^\dagger_{\beta \alpha}}}{\ket{\Phi_0}, \bm{\varrho}^0} = \frac{1}{N_k} \sum_{\vb{k}} \Bigg\{ \left[ \begin{pmatrix}
            \mathcal{H}_{\vb{k}} & 0 \\ 0 & -\mathcal{H}^*_{-\vb{k}}
        \end{pmatrix}
        \begin{pmatrix}
            \mathcal{R}^\dagger & \mathcal{Q}^\dagger \\
            -\mathcal{Q}^\intercal & \mathcal{R}^\intercal
        \end{pmatrix} \bm{\varrho}^0_{\vb{k}} 
 \right]_{\downarrow \beta; \downarrow \alpha} + \left[\bm{\varrho}^0_{\vb{k}}  \begin{pmatrix}
        \mathcal{R} & - \mathcal{Q}^* \\ \mathcal{Q} & \mathcal{R}^*
    \end{pmatrix} \begin{pmatrix}
            \mathcal{H}_{\vb{k}} & 0 \\ 0 & -\mathcal{H}^*_{-\vb{k}}
        \end{pmatrix} \right]_{\uparrow \alpha; \uparrow \beta} \Bigg\}\\
    \Upsilon^\dagger_{\alpha \beta} \defeq& \encvert{\pfpx{E_{\text{kin}}}{\MC{Q}_{\beta \alpha}}}{\ket{\Phi_0}, \bm{\varrho}^0} = \frac{1}{N_k} \sum_{\vb{k}} \Bigg\{ \left[ \begin{pmatrix}
            \mathcal{H}_{\vb{k}} & 0 \\ 0 & -\mathcal{H}^*_{-\vb{k}}
        \end{pmatrix}
        \begin{pmatrix}
            \mathcal{R}^\dagger & \mathcal{Q}^\dagger \\
            -\mathcal{Q}^\intercal & \mathcal{R}^\intercal
        \end{pmatrix} \bm{\varrho}^0_{\vb{k}} 
 \right]_{\uparrow \alpha; \downarrow \beta} + \left[\bm{\varrho}^0_{\vb{k}}  \begin{pmatrix}
        \mathcal{R} & - \mathcal{Q}^* \\ \mathcal{Q} & \mathcal{R}^*
    \end{pmatrix} \begin{pmatrix}
            \mathcal{H}_{\vb{k}} & 0 \\ 0 & -\mathcal{H}^*_{-\vb{k}}
        \end{pmatrix} \right]_{\downarrow \beta; \uparrow \alpha} \Bigg\} \\
    \Upsilon_{\alpha \beta} \defeq& \encvert{\pfpx{E_{\text{kin}}}{\MC{Q}^\dagger_{\beta \alpha}}}{\ket{\Phi_0}, \bm{\varrho}^0} = \frac{1}{N_k} \sum_{\vb{k}} \Bigg\{ \left[ \begin{pmatrix}
            \mathcal{H}_{\vb{k}} & 0 \\ 0 & -\mathcal{H}^*_{-\vb{k}}
        \end{pmatrix}
        \begin{pmatrix}
            \mathcal{R}^\dagger & \mathcal{Q}^\dagger \\
            -\mathcal{Q}^\intercal & \mathcal{R}^\intercal
        \end{pmatrix} \bm{\varrho}^0_{\vb{k}} 
 \right]_{\downarrow \beta; \uparrow \alpha} + \left[\bm{\varrho}^0_{\vb{k}}  \begin{pmatrix}
        \mathcal{R} & - \mathcal{Q}^* \\ \mathcal{Q} & \mathcal{R}^*
    \end{pmatrix} \begin{pmatrix}
            \mathcal{H}_{\vb{k}} & 0 \\ 0 & -\mathcal{H}^*_{-\vb{k}}
        \end{pmatrix} \right]_{\uparrow \alpha; \downarrow \beta} \Bigg\}
\end{align}
Then we can write down $\mbbm{H}^B$ explicitly: (The trace operates on orbital indices of $f$-orbital: $\alpha, \beta, \dots$)
\begin{gather}
    \mbbm{H}^B = \Tr(\chi^\dagger \mbbm{R} + \chi \mbbm{R}^\dagger + \Upsilon^\dagger \mbbm{Q} + \Upsilon \mbbm{Q}^\dagger) + \mbbm{H}_{\text{atom}} + \sum_{l} \lambda^B_l \mbbm{P}_l
\end{gather}
Notice that one can always symmetrize $\mbbm{H}^B \rightarrow \frac{1}{2} \left[\mbbm{H}^B + (\mbbm{H}^B)^\dagger\right]$ since its eigenvalue $E^B$ is real. We relabel all double font matrices to be their symmetrized ones with respect to many-body basis indices \eqref{def:many-body basis}: 
\begin{gather}
  (\mbbm{R}_{\alpha \beta})_{\nu \nu'} \rightarrow \frac{1}{2} \left[ (\mbbm{R}_{\alpha \beta})_{\nu \nu'} + (\mbbm{R}_{\alpha \beta})^*_{\nu' \nu}\right], \quad (\mbbm{Q}_{\alpha \beta})_{\nu \nu'} \rightarrow \frac{1}{2} \left[ (\mbbm{Q}_{\alpha \beta})_{\nu \nu'} + (\mbbm{Q}_{\alpha \beta})^*_{\nu' \nu}\right], \quad \dots \label{def:symm_double_font_matrix}
\end{gather}
Having these explicit expression, we can set out to solve the self-consistent equations Eqs.~\eqref{eq:Fermi}-\eqref{eq:Bose} while searching for the correct lagrange multipliers $\lambda^{F(B)}_l$ to satisfy the constraints \eqref{eq:varrho_constraints}=0. One usually start the SCF loop by taking an initial guess $\MC{R}_{\text{init}}$ and $\MC{Q}_{\text{init}}$ for Fermi part and finding $\bm{\lambda}^F$ \eqref{eq:Fermi}, then pass $\chi^{(\dagger)}$ and $\Upsilon^{(\dagger)}$ to Bose part \eqref{eq:Bose} to find $\bm{\lambda}^B$ which generates a new $\MC{R}$ and $\MC{Q}$ for the Fermi part. Once the process converges (usually the quasi-particle weight $Z = \MC{R}^\dagger \MC{R} + \MC{Q}^\dagger \MC{Q}$ is used as convergence indicator, i.e. when $\abs{Z_{n+1} - Z_{n}} < \varepsilon$), we get 
\begin{gather}
    \mathcal{L}_{\text{SCF}}[\mu, \bm{\varrho}^0] = \min_{\ket{\Phi_0}, \ket{\tilde{\bm{a}}}, \bm{\lambda}^F, \bm{\lambda}^B} \mathcal{L}[\mu, \bm{\varrho}^0, \ket{\Phi_0}, \ket{\tilde{\bm{a}}}, \bm{\lambda}^F, \bm{\lambda}^B].
\end{gather}
The next step is then find an optimal $\bm{\varrho}^0$ that minimizes $\mathcal{L}_{\text{SCF}}[\mu, \bm{\varrho}^0]$ by using gradient descent or other optimization algorithms. Therefore, the process can be significantly speed up by analytical derivative $\encvert{\dfdx{\mathcal{L}_{\text{SCF}}}{\bm{\varrho}^0}}{\mu}$ which we will derive in Appendix.~\ref{appdx:dLdrho0}. (The safest way to run the algorithm is starting from the uncorrelated limit (e.g. small Hubbard U) where the initial normal state guess can be made as: $\MC{R}=\bm{1}_{N_f \times N_f}$ and $\MC{Q} = \bm{0}_{N_f \times N_f}$, from which we can tune up the correlation strength `adiabatically' to find the solution for desired interactions.)

In the following subsection, we will show how to find $\lambda^{F(B)}$ using Newton's method, especially in how to calculate the Jacobian matrix $\mathcal{J}^{F(B)}_{l l'} = \pfpx{g^{F(B)}_l}{\lambda^{F(B)}_{l'}}$ which is the key step for Newton's method.

\subsubsection{Solve Fermi part: find $\lambda^F$ with analytical Jacobian matrix} \label{sec:FermiPart}
Newton or quasi-Newton method is efficient for finding roots of a non-linear, multi-dimensional vector function $\bm{g}^{F(B)} (\bm{\lambda}^{F(B)}) = 0$. The basic updating scheme is:
\begin{gather}
    \bm{\lambda}^{F(B)}_{n+1} = \bm{\lambda}^{F(B)}_{n} - \left[\mathcal{J}^{F(B)}(\bm{\lambda}^{F(B)}_{n})\right]^{-1} \bm{g}^{F(B)}(\bm{\lambda}^{F(B)}_{n}),
\end{gather}
which leads to the core step of calculating the Jacobian matrix $\mathcal{J}^{F(B)}(\bm{\lambda}^{F(B)})$. The co-dependent nature of $\hat{H}^F$ and $\mbbm{H}^B$ poses complication for defining $\mathcal{J}^{F(B)}$, and we shall deal with it in a `closed-box' manner. For example, when we solve $\bm{g}^F = 0$, only $\hat{H}^F$ is inside the box and both $\bm{\varrho}^0$ and $\mbbm{H}^B$ are outside the box, thus we can treat $\bm{\varrho}^0$ and $\ket{\bm{a}}$ as constants. Therefore, we can define $\mathcal{J}^{F(B)}$ without ambiguity and introduce an abbreviation called `partial partial derivative' \cite{PENG2022108348}: 
\begin{gather}
    \mathcal{J}^F_{l l'} = \encvert{\pfpx{g^F_l}{\lambda^F_{l'}}}{\bm{\varrho}^0, \ket{\bm{a}}} = \ppfpx{g^F_l}{\lambda^F_{l'}}, \quad \mathcal{J}^B_{l l'} = \encvert{\pfpx{g^B_l}{\lambda^B_{l'}}}{\bm{\varrho}^0, \ket{\Phi_0}} = \ppfpx{g^B_l}{\lambda^B_{l'}}
\end{gather}
For pratical implementation, we must obtain the analytical expression for $\mathcal{J}^{F(B)}$ for speed and accuracy. They're derived using second-order perturbation theory by treating $ \delta \bm{\lambda}^{F(B)}$ as a small perturbation to $\hat{H}^F$ and $\mbbm{H}^B$ respectively. We first solve $\mathcal{J}^{F}$ by giving a general expression (including the case with degeneracy) for derivative of a local observable $\hat{\mathcal{A}}_{\vb{R}}$ against a translational invariant parameter $\nu$ in $\hat{H}^F$. For convenience, we may rename the Nambu basis \eqref{def:Nambu_basis_k} to: $\bm{\psi}^\dagger_{\vb{k} \uparrow} \rightarrow \phi^\dagger_{\vb{k} \uparrow}$, $\bm{\psi}_{-\vb{k} \downarrow} \rightarrow \bm{\phi}^\dagger_{\vb{k} \downarrow}$ such that $\hat{H}^F$ becomes a normal charge-U(1) preserving Hamiltonian under the new basis: $\bm{\Phi}^\dagger_{\vb{k}} = \begin{pmatrix}\bm{\phi}^\dagger_{\vb{k} \uparrow} & \bm{\phi}^\dagger_{\vb{k} \downarrow}\end{pmatrix}$ and we can thus utilize the Dirac braket notation: 
\begin{gather}
    \hat{H}^F = \sum_{\vb{k} \xi \xi'} H^F_{\vb{k; \xi \xi'}} \ket{\vb{k} \xi} \bra{\vb{k} \xi'}, \quad \ket{\vb{k} \xi} \leftrightarrow (\bm{\Phi}^\dagger_{\vb{k}})_{\xi}, \\
    \text{The spectrum representation:} \quad \hat{H}^F = \sum_{\vb{k} n} \epsilon_{\vb{k} n} \ket{\vb{k} n} \bra{\vb{k} n}
\end{gather}

The expectation value for $\hat{\mathcal{A}}_{\vb{R}}$ is:
\begin{align}
\expval{\hat{\mathcal{A}}_{\vb{R}}}_0 =& \frac{1}{N} \sum_{\vb{k} n} f_{\vb{k} n} \bra{\vb{k} n} \hat{\mathcal{A}}_{\vb{k}} \ket{\vb{k} n} \\
=&  \frac{1}{N} \sum_{\vb{k} n} \sum_{\alpha \beta} f_{\vb{k} n} \bra{\vb{k} n} \ket{\vb{k} \alpha} \bra{\vb{k} \alpha} \hat{\mathcal{A}}_{\vb{k}} \ket{\vb{k} \beta} \bra{\vb{k} \beta} \ket{\vb{k} n} \\
=& \frac{1}{N} \sum_{\vb{k}} \sum_{\alpha \beta} \mathcal{A}_{\vb{k}; \alpha \beta} \bra{\vb{k} \beta} \left( \sum_n f_{\vb{k} n} \ket{\vb{k} n} \bra{\vb{k} n} \right) \ket{\vb{k} \alpha} \\
f_{\vb{k} n} =& \frac{1}{e^{\beta \epsilon_{\vb{k} n}} + 1}
\end{align}
Where $f_{\vb{k} n}$ is the Fermi-Dirac distribution with Fermi level set to 0 as required by the BCS ground state $\ket{\Phi_0}$ and $1/\beta$ is the smearing temperature for numerical stability. Its derivative reduces to calculate:
\begin{gather}
\partial_{\nu} \left( \sum_n f_{\vb{k} n} \ket{\vb{k} n} \bra{\vb{k} n} \right)
\end{gather}
For non-degenerate $\{ \epsilon_{\vb{k} n} \}$, the above expression is calculated as:
\begin{gather}
\partial_{\nu} \left( \sum_n f_{\vb{k} n} \ket{\vb{k} n} \bra{\vb{k} n} \right) = \sum_{mn} \mathcal{F}_{\vb{k}; mn} \bra{\vb{k} m} \partial_{\nu} \hat{H}_{\vb{k}} \ket{\vb{k} n} \\
\mathcal{F}_{\vb{k}; mn} \defeq \delta_{mn} \dfdx{f_{\vb{k} n}}{\epsilon_{\vb{k} n}} + (1 - \delta_{mn}) \frac{f_{\vb{k} n} - f_{\vb{k} m}}{\epsilon_{\vb{k} n} - \epsilon_{\vb{k} m}}
\end{gather}
Suppose there's only one degeneracy subspace $D$ with energy $\epsilon_{\vb{k}D}$ and eigenvectors $\{ \ket{\vb{k}m^{(0)}} \}$:
\begin{gather}
\sum_n f_{\vb{k} n} \ket{\vb{k} n} \bra{\vb{k} n} =  f_{\vb{k} D} \left(\sum_{m \in D} \ket{\vb{k} m^{(0)}} \bra{\vb{k} m^{(0)}} \right) + \sum_{n \notin D}  f_{\vb{k} n} \ket{\vb{k} n} \bra{\vb{k} n} \label{ketfknbra}
\end{gather}
We focus on the derivative on the first term. There's a gauge degree of freedom on the degenerate subspace. If we consider a perturbation $\lambda \partial_{\nu} \hat{H}_{\vb{k}}$ where $\lambda$ is a real small number, the perturbed wavefunction of $\{ \ket{\vb{k} m^{(0)}} \}$ is given by degenerate perturbation theory and it'll choose a specific gauge denoted by $\{ \ket{\vb{k} l^{(0)}} \}$ which diagonalises $\partial_{\nu} \hat{H}_{\vb{k}}$:
\begin{gather}
\sum_{m \in D} \ket{\vb{k} m^{(0)}} \bra{\vb{k} m^{(0)}} = \sum_{l \in D} \ket{\vb{k} l^{(0)}} \bra{\vb{k} l^{(0)}} \\
\partial_{\nu} \hat{H}_{\vb{k}} \ket{\vb{k} l^{(0)}} = \nu_{\vb{k} l} \ket{\vb{k} l^{(0)}} 
\end{gather}
The perturbed wavefunction are denoted as $\{ \ket{\vb{k} l} \}$:
\begin{gather}
\left(\hat{H}_{\vb{k}} + \lambda \partial_{\nu} \hat{H}_{\vb{k}} \right) \ket{\vb{k} l} = E_{\vb{k}} (\lambda) \ket{\vb{k} l} \\
\ket{\vb{k} l} \xrightarrow{\lambda \rightarrow 0} \ket{\vb{k} l^{(0)}}
\end{gather}
Therefore, a legit definition of derivative for degenerate subspace fixes our gauge to be $\{ \ket{\vb{k} l^{(0)}} \}$
\begin{gather}
\partial_{\nu} \ket{\vb{k} l^{(0)}} \defeq \partial_{\lambda} \ket{\vb{k} l} \vert_{\lambda = 0} = \lim_{\lambda \rightarrow 0} \frac{\ket{\vb{k} l} - \ket{\vb{k} l^{(0)}}}{\lambda} = \lim_{\lambda \rightarrow 0} \frac{\ket{\vb{k} l^{(1)}}}{\lambda}\\
\partial_{\nu} \epsilon_{\vb{k} l} \defeq \partial_{\lambda} \epsilon_{\vb{k} l} \vert_{\lambda = 0} = \bra{\vb{k} l^{(0)}} \partial_{\nu} \hat{H}_{\vb{k}} \ket{\vb{k} l^{(0)}} = \nu_{\vb{k} l}
\end{gather}
We define the projector to degenerate subspace as $\hat{\mathcal{P}}_0$
\begin{gather}
\hat{\mathcal{P}}_0 = \sum_{l \in D} \ket{\vb{k} l^{(0)}} \bra{\vb{k} l^{(0)}} \\
\hat{\mathcal{P}}_1 = \hat{\bm{1}} - \hat{\mathcal{P}}_0
\end{gather}
The degenerate perturbation theory tells us:
\begin{gather}
\hat{\mathcal{P}}_0 \ket{\vb{k} l_i^{(1)}} = \lambda \sum_{j \neq i} \frac{\hat{\mathcal{P}}_0 \ket{\vb{k} l^{(0)}_j}}{\nu_{\vb{k} l_i} - \nu_{\vb{k} l_j}} \bra{\vb{k} l^{(0)}_j} \hat{\mathcal{V}} \ket{\vb{k} l^{(0)}_i} \\
\hat{\mathcal{V}} = \left( \partial_{\nu} \hat{H}_{\vb{k}}  \right) \hat{\mathcal{P}}_1 \frac{1}{E^{(0)}_{\vb{k} D} - \hat{H}_{\vb{k}}} \hat{\mathcal{P}}_1 \left( \partial_{\nu} \hat{H}_{\vb{k}}  \right) \\
\hat{\mathcal{P}}_1 \ket{\vb{k} l_i^{(1)}} = \lambda \sum_{n \notin D} \frac{\ket{\vb{k} n} \bra{\vb{k} n} \partial_{\nu} \hat{H}_{\vb{k}} \ket{\vb{k} l^{(0)}_i}}{E^{(0)}_{\vb{k} D} - E^{(0)}_{\vb{k} n}}
\end{gather}
Where $l_i, l_j, \dots$ labels different eignvectors in the degenerate subspace $D$. The derivative of first term in \eqref{ketfknbra} is:
\begin{align}
\partial_{\nu} \left( f_{\vb{k} D} \sum_{l_i \in D} \ket{\vb{k} l^{(0)}_i} \bra{\vb{k} l^{(0)}_i} \right) =& \dfdx{f_{\vb{k} D}}{\epsilon_{\vb{k} D}} \sum_{l_i \in D} \nu_{\vb{k} l_i} \ket{\vb{k} l^{(0)}_i} \bra{\vb{k} l^{(0)}_i} \nonumber \\
&+ f_{\vb{k} D}  \left[ \sum_{l_i} \left( \partial_{\nu} \ket{\vb{k} l^{(0)}_i} \right) \bra{\vb{k} l^{(0)}_i} + \ket{\vb{k} l^{(0)}_i} \left( \partial_{\nu} \bra{\vb{k} l^{(0)}_i} \right) \right] \label{eq:drvtDeg0}
\end{align}
We focus on the second term grouped by $[\  ]$ can be decomposed by $\hat{P}_0$ and $\hat{P}_1$:
\begin{align}
[*] =&  \sum_{l_i} \hat{P}_0 \left( \partial_{\nu} \ket{\vb{k} l^{(0)}_i} \right) \bra{\vb{k} l^{(0)}_i} + \ket{\vb{k} l^{(0)}_i} \left( \partial_{\nu} \bra{\vb{k} l^{(0)}_i} \right) \hat{P}_0 \nonumber \\
&+ \hat{P}_1 \left( \partial_{\nu} \ket{\vb{k} l^{(0)}_i} \right) \bra{\vb{k} l^{(0)}_i} + \ket{\vb{k} l^{(0)}_i} \left( \partial_{\nu} \bra{\vb{k} l^{(0)}_i} \right) \hat{P}_1
\end{align}
Where the terms from the first line can be proven to be 0:
\begin{align}
\hat{P}_0 [*] \hat{P}_0 =& \sum_{i \neq j \in D} \ket{\vb{k} l^{(0)}_j} \frac{\bra{\vb{k} l^{(0)}_j } \hat{\mathcal{V}} \ket{\vb{k} l^{(0)}_i }}{\nu_{\vb{k} l_i} - \nu_{\vb{k} l_j}} \bra{\vb{k} l^{(0)}_i} + \ket{\vb{k} l^{(0)}_i} \frac{\bra{\vb{k} l^{(0)}_i } \hat{\mathcal{V}} \ket{\vb{k} l^{(0)}_j }}{\nu_{\vb{k} l_i} - \nu_{\vb{k} l_j}} \bra{\vb{k} l^{(0)}_j} \nonumber \\
=& \sum_{i \neq j \in D} \ket{\vb{k} l^{(0)}_j} \frac{\bra{\vb{k} l^{(0)}_j } \hat{\mathcal{V}} \ket{\vb{k} l^{(0)}_i }}{\nu_{\vb{k} l_i} - \nu_{\vb{k} l_j}} \bra{\vb{k} l^{(0)}_i} - \ket{\vb{k} l^{(0)}_j} \frac{\bra{\vb{k} l^{(0)}_j } \hat{\mathcal{V}} \ket{\vb{k} l^{(0)}_i }}{\nu_{\vb{k} l_i} - \nu_{\vb{k} l_j}} \bra{\vb{k} l^{(0)}_i} \nonumber \\
=& 0
\end{align}
Thus, \eqref{eq:drvtDeg0} can be simplified to:
\begin{align}
    \partial_{\nu} \left( f_{\vb{k} D} \sum_{l_i \in D} \ket{\vb{k} l^{(0)}_i} \bra{\vb{k} l^{(0)}_i} \right) =& \dfdx{f_{\vb{k} D}}{\epsilon_{\vb{k} D}} \sum_{l_i \in D} \nu_{\vb{k} l_i} \ket{\vb{k} l^{(0)}_i} \bra{\vb{k} l^{(0)}_i} \nonumber \\
    &+ f_{\vb{k} D} \sum_{l_i \in D} \sum_{n \notin D} \left[ \frac{\ket{\vb{k} n} \bra{\vb{k} n} \partial_{\nu} \hat{H}_{\vb{k}} \ket{\vb{k} l^{(0)}_i} \bra{\vb{k} l^{(0)}_i}}{\epsilon_{\vb{k} D} - \epsilon_{\vb{k} n}} + h.c. \right]
\end{align}
For a general set of levels with multiple degeneracy subsets, the above results can be straightforwardly extended as:
\begin{align}
    \bra{\vb{k} l^i_{D_n}} \partial_{\nu} \left( \sum_{p \in \text{all levels}} f_{\vb{k} p} \ket{\vb{k} p} \bra{\vb{k} p} \right) \ket{\vb{k} l^j_{D_m}} =& (1 - \delta_{mn})\left(\frac{f_{\vb{k} D_n} - f_{\vb{k} D_m}}{\epsilon_{\vb{k} D_n} - \epsilon_{\vb{k} D_m}} \right) \bra{\vb{k} l^i_{D_n}} \partial_{\nu} \hat{H}_{\vb{k}} \ket{\vb{k} l^j_{D_m}} + \delta_{mn} \delta_{ij} \dfdx{f_{\vb{k} D_n}}{\epsilon_{\vb{k} D_n}} \nu_{\vb{k} l^i_{D_n}} \label{formula:Drvt_2ndPert}
\end{align}
Where $D_n, D_m, \dots$ are the degenerate subspaces and $|l^i_{D_n}\rangle, |l^j_{D_n}\rangle, \dots$ are the eigenvectors within $D_n$ subspace. The degeneracy gauge is fixed by diagonalizing $\partial_{\nu} \hat{H}_{\vb{k}}$ within each $D_n$ subspace:
\begin{gather}
    \partial_{\nu} \hat{H}_{\vb{k}} \ket{l^i_{D_n}} = \nu_{\vb{k} l^i_{D_n}} \ket{l^i_{D_n}}
\end{gather}
The derivative formula is:
\begin{gather}
    \partial_{\nu} \expval{\hat{\mathcal{A}}_{\vb{R}}}_0 = \frac{1}{N_k} \sum_{\vb{k}} \sum_{\substack{n \\ i \in D_n}} \sum_{\substack{m \\ j \in D_m}} \bra{\vb{k} l^j_{D_m}}\hat{\mathcal{A}}_{\vb{k}} \ket{\vb{k} l^i_{D_n}} \bra{\vb{k} l^i_{D_n}} \partial_{\nu} \left( \sum_{p\in \text{all levels}} f_{\vb{k} } \ket{\vb{k} p} \bra{\vb{k} p} \right) \ket{\vb{k} l^j_{D_m}}
\end{gather}
The $\mathcal{J}^F_{l l'}$ is just a special case of the above formula by replacing $\partial_{\nu} \hat{H}_{\vb{k}}$ with $\hat{\varrho}^F_{l'}$ \eqref{def:operator_rhoF} and 
$\hat{\mathcal{A}}_{\vb{R}}$ with $\hat{\varrho}^F_{l}$.

\subsubsection{Solve Bose part: find $\lambda^B$ with analytical Jacobian matrix}
First question we need to address is whether $\mbbm{F}$ is positive definite for the generalised eigenvalue equation \eqref{eq:Bose} which would guarantee the solution $E^B$ to be real. We now prove that $\mbbm{F}$ is indeed positive definite as a result of positive definiteness of $\bm{m}^0$.

By Schor's lemma, both $\bm{\Lambda}$ and $\bm{m}^0$ can be brought into same block diagonal form (the number of blocks equals number of irreps of the symmetry group satisfied by the system):
\begin{gather}
\begin{pmatrix}
p^V_1 & \dots & 0 \\
\vdots & \ddots & \vdots \\
0 & \dots & p^V_r
\end{pmatrix}
\end{gather}
Where each block $p^V_k$ consists of its multiplicity blocks:
\begin{gather}
p^V_k = \begin{pmatrix}
r_{11} \mbbm{1}_{d_k} & \dots & r_{1n_k} \mbbm{1}_{d_k}\\
\vdots & \ddots & \vdots \\
r_{n_k1} \mbbm{1}_{d_k} & \dots & r_{n_k n_k} \mbbm{1}_{d_k}
\end{pmatrix}
\end{gather}
$n_k$ is the multiplicity for \textit{k-irrep} and $d_k$ denotes its dimension. We can further define $p^V_{k; ij}$ equals $p^V_k$ with $r_{mn} = \delta_{mi}\delta_{nj}$. Then we can construct our many-body basis $ \bm{\Lambda}_{k; ij}$ as:
\begin{gather}
\bm{\Lambda}_{k; ij} = \frac{1}{\sqrt{d_k}} \begin{pmatrix}
0 & \dots & 0 & \dots & 0\\
\vdots & \ddots & \vdots & & \vdots \\
0 & \dots & p^V_{k; ij} & \dots & 0 \\
\vdots & & \vdots & \ddots & \vdots \\
0 & \dots & 0 & \dots & 0
\end{pmatrix} , \quad (\bm{\Lambda}_{k; ij})_{I I'} = \frac{1}{\sqrt{d_k}} \sum_{t}^{d_k} \delta_{I i_t} \delta_{I' j_t} 
 \\
(\bm{\Lambda}^\intercal_{k; ij})_{I I'} = (\bm{\Lambda}_{k; ij})_{I' I} = \frac{1}{\sqrt{d_k}} \sum_{t}^{d_k} \delta_{I' i_t} \delta_{I j_t} = (\bm{\Lambda}_{k; ji})_{I I'}
\end{gather}
Now we can evaluate elements of $\mbbm{F}_{(k';i'j'), (k;i j)}$:
\begin{gather}
\Tr(\bm{\Lambda}^\intercal_{k'; i'j'} \bm{\Lambda}_{k; ij} \bm{m}^0) = \delta_{k' k} \delta_{i' i} \frac{1}{\sqrt{d_k}} \Tr(\bm{\Lambda}_{k; j' j} \bm{m}^0) = m^0_{k; j'j} \\
\bm{m}^0_k = \begin{pmatrix}
m^0_{k; 11} \mbbm{1}_{d_k} & \dots & m^0_{k; 1n_k} \mbbm{1}_{d_k}\\
\vdots & \ddots & \vdots \\
m^0_{k; n_k1} \mbbm{1}_{d_k} & \dots & m^0_{k; n_k n_k} \mbbm{1}_{d_k}
\end{pmatrix}
\end{gather}
Where we can define a reduced \textit{k-irrep} block of $\bm{m}^0$:
\begin{gather}
\overline{\bm{m}}^0_k = \begin{pmatrix}
m^0_{k; 11} & \dots & m^0_{k; 1n_k}\\
\vdots & \ddots & \vdots \\
m^0_{k; n_k1} & \dots & m^0_{k; n_k n_k}
\end{pmatrix}
\end{gather}
Notice that $i$ can run through $1$ to $n_k$ in the last equation. Therefore,  the \textit{k-irrep} block of $\mbbm{F}$: $\mbbm{F}_k$ is just (upon reindexing $\bm{\Lambda}$) made of $n_k$ copies of $\bar{\bm{m}}^0_{k}$ within each irrep block:
\begin{gather}
\mbbm{F}_k = \begin{pmatrix}
\overline{\bm{m}}^0_k & \dots & 0 \\
\vdots & \ddots & \vdots \\
0 & \dots & \overline{\bm{m}}^0_k
\end{pmatrix}
\end{gather}
Which is positive definite since $\bm{m}^0$ is positive definite. \\

Now we can rewrite \eqref{eq:Bose} into an ordinary eigenvalue equation and define the transformed double-font matrices:
\begin{gather}
    \tilde{\mbbm{H}}^B \ket{\tilde{\bm{a}}} = E^B_g \ket{\tilde{\bm{a}}}, \quad \tilde{\mbbm{H}}^B \defeq \mbbm{F}^{-1/2} \mbbm{H}^B \mbbm{F}^{-1/2}, \quad \ket{\tilde{\bm{a}}} \defeq \mbbm{F}^{1/2} \ket{\bm{a}} \\
    (\tilde{\mbbm{R}}_{\alpha \beta})_{\nu \nu'} \defeq \sum_{\nu_1 \nu_2}\mbbm{F}^{-1/2}_{\nu \nu_1} (\mbbm{R}_{\alpha \beta})_{\nu_1 \nu_2} \mbbm{F}^{-1/2}_{\nu_2 \nu'}, \quad (\tilde{\mbbm{Q}}_{\alpha \beta})_{\nu \nu'} \defeq \sum_{\nu_1 \nu_2}\mbbm{F}^{-1/2}_{\nu \nu_1} (\mbbm{Q}_{\alpha \beta})_{\nu_1 \nu_2} \mbbm{F}^{-1/2}_{\nu_2 \nu'}, \quad \dots \label{def:tilde_dbl_matrix}
\end{gather}
Therefore, $\mathcal{J}^B_{ll'}$ can be solved by a simpler recipe of $\mathcal{J}^F_{ll'}$, the differences are: 1. no $\vb{k}$ summation; 2. local observables are evaluated with a single state $\ket{\tilde{\bm{a}}}$ instead of different energy levels filled up to Fermi energy. We write down the partial partial derivative for a general local observable $\hat{\mathcal{A}}_{\vb{R}}$ with respect to a translational invariant parameter $\gamma$:
\begin{gather}
    \expval{\hat{\mathcal{A}}_{\vb{R}}}_G = \bra{\bm{a}} \mbbm{A} \ket{\bm{a}} = \bra{\tilde{\bm{a}}} \tilde{\mbbm{A}} \ket{\tilde{\bm{a}}}, \quad \mbbm{A}_{\nu \nu'} \defeq \Tr[\bm{\Lambda}^\dagger_{\nu} \bm{A} \bm{\Lambda}_{\nu'} \bm{m}^0], \quad \tilde{\mbbm{A}} \defeq \mbbm{F}^{-1/2} \mbbm{A} \mbbm{F}^{-1/2} \\
    \partial_{\gamma} \expval{\hat{\mathcal{A}}_{\vb{R}}}_G = \sum_{\substack{e \\ (E^B_e > E^B_g)}}\frac{1}{E^B_g - E^B_e} \left(\bra{\tilde{\bm{a}}} \partial_{\gamma} \tilde{\mbbm{H}}^B \ket{\tilde{\bm{e}}}\bra{\tilde{\bm{e}}} \tilde{\mbbm{A}} \ket{\tilde{\bm{a}}} + h.c.\right) \label{eq:ppd_bose}
\end{gather}
In our experience, the ground state of $\tilde{\mbbm{H}}^B$ is generically non-degenerate, and any degeneracy that does arise can be attributed to unphysical parameter choices — though we currently lack a formal proof of this claim. In $\mathcal{J}^B_{ll'}$, the term $\partial_{\gamma} \tilde{\mbbm{H}}^B$ is replaced by $\mbbm{P}_l$ and $\mbbm{A}$ is replaced by $\mbbm{P}_{l'}$.

\section{Analytical Derivatives for $\mathcal{L}$} \label{appdx:dLdrho0}
The derivation of the section follows \cite{PENG2022108348}. Without loss of generality, we assume $\bm{\mathcal{R}}, \bm{\mathcal{Q}}, \bm{\chi}, \bm{\Upsilon}$ are real for simplicity as it is the case in our TBG study. As we briefly discussed in Appendix A, the partial partial derivatives (PPDs: $\ppfpx{}{}$) are defined by fixing $(\bm{\varrho}^0, \ket{\Phi_0})$ or $(\bm{\varrho}^0, \ket{\tilde{\bm{a}}})$ for variables in Fermi part \eqref{eq:Fermi} or Bose part \eqref{eq:Bose}:
\begin{align}
    \text{Fermi:} \quad& E_{\text{kin}}, \ \bm{\chi}, \ \bm{\Upsilon}, \ \bm{\lambda}^F \\
    \text{Bose:} \quad& E_{\text{atom}}, \ \bm{\mathcal{R}}, \ \bm{\mathcal{Q}}, \ \bm{\lambda}^B
\end{align}
For example, $\ppfpx{E_{\text{kin}}}{\bm{\lambda}^F} = \encvert{\pfpx{E_{\text{kin}}}{\bm{\lambda}^F}}{ \bm{\varrho}^0, \ket{\Phi_0}} = \encvert{\pfpx{E_{\text{kin}}}{\bm{\lambda}^F}}{ \bm{\varrho}^0, \bm{\mathcal{R}}, \bm{\mathcal{Q}}}$, where the last equality is because $\hat{H}^F$ is determined by $\bm{\mathcal{R}}$, $\bm{\mathcal{Q}}$ and $\bm{\varrho}^0$. In principle every PPDs can be calculated using \eqref{eq:ppd_bose} and \eqref{eq:ppd_bose}.  When Eq.~\eqref{eq:Fermi} and \eqref{eq:Bose} are solved individually, $\bm{\lambda}^F$ and $\bm{\lambda}^B$ are then determined. We can thus define variables at at \textbf{partial derivatives}(PDs: $\pfpx{}{}$) level as:
\begin{gather}
    \bm{\varrho}^0\ , \ \bm{\mathcal{R}}\ , \ \bm{\mathcal{Q}}\ , \ \bm{\chi}\ , \ \bm{\Upsilon}
\end{gather}
The total derivative (TD: $\dfdx{}{}$) of the energy functional then reads:
\begin{align}
    \dfdx{\mathcal{L}_G}{\bm{\varrho}^0} =& \pfpx{E_{\text{kin}}}{\bm{\varrho}^0} + \pfpx{E_{\text{kin}}}{\bm{\mathcal{R}}} \dfdx{\bm{\mathcal{R}}}{\bm{\varrho}^0} + \pfpx{E_{\text{kin}}}{\bm{\mathcal{Q}}} \dfdx{\bm{\mathcal{Q}}}{\bm{\varrho}^0} \\
        &+ \pfpx{E_{\text{atom}}}{\bm{\varrho}^0} + \pfpx{E_{\text{atom}}}{\bm{\chi}} \dfdx{\bm{\chi}}{\bm{\varrho}^0} + \pfpx{E_{\text{atom}}}{\bm{\Upsilon}} \dfdx{\bm{\Upsilon}}{\bm{\varrho}^0} 
\end{align}
\begin{itemize}
    \item{$\pfpx{E_{\text{kin}}}{\bm{\varrho}^0}$}: 
    \begin{align}
        E_{\text{kin}} =& \frac{1}{N_k} \sum_{\vb{k}} \Tr[\begin{pmatrix}
            \bm{a}_{\vb{k}} & \bm{b}_{\vb{k}} \\ \bm{b}_{\vb{k}} & -\bm{a}_{\vb{k}}
        \end{pmatrix} \bm{\varrho}^F_{\vb{k}}], \quad 
        \begin{pmatrix}
            \bm{a}_{\vb{k}} & \bm{b}_{\vb{k}} \\ \bm{b}_{\vb{k}} & -\bm{a}_{\vb{k}}
        \end{pmatrix} \defeq       
        \begin{pmatrix}
            \mathcal{R} & - \mathcal{Q}^* \\ \mathcal{Q} & \mathcal{R}^*
        \end{pmatrix} 
        \begin{pmatrix}
            \mathcal{H}_{\vb{k}} & 0 \\ 0 & -\mathcal{H}^*_{-\vb{k}}
        \end{pmatrix}
        \begin{pmatrix}
            \mathcal{R}^\dagger & \mathcal{Q}^\dagger \\
            -\mathcal{Q}^\intercal & \mathcal{R}^\intercal
        \end{pmatrix}  \\
        \pfpx{E_{\text{kin}}}{\bm{\varrho}^0} =& \ppfpx{E_{\text{kin}}}{\lambda^F} \pfpx{\lambda^F}{\bm{\varrho}^0}\\
        =& \ppfpx{E_{\text{kin}}}{\lambda^F} \left( \ppfpx{\bm{\varrho}^F}{\lambda^F} \right)^{-1} \\
        \ppfpx{E_{\text{kin}}}{\lambda^F} =& \frac{1}{N_k} \sum_{\vb{k}} \Tr[\begin{pmatrix}
            \bm{a}_{\vb{k}} & \bm{b}_{\vb{k}} \\ \bm{b}_{\vb{k}} & -\bm{a}_{\vb{k}}
        \end{pmatrix} \left(\ppfpx{\bm{\varrho}^F_{\vb{k}}}{\lambda^F} \right)] 
    \end{align}
    \item {$\pfpx{E_{\text{kin}}}{\bm{\mathcal{R}}}$}: 
    \begin{align}
        \pfpx{E_{\text{kin}}}{\bm{\mathcal{R}}} =& \frac{1}{N_k} \sum_{\vb{k}} \Tr[\pfpx{}{\bm{\mathcal{R}}} \begin{pmatrix}
            \bm{a}_{\vb{k}} & \bm{b}_{\vb{k}} \\ \bm{b}_{\vb{k}} & -\bm{a}_{\vb{k}}
        \end{pmatrix} \bm{\varrho}^F_{\vb{k}}] + \Tr[ \begin{pmatrix}
            \bm{a}_{\vb{k}} & \bm{b}_{\vb{k}} \\ \bm{b}_{\vb{k}} & -\bm{a}_{\vb{k}}
        \end{pmatrix} \pfpx{\bm{\varrho}^F_{\vb{k}}}{\bm{\mathcal{R}}}] \\
        =& \frac{1}{N_k} \sum_{\vb{k}} \Tr[\ppfpx{}{\bm{\mathcal{R}}} \begin{pmatrix}
            \bm{a}_{\vb{k}} & \bm{b}_{\vb{k}} \\ \bm{b}_{\vb{k}} & -\bm{a}_{\vb{k}}
        \end{pmatrix} \bm{\varrho}^F_{\vb{k}}] + \Tr[ \begin{pmatrix}
            \bm{a}_{\vb{k}} & \bm{b}_{\vb{k}} \\ \bm{b}_{\vb{k}} & -\bm{a}_{\vb{k}}
        \end{pmatrix} \pfpx{\bm{\varrho}^F_{\vb{k}}}{\bm{\mathcal{R}}}] \\
        =& \bm{\chi} + \frac{1}{N_k} \sum_{\vb{k}} \Tr[ \begin{pmatrix}
            \bm{a}_{\vb{k}} & \bm{b}_{\vb{k}} \\ \bm{b}_{\vb{k}} & -\bm{a}_{\vb{k}}
        \end{pmatrix} \left(\ppfpx{\bm{\varrho}^F_{\vb{k}}}{\bm{\mathcal{R}}} + \ppfpx{\bm{\varrho}^F_{\vb{k}}}{\bm{\eta}} \pfpx{\bm{\eta}}{\bm{\mathcal{R}}} + \ppfpx{\bm{\varrho}^F_{\vb{k}}}{\bm{\xi}}\pfpx{\bm{\xi}}{\bm{\mathcal{R}}} \right)] \\
        \ppfpx{\bm{\varrho_{k}}^F}{\bm{\mathcal{R}}} =& \Tr[ \hat{\bm{\varrho}}^F_{\vb{k}}\ppfpx{}{\bm{\mathcal{R}}}\left(\sum_{p \in \text{all levels}} f_{\vb{k} p} \ket{\vb{k} p}\bra{\vb{k} p}\right)]
    \end{align}
    \item{$\pfpx{E_{\text{kin}}}{\bm{\mathcal{Q}}}$}: The calculation is similar to $\pfpx{E_{\text{kin}}}{\bm{\mathcal{R}}}$, but the result is:
    \begin{align}
        \pfpx{E_{\text{kin}}}{\bm{\mathcal{Q}}}=& \bm{\Upsilon} + \frac{1}{N_k} \sum_{\vb{k}} \Tr[ \begin{pmatrix}
                    \bm{a}_{\vb{k}} & \bm{b}_{\vb{k}} \\ \bm{b}_{\vb{k}} & -\bm{a}_{\vb{k}}
                \end{pmatrix} \left(\ppfpx{\bm{\varrho}^F_{\vb{k}}}{\bm{\mathcal{Q}}} + \ppfpx{\bm{\varrho}^F_{\vb{k}}}{\bm{\eta}} \pfpx{\bm{\eta}}{\bm{\mathcal{Q}}} + \ppfpx{\bm{\varrho}^F_{\vb{k}}}{\bm{\xi}}\pfpx{\bm{\xi}}{\bm{\mathcal{Q}}} \right)]
    \end{align}
    \item{$\dfdx{\bm{\mathcal{R}}}{\bm{\varrho}^0}$}: This step uses the Gutzwiller SCF condition $\delta \bm{\mathcal{R}}_{n+1} = \delta \bm{\mathcal{R}}_{n}$:
    \begin{align}
        \delta \bm{\mathcal{R}} =& \pfpx{\bm{\mathcal{R}}}{\bm{\chi}} \delta \bm{\chi} + \pfpx{\bm{\mathcal{R}}}{\bm{\Upsilon}} \delta \bm{\Upsilon} + \pfpx{\bm{\mathcal{R}}}{\bm{\varrho}^0} \delta \bm{\varrho}^0 \\
        \delta \bm{\chi} =& \pfpx{\bm{\chi}}{\bm{\mathcal{R}}} \delta \bm{\mathcal{R}} + \pfpx{\bm{\chi}}{\bm{\mathcal{Q}}} \delta \bm{\mathcal{Q}} + \pfpx{\bm{\chi}}{\bm{\varrho}^0} \delta \bm{\varrho}^0 \\
        \delta \bm{\mathcal{Q}} =& \pfpx{\bm{\mathcal{Q}}}{\bm{\chi}} \delta \bm{\chi} + \pfpx{\bm{\mathcal{Q}}}{\bm{\Upsilon}} \delta \bm{\Upsilon} + \pfpx{\bm{\mathcal{Q}}}{\bm{\varrho}^0} \delta \bm{\varrho}^0 \\
        \delta \bm{\Upsilon} =& \pfpx{\bm{\Upsilon}}{\bm{\mathcal{R}}} \delta \bm{\mathcal{R}} + \pfpx{\bm{\Upsilon}}{\bm{\mathcal{Q}}} \delta \bm{\mathcal{Q}} + \pfpx{\bm{\Upsilon}}{\bm{\varrho}^0} \delta \bm{\varrho}^0
    \end{align}
    This can be written as set of linear equations:
    \begin{gather}
        \left[\begin{pmatrix} \bm{1} & \\ & \bm{1} \end{pmatrix} - \begin{pmatrix}
        \pfpx{\bm{\mathcal{R}}}{\bm{\chi}} \pfpx{\bm{\chi}}{\bm{\mathcal{R}}} + \pfpx{\bm{\mathcal{R}}}{\bm{\Upsilon}} \pfpx{\bm{\Upsilon}}{\bm{\mathcal{R}}} & \pfpx{\bm{\mathcal{R}}}{\bm{\chi}} \pfpx{\bm{\chi}}{\bm{\mathcal{Q}}} + \pfpx{\bm{\mathcal{R}}}{\bm{\Upsilon}} \pfpx{\bm{\Upsilon}}{\bm{\mathcal{Q}}} \\
        \pfpx{\bm{\mathcal{Q}}}{\bm{\chi}} \pfpx{\bm{\chi}}{\bm{\mathcal{R}}} + \pfpx{\bm{\mathcal{Q}}}{\bm{\Upsilon}} \pfpx{\bm{\Upsilon}}{\bm{\mathcal{R}}} & \pfpx{\bm{\mathcal{Q}}}{\bm{\chi}} \pfpx{\bm{\chi}}{\bm{\mathcal{Q}}} + \pfpx{\bm{\mathcal{Q}}}{\bm{\Upsilon}} \pfpx{\bm{\Upsilon}}{\bm{\mathcal{Q}}}
        \end{pmatrix} \right] \begin{pmatrix}
            \delta \bm{\mathcal{R}} \\ \delta \bm{\mathcal{Q}}
        \end{pmatrix} = \begin{pmatrix}
            \pfpx{\bm{\mathcal{R}}}{\bm{\chi}} \pfpx{\bm{\chi}}{\bm{\varrho}^0} + \pfpx{\bm{\mathcal{R}}}{\bm{\Upsilon}} \pfpx{\bm{\Upsilon}}{\bm{\varrho}^0} + \pfpx{\bm{\mathcal{R}}}{\bm{\varrho}^0} \\
            \pfpx{\bm{\mathcal{Q}}}{\bm{\chi}} \pfpx{\bm{\chi}}{\bm{\varrho}^0} + \pfpx{\bm{\mathcal{Q}}}{\bm{\Upsilon}} \pfpx{\bm{\Upsilon}}{\bm{\varrho}^0} + \pfpx{\bm{\mathcal{Q}}}{\bm{\varrho}^0}
        \end{pmatrix} \delta \bm{\varrho}^0 \\
        \Rightarrow \begin{pmatrix}
            \dfdx{\bm{\mathcal{R}}}{\bm{\varrho}^0} \\ \dfdx{\bm{\mathcal{Q}}}{\bm{\varrho}^0}
        \end{pmatrix} = \left[\begin{pmatrix} \bm{1} & \\ & \bm{1} \end{pmatrix} - \begin{pmatrix}
            \pfpx{\bm{\mathcal{R}}}{\bm{\chi}} \pfpx{\bm{\chi}}{\bm{\mathcal{R}}} + \pfpx{\bm{\mathcal{R}}}{\bm{\Upsilon}} \pfpx{\bm{\Upsilon}}{\bm{\mathcal{R}}} & \pfpx{\bm{\mathcal{R}}}{\bm{\chi}} \pfpx{\bm{\chi}}{\bm{\mathcal{Q}}} + \pfpx{\bm{\mathcal{R}}}{\bm{\Upsilon}} \pfpx{\bm{\Upsilon}}{\bm{\mathcal{Q}}} \\
            \pfpx{\bm{\mathcal{Q}}}{\bm{\chi}} \pfpx{\bm{\chi}}{\bm{\mathcal{R}}} + \pfpx{\bm{\mathcal{Q}}}{\bm{\Upsilon}} \pfpx{\bm{\Upsilon}}{\bm{\mathcal{R}}} & \pfpx{\bm{\mathcal{Q}}}{\bm{\chi}} \pfpx{\bm{\chi}}{\bm{\mathcal{Q}}} + \pfpx{\bm{\mathcal{Q}}}{\bm{\Upsilon}} \pfpx{\bm{\Upsilon}}{\bm{\mathcal{Q}}}
            \end{pmatrix} \right] ^{-1} \begin{pmatrix}
            \pfpx{\bm{\mathcal{R}}}{\bm{\chi}} \pfpx{\bm{\chi}}{\bm{\varrho}^0} + \pfpx{\bm{\mathcal{R}}}{\bm{\Upsilon}} \pfpx{\bm{\Upsilon}}{\bm{\varrho}^0} + \pfpx{\bm{\mathcal{R}}}{\bm{\varrho}^0} \\
            \pfpx{\bm{\mathcal{Q}}}{\bm{\chi}} \pfpx{\bm{\chi}}{\bm{\varrho}^0} + \pfpx{\bm{\mathcal{Q}}}{\bm{\Upsilon}} \pfpx{\bm{\Upsilon}}{\bm{\varrho}^0} + \pfpx{\bm{\mathcal{Q}}}{\bm{\varrho}^0}
        \end{pmatrix} 
    \end{gather} 
    The expression can be simplified by group $(\mathcal{R}, \mathcal{Q}) \Rightarrow \MC{R}$ and $(\chi, \Upsilon) \Rightarrow \chi$:
    \begin{gather*}
        \dfdx{\MC{R}}{\bm{\varrho}^0} = \left(\bm{1} - \pfpx{\MC{R}}{\chi} \pfpx{\chi}{\MC{R}}\right)^{-1} \left(\pfpx{\MC{R}}{\chi}\pfpx{\chi}{\bm{\varrho}^0} + \pfpx{\MC{R}}{\bm{\varrho}^0}\right) \\
        \dfdx{\chi}{\bm{\varrho}^0} = \left(\bm{1} - \pfpx{\chi}{\MC{R}} \pfpx{\MC{R}}{\chi}\right)^{-1} \left(\pfpx{\chi}{\MC{R}}\pfpx{\MC{R}}{\bm{\varrho}^0} + \pfpx{\chi}{\bm{\varrho}^0}\right)
    \end{gather*}

    \item{$\pfpx{\chi}{\MC{R}}$}: 
    \begin{align*}
        \pfpx{\chi}{\MC{R}} =& \ppfpx{\chi}{\MC{R}} + \ppfpx{\chi}{\bm{\lambda}^F} \pfpx{\bm{\lambda}^F}{\MC{R}} \\
        \pfpx{\chi}{\bm{\varrho}^0} =& \ppfpx{\chi}{\lambda^F}\pfpx{\lambda^F}{\bm{\varrho}^0} = \ppfpx{\chi}{\lambda^F}\left( \ppfpx{\bm{\varrho}^F}{\lambda^F} \right)^{-1}
    \end{align*}

    \item{$\pfpx{\MC{R}}{\chi}$}:
    \begin{align*}
        \pfpx{\MC{R}}{\chi} =& \ppfpx{\MC{R}}{\chi} + \ppfpx{\MC{R}}{\lambda^B} \pfpx{\lambda^B}{\chi} \\
        =& \ppfpx{\MC{R}}{\chi} - \ppfpx{\MC{R}}{\lambda^B} \left( \ppfpx{\bm{\varrho}^B}{\bm{\lambda}^B}\right)^{-1} \ppfpx{\bm{\varrho}^B}{\bm{\chi}} \Leftarrow
        \encvert{\delta \bm{\varrho}^B}{\delta \bm{\varrho}^0 = 0} = \ppfpx{\bm{\varrho}^B}{\bm{\lambda}^B} \delta \bm{\lambda}^B + \ppfpx{\bm{\varrho}^B}{\bm{\chi}} \delta \bm{\chi} = 0 \\
        \ppfpx{\MC{R}}{\lambda^B_{\alpha}} =& 2 \sum_{e \neq a} \frac{1}{E_a - E_e} \bra{\tilde{a}} \tilde{\mbbm{N}}_{\alpha} \ket{\tilde{e}} \bra{\tilde{e}} \tilde{\mbbm{R}} \ket{\tilde{a}}
    \end{align*}

    \item{$\pfpx{\MC{R}}{\bm{\varrho}^0}$}:
    \begin{align*}
        \pfpx{\MC{R}}{\bm{\varrho}^0} =& \ppfpx{\MC{R}}{\bm{\varrho}^0} + \ppfpx{\MC{R}}{\lambda^B} \pfpx{\lambda^B}{\bm{\varrho}^0} \\
        \ppfpx{\MC{R}}{\bm{\varrho}^0} =& \bra{\tilde{a}} \ppfpx{\tilde{\mbbm{R}}}{\bm{\varrho}^0} \ket{\tilde{a}} +  \sum_e \frac{1}{E_a - E_e} \left(\bra{\tilde{a}} \ppfpx{\tilde{\mbbm{H}}^B}{\bm{\varrho}^0} \ket{\tilde{e}} \bra{\tilde{e}} \tilde{\mbbm{R}} \ket{\tilde{a}} + h.c. \right)
    \end{align*}

    \item{$\pfpx{E_{\text{atom}}}{\bm{\varrho}^0}$}: 
    \begin{align*}
        \pfpx{E_{\text{atom}}}{\bm{\varrho}^0} =& \ppfpx{E_{\text{atom}}}{\bm{\varrho}^0} + \ppfpx{E_{\text{atom}}}{\bm{\lambda}^B} \pfpx{\bm{\lambda}^B}{\bm{\varrho}^0} \\
        \ppfpx{E_{\text{atom}}}{\bm{\varrho}^0} =& \bra{\tilde{a}} \ppfpx{\tilde{\mbbm{H}}_{\text{atom}}}{\bm{\varrho}^0} \ket{\tilde{a}} + 2 \sum_{e \neq a} \frac{1}{E_a - E_e} \bra{\tilde{a}} \ppfpx{\tilde{\mbbm{H}}^B}{\bm{\varrho}^0} \ket{\tilde{e}} \bra{\tilde{e}} \tilde{\mbbm{H}}_{\text{atom}} \ket{\tilde{a}} \\
        \ppfpx{E_{\text{atom}}}{\bm{\lambda}^B_{\alpha}} =& 2 \sum_{e \neq a} \frac{1}{E_a - E_e} \bra{\tilde{a}} \tilde{\mbbm{N}}_{\alpha} \ket{\tilde{e}} \bra{\tilde{e}} \tilde{\mbbm{H}}_{\text{atom}} \ket{\tilde{a}} \\
        \pfpx{\bm{\lambda}^B}{\bm{\varrho}^0} =& \left(\ppfpx{\bm{\varrho}^B}{\bm{\lambda}^B}\right)^{-1} \left(\bm{1} - \ppfpx{\bm{\varrho}^B}{\bm{\varrho}^0}\right) \Leftarrow
        \encvert{\delta \bm{\varrho}^B}{\delta \bm{\chi} = 0} = \ppfpx{\bm{\varrho}^B}{\bm{\lambda}^B} \delta \bm{\lambda}^B + \ppfpx{\bm{\varrho}^B}{\bm{\varrho}^0} \delta \bm{\varrho}^0 = \delta \bm{\varrho}^0 \\
        \ppfpx{\bm{\varrho}^B}{\bm{\varrho}^0} =& \bra{\tilde{a}} \ppfpx{\tilde{\mbbm{N}}}{\bm{\varrho}^0} \ket{\tilde{a}} +  \sum_e \frac{1}{E_a - E_e} \left(\bra{\tilde{a}} \ppfpx{\tilde{\mbbm{H}}^B}{\bm{\varrho}^0} \ket{\tilde{e}} \bra{\tilde{e}} \tilde{\mbbm{N}} \ket{\tilde{a}} + h.c. \right)
    \end{align*}
    \item{$\pfpx{E_{\text{atom}}}{\bm{\chi}}$}:
    \begin{align*}
        \pfpx{E_{\text{atom}}}{\bm{\chi}} =& \ppfpx{E_{\text{atom}}}{\bm{\chi}} + \ppfpx{E_{\text{atom}}}{\bm{\lambda}^B} \pfpx{\bm{\lambda}^B}{\bm{\chi}} \\
        \ppfpx{E_{\text{atom}}}{\bm{\chi}_{\alpha}} =& 2 \sum_{e \neq a} \frac{1}{E_a - E_e} \bra{\tilde{a}} \tilde{\mbbm{R}}_{\alpha} \ket{\tilde{e}} \bra{\tilde{e}} \tilde{\mbbm{H}}_{\text{atom}} \ket{\tilde{a}} \\
        \pfpx{\bm{\lambda}^B}{\bm{\chi}} =& -\left(\ppfpx{\bm{\varrho}^B}{\bm{\lambda}^B}\right)^{-1} \left(\ppfpx{\bm{\varrho}^B}{\bm{\chi}}\right) \Leftarrow 
        \encvert{\delta \bm{\varrho}^B}{\delta \bm{\varrho}^0 = 0} = \ppfpx{\bm{\varrho}^B}{\bm{\lambda}^B} \delta \bm{\lambda}^B + \ppfpx{\bm{\varrho}^B}{\bm{\chi}} \delta \bm{\chi} =0 \\
        \ppfpx{\bm{\varrho}^B}{\bm{\chi}_{\alpha}} =& 2 \sum_{e \neq a} \frac{1}{E_a - E_e} \bra{\tilde{a}} \tilde{\mbbm{R}}_{\alpha} \ket{\tilde{e}} \bra{\tilde{e}} \tilde{\mbbm{N}} \ket{\tilde{a}}
    \end{align*}
\end{itemize}

\subsection{Bose Part Derivatives}
The PPDs in Bose part needs extra attention as they're a bit more complicated then the \textit{natural basis} version derived in \cite{PENG2022108348}. For example, the PPDs for $\widetilde{\mathbb{O}}$ defined in \eqref{def:tilde_dbl_matrix}:
\begin{gather}
    \ppfpx{\widetilde{\mathbb{O}}}{\bm{\varrho}^0} = \ppfpx{\mathbb{F}^{-1/2}}{\bm{\varrho}^0} \mathbb{O} \mathbb{F}^{-1/2} + \mathbb{F}^{-1/2} \ppfpx{\mathbb{O}}{\bm{\varrho}^0} \mathbb{F}^{-1/2} + \mathbb{F}^{-1/2} \mathbb{O} \ppfpx{\mathbb{F}^{-1/2}}{\bm{\varrho}^0} \\
\end{gather}
Where $\ppfpx{\mathbb{F}^{-1/2}}{\bm{\varrho}^0}$ can be obtained by solving the following Sylvester equation using the scipy package:
\begin{gather}
    \ppfpx{\mathbb{F}^{-1/2}}{\bm{\varrho}^0} \mathbb{F}^{1/2} + \mathbb{F}^{1/2} \ppfpx{\mathbb{F}^{-1/2}}{\bm{\varrho}^0} + \mathbb{F}^{-1/2} \ppfpx{\mathbb{F}}{\bm{\varrho}^0} \mathbb{F}^{-1/2} = 0 \label{eq:dblF_sylvester}
\end{gather}
And the PPDs for any double font matrix $\mathbb{O}$ is calculated from the derivatives of $\bm{m}^0$:
\begin{gather}
    \ppfpx{\mathbb{F}_{\mu \nu}}{\bm{\varrho}^0} = \Tr(\bm{\Gamma}^\intercal_{\mu} \bm{\Gamma}_{\nu} \dfdx{\bm{m}^0}{\bm{\varrho}^0})
\end{gather}
which is symmetric from its construction. The general Sylvester equation :
\begin{gather}
    AX+XB=C
\end{gather}
has a unique solution $X$ when $A$ and $-B$ don't share common eigenvalues. In our case, $A=B=\mbbm{F}^{1/2}$ are both positive definite matrices, so the solution is unique. Transposing and conjugating \eqref{eq:dblF_sylvester} gives us:
\begin{gather}
    \left(\ppfpx{\mathbb{F}^{-1/2}}{\bm{\varrho}^0}\right)^\dagger \mathbb{F}^{1/2} + \mathbb{F}^{1/2} \left(\ppfpx{\mathbb{F}^{-1/2}}{\bm{\varrho}^0}\right)^\dagger + \mathbb{F}^{-1/2} \ppfpx{\mathbb{F}}{\bm{\varrho}^0} \mathbb{F}^{-1/2} = 0.
\end{gather}
With the uniqueness of the solution, we can conclude that:
\begin{gather}
    \left(\ppfpx{\mathbb{F}^{-1/2}}{\bm{\varrho}^0}\right)^\dagger = \ppfpx{\mathbb{F}^{-1/2}}{\bm{\varrho}^0}
\end{gather}
$\bm{m}^0$ is calculated in \eqref{eq:m0_cal} and \eqref{def:M_and_P}:
\begin{gather}
    \bm{m}^0_{I I'} \defeq \bra{\Phi_0} \ket{I'} \bra{I} \ket{\Phi_0} = (-1)^{\abs{\overline{(I \cup I')}}}\det(\mathcal{M}^{I' I}_{\rho}), 
\end{gather}
Where $\mathcal{M}^{I' I}_{\bm{\varrho}^0}$ is the reduced density matrix from $\bm{\varrho}^0$ by only keeping the orbital indices from $I'$ and $I$. $\bm{\varrho}^0$ is expanded by a set of matrix basis $\{ \mathcal{O}_l \}$:
\begin{gather}
    \bm{\varrho}^0 = \sum_{l} \varrho^0_l \mathcal{O}_l
\end{gather}
Then the derivatives of $\mathcal{M}^{I' I}_{\bm{\varrho}^0}$ is simply:
\begin{gather}
    \dfdx{\mathcal{M}^{I' I}_{\bm{\varrho}^0}}{\varrho^0_l} = \begin{pmatrix}
        \mathcal{P}^{I' I}_{\mathcal{O}_l} & \mathcal{P}_{\mathcal{O}_l}^{I' \overline{(I \cup I')}} \\
        \mathcal{P}_{\mathcal{O}_l}^{\overline{(I \cup I')} I} & \mathcal{P}_{\mathcal{O}_l}^{\overline{(I \cup I')} \overline{(I \cup I')}}
        \end{pmatrix}
\end{gather}
The derivatives of a determinant is given by Jacobi's formula:
\begin{align*}
    \dfdx{\det(\mathcal{M}^{I' I}_{\bm{\varrho}^0})}{\varrho^0_l} =& \Tr[\text{adj}(\mathcal{M}^{I' I}_{\bm{\varrho}^0}) \dfdx{\mathcal{M}^{I' I}_{\bm{\varrho}^0}}{\varrho^0_l}] \\
    \text{adj}(\mathcal{M}^{I' I}_{\bm{\varrho}^0})_{\alpha \beta} =& \mathcal{C}(\mathcal{M}^{I' I}_{\bm{\varrho}^0})_{\beta \alpha}
\end{align*}
The final result is:
\begin{gather}
    \dfdx{\bm{m}^0_{I I'}}{\varrho^0_l} = (-1)^{\abs{\overline{\left(I' \cup I\right)}}} \Tr[\text{adj}(\mathcal{M}^{I' I}_{\bm{\varrho}^0}) \dfdx{\mathcal{M}^{I' I}_{\bm{\varrho}^0}}{\varrho^0_l}] 
\end{gather}
Calculation of $\text{adj}(\mathcal{M}^{I' I}_{\bm{\varrho}^0})$ takes a lot of time, we're yet to find the most efficient way to do it. However, we can use the following property to reduce the computational cost by half:
\begin{gather*}
    \mathcal{C}(\mathcal{M}^{I' I}_{\bm{\varrho}^0})_{\beta \alpha} = \mathcal{C}(\mathcal{M}^{I I'}_{\bm{\varrho}^0})^*_{\alpha \beta}
\end{gather*}

\subsubsection{$\pfpx{E_{\text{atom}}}{\varrho^0}$}
Using derivative rules, we have:
\begin{gather}
    \pfpx{E_{\text{atom}}}{\bm{\varrho}^0} = \ppfpx{E_{\text{atom}}}{\bm{\lambda}^B} \pfpx{\bm{\lambda}^B}{\bm{\varrho}^0} + \ppfpx{E_{\text{atom}}}{\bm{\varrho}^0}
\end{gather}
PPDs and special PDs:
\begin{align}
    &\ppfpx{E_{\text{atom}}}{\bm{\lambda}^B} = \sum_e \frac{1}{E_a - E_e} \left(\bra{\tilde{a}} \ppfpx{\tilde{\mbbm{H}}^B}{\bm{\lambda}^B} \ket{\tilde{e}} \bra{\tilde{e}} \tilde{\mbbm{H}}_{\text{atom}} \ket{\tilde{a}} + h.c. \right) \\
    &\pfpx{\bm{\lambda}^B}{\bm{\varrho}^0} = \left(\ppfpx{\bm{\varrho}^B}{\bm{\lambda}^B}\right)^{-1} \left(\bm{1} - \ppfpx{\bm{\varrho}^B}{\bm{\varrho}^0}\right) \\
    &\ppfpx{E_{\text{atom}}}{\bm{\varrho}^0} = \bra{\tilde{a}} \ppfpx{\tilde{\mbbm{H}}_{\text{atom}}}{\bm{\varrho}^0} \ket{\tilde{a}} +  \sum_e \frac{1}{E_a - E_e} \left(\bra{\tilde{a}} \ppfpx{\tilde{\mbbm{H}}^B}{\bm{\varrho}^0} \ket{\tilde{e}} \bra{\tilde{e}} \tilde{\mbbm{H}}_{\text{atom}} \ket{\tilde{a}} + h.c. \right) \\
    &\ppfpx{\bm{\varrho}^B}{\bm{\lambda}^B} = \sum_e \frac{1}{E_a - E_e} \left(\bra{\tilde{a}} \ppfpx{\tilde{\mbbm{H}}^B}{\bm{\lambda}^B} \ket{\tilde{e}} \bra{\tilde{e}} \tilde{\mbbm{N}} \ket{\tilde{a}} + h.c. \right) \\
    &\ppfpx{\bm{\varrho}^B}{\bm{\varrho}^0} = \bra{\tilde{a}} \ppfpx{\tilde{\mbbm{N}}}{\bm{\varrho}^0} \ket{\tilde{a}} +  \sum_e \frac{1}{E_a - E_e} \left(\bra{\tilde{a}} \ppfpx{\tilde{\mbbm{H}}^B}{\bm{\varrho}^0} \ket{\tilde{e}} \bra{\tilde{e}} \tilde{\mbbm{N}} \ket{\tilde{a}} + h.c. \right)
\end{align}
PPDs of double font matrices:
\begin{align}
    & \ppfpx{\tilde{\mbbm{H}}^B}{\bm{\varrho}^0} = \sum_{\alpha} \left(\chi_{\alpha}\ppfpx{\tilde{\mbbm{R}}_{\alpha}}{\bm{\varrho}^0} + \Gamma_{\alpha}\ppfpx{\tilde{\mbbm{Q}}_{\alpha}}{\bm{\varrho}^0} + \lambda^B_{\alpha} \ppfpx{\tilde{\mbbm{N}}_{\alpha}}{\bm{\varrho}^0} \right) + \ppfpx{\tilde{\mbbm{H}}_{\text{atom}}}{\bm{\varrho}^0} \\
    & \ppfpx{\tilde{\mbbm{R}}_{\alpha}}{\bm{\varrho}^0} = \ppfpx{\mathbb{F}^{-1/2}}{\bm{\varrho}^0} \mbbm{R}_{\alpha} \mathbb{F}^{-1/2} + \mathbb{F}^{-1/2} \ppfpx{\mbbm{R}_{\alpha}}{\bm{\varrho}^0} \mathbb{F}^{-1/2} + \mathbb{F}^{-1/2} \mbbm{R}_{\alpha} \ppfpx{\mathbb{F}^{-1/2}}{\bm{\varrho}^0} \\
    & \ppfpx{\tilde{\mbbm{Q}}_{\alpha}}{\bm{\varrho}^0} = \ppfpx{\mathbb{F}^{-1/2}}{\bm{\varrho}^0} \mbbm{Q}_{\alpha} \mathbb{F}^{-1/2} + \mathbb{F}^{-1/2} \ppfpx{\mbbm{Q}_{\alpha}}{\bm{\varrho}^0} \mathbb{F}^{-1/2} + \mathbb{F}^{-1/2} \mbbm{Q}_{\alpha} \ppfpx{\mathbb{F}^{-1/2}}{\bm{\varrho}^0} \\
    & \ppfpx{}{\bm{\varrho}^0} \begin{pmatrix}
        \mbbm{R} \\ \mbbm{Q}
    \end{pmatrix} = \left[-(\bm{\varrho}^0)^{-1} \ppfpx{\bm{\varrho}^0}{\bm{\varrho}^0} (\bm{\varrho}^0)^{-1} + (\bm{\varrho}^0)^{-1} \ppfpx{}{\bm{\varrho}^0}\right] \frac{1}{2} \begin{pmatrix}
        \mbbm{K}_{\uparrow \uparrow} + (\mbbm{K}_{\uparrow \uparrow})^\dagger \\ \mbbm{K}_{\downarrow \uparrow} + (\mbbm{K}_{\downarrow \uparrow})^\dagger \end{pmatrix}\\
    & \ppfpx{\tilde{\mbbm{H}}_{\text{atom}}}{\bm{\varrho}^0} = \ppfpx{\mathbb{F}^{-1/2}}{\bm{\varrho}^0} \mbbm{H}_{\text{atom}} \mathbb{F}^{-1/2} + \mathbb{F}^{-1/2} \ppfpx{\mbbm{H}_{\text{atom}}}{\bm{\varrho}^0} \mathbb{F}^{-1/2} + \mathbb{F}^{-1/2} \mbbm{H}_{\text{atom}} \ppfpx{\mathbb{F}^{-1/2}}{\bm{\varrho}^0} \\
    & \ppfpx{\tilde{\mbbm{N}}_{\alpha}}{\bm{\varrho}^0} = \ppfpx{\mathbb{F}^{-1/2}}{\bm{\varrho}^0} \mbbm{N}_{\alpha} \mathbb{F}^{-1/2} + \mathbb{F}^{-1/2} \ppfpx{\mbbm{N}_{\alpha}}{\bm{\varrho}^0} \mathbb{F}^{-1/2} + \mathbb{F}^{-1/2} \mbbm{N}_{\alpha} \ppfpx{\mathbb{F}^{-1/2}}{\bm{\varrho}^0} \\
    & \ppfpx{\tilde{\mbbm{D}}_{\alpha}}{\bm{\varrho}^0} = \ppfpx{\mathbb{F}^{-1/2}}{\bm{\varrho}^0} \mbbm{D}_{\alpha} \mathbb{F}^{-1/2} + \mathbb{F}^{-1/2} \ppfpx{\mbbm{D}_{\alpha}}{\bm{\varrho}^0} \mathbb{F}^{-1/2} + \mathbb{F}^{-1/2} \mbbm{D}_{\alpha} \ppfpx{\mathbb{F}^{-1/2}}{\bm{\varrho}^0} 
\end{align}

\section{Implementation to TBG}\label{appdx:TBG_implementation}
\providecommand{\cref}[1]{\ref{#1}}
\subsection{8-band model and interactions}
The 8-band tight-binding model of TBG reads:
\begin{gather}
    \hat{H}_0 = \sum_{s \eta \vb{k}}
    \begin{pmatrix}
        \bm{f}^\dagger_{s \eta \vb{k}} & \bm{c}^\dagger_{s \eta \vb{k}}
    \end{pmatrix}
    \begin{pmatrix}
        \hat{H}^{(0, \text{ff})}_{\eta}(\vb{k}) & \hat{H}^{(0, \text{fc})}_{\eta}(\vb{k}) \\
        \hat{H}^{(0, \text{fc})}_{\eta}(\vb{k})^\dagger & \hat{H}^{(0, \text{cc})}_{\eta}(\vb{k}) 
    \end{pmatrix}
    \begin{pmatrix}
        \bm{f}_{s \eta \vb{k}} \\ \bm{c}_{s \eta \vb{k}}
    \end{pmatrix}.
\end{gather}
where $s, \eta$ are spin, valley indices, $\vb{k}$ marks the crystal momentum in the morie Brillouin zone. There 2 orbitals for $f_{s \eta \alpha; \vb{R}}$ ($\alpha$=1, 2: orbital indices for $f$-orbitals) located at the AA-stacking center and 6 orbitals for $\bm{c}_{s \eta \vb{R}}$ located at AA, AB/BA and DW (Domain Wall) regions. The details of the Hamiltonian and symmetry operations can be found in Ref.~\cite{HShi2025_PRB}. We list the symmetry operations of the $f$-orbitals here where we adopt the convention in Ref.~\cite{YJWANG2024_PRL} (The $f$-orbitals in Ref.~\cite{YJWANG2024_PRL} and Ref.~\cite{HShi2025_PRB} are exactly the same $p_{\pm}$ Gaussian orbtials; the site index $\vb{R}$ is omitted for simplicity):
\begin{gather}
    \mathcal{T} f^\dagger_{s \eta \alpha} \mathcal{T}^{-1} = f^\dagger_{s \bar{\eta} \alpha}, \quad C_{2z} f^\dagger_{s \eta \alpha} C_{2z}^{-1} = f^\dagger_{s \bar{\eta} \bar{\alpha}}, \quad C_{3z} f^\dagger_{s \eta \alpha} C_{3z}^{-1} = e^{i \frac{2\pi}{3} \eta (-1)^{(\alpha-1)}} f^\dagger_{s \eta \alpha}, \quad C_{2x} f^\dagger_{s \eta \alpha} C_{2x}^{-1} = f^\dagger_{s \eta \bar{\alpha}}. \label{def:symm_operation_f}
\end{gather}
The interactions are density density $ff$-interactions described by $\hat{H}_U$, the density-density $fc$-interactions described by $\hat{H}_V$ and the density-density $cc$-interactions described by $\hat{H}_W$:
\begin{gather}
    \hat{N}_f(\vb{R}) = \sum_{s \eta \alpha} 
    \hat{n}^f_{\vb{R}; s \eta \alpha} , \quad \hat{N}_c(\vb{R}) = \sum_{s \eta a} \hat{n}^c_{\vb{R}; s \eta a} \\
    \hat{H}_U = \frac{U}{2} \sum_{\vb{R}} \left(\hat{N}_f(\vb{R}) - 4\right)^2, \quad \hat{H}_W = W \sum_{\vb{R}} \left(\hat{N}_f(\vb{R}) - 4\right) \left(\hat{N}_c - 12\right), \quad \hat{H}_V = \frac{V}{2} \sum_{\vb{R}}\left(\hat{N}_c - 12\right)^2
\end{gather}
Morie optical phonon and microscopic carbon-atom Hubbard induce anti-Hund's ($J_A$) and Hund's ($J_H$) splittings \cite{YJWang2025_PRB}, 
\begin{equation}
\begin{aligned}
    \hat{H}_{J_A} + \hat{H}_{J_H} &= -\frac{1}{2} \sum_{\eta s s'} f^\dagger_{\vb{R} \beta_1 \eta s} f^\dagger_{\vb{R}\beta_1'\eta s'}  \begin{pmatrix}
        J_{\rm a} & 0 & 0 & 0 \\
        0 & -J_{\rm a} & J_{\rm b} & 0 \\
        0 & J_{\rm b} & -J_{\rm a} & 0 \\
        0 & 0 & 0 & J_{\rm a} \\
    \end{pmatrix}_{\beta_1'\beta_1, \beta_2'\beta_2} f_{\vb{R}\beta_2'\eta s'} f_{\vb{R}\beta_2\eta s} \\
    & -\frac{1}{2} \sum_{\eta s s'} f^\dagger_{\vb{R}\beta_1 \eta s} f^\dagger_{\vb{R}\beta_1'\overline{\eta} s'}  \begin{pmatrix}
        J_{\rm a} & 0 & 0 & J_{\rm b} \\
        0 & -J_{\rm a} & 0 & 0 \\
        0 & 0 & -J_{\rm a} & 0 \\
        J_{\rm b} & 0 & 0 & J_{\rm a} \\
    \end{pmatrix}_{\beta_1'\beta_1, \beta_2'\beta_2} f_{\vb{R}\beta_2' \overline{\eta} s'} f_{\vb{R}\beta_2\eta s} \\
    & -\frac{1}{2} \sum_{\eta s s'} f^\dagger_{\vb{R}\beta_1 \overline{\eta} s} f^\dagger_{\vb{R}\beta_1'\eta s'}  \begin{pmatrix}
        J_{\rm e} & 0 & 0 & J_{\rm d} \\
        0 & 0 & J_{\rm d} & 0 \\
        0 & J_{\rm d} & 0 & 0 \\
        J_{\rm d} & 0 & 0 & J_{\rm e} \\
    \end{pmatrix}_{\beta_1'\beta_1, \beta_2'\beta_2} f_{\vb{R}\beta_2' \overline{\eta} s'} f_{\vb{R}\beta_2\eta s}
\end{aligned}
\end{equation}
Where we take \cite{YJWANG2024_PRL} $(J_a, J_b, J_d, J_e) = (-1/3, -1/3, -1/3, -1)J_H + (0, 0, 1, 1) J_A$. \\

The model owns translation symmetry, and obeys discrete crystalline symmetries $C_{2z}\mathcal{T}$, $C_{3z}$, $C_{2x}$. The whole Hamiltonian containing both valleys further has $C_{2z}$ and $\mathcal{T}$ independently. Continuous local symmetries include charge $\rm U(1)$, valley $\rm U(1)$, and total spin $\rm SU(2)$. For superconducting phases, the charge U(1) symmetry is broken. For nematic phases, the $C_{3z}$ symmetry is broken. In the following subsections, we will constrain the variational space by these symmetries (except $C_{3z}$ and charge U(1)) both in the single-body as well as many-body levels. \\

We emphasize that in order to correctly describe the Kondo physics which is the interplay between $f$- and $c$-electrons, the occupation number of $c$-orbitals is added as an variational parameter in addition to the reduced Nambu $f$-orbital density matrix $\bm{\varrho}^0$. In addition, we also need to include: 1. chemical potential term for $c$-electron; 2. Coulomb interaction between $f$- and $c$-electrons $\hat{H}_W$ as well as the Coulomb interaction between $c$-electrons $\hat{H}_V$, both of which are treated in Hartree level. The new variational energy functional thus reads:
\begin{gather}
    \mathcal{L}[\bm{\varrho}^0, \nu_c] = \mathcal{L}[\bm{\varrho}^0] - \mu(\nu_c + 12) + \lambda^F_c (\langle \hat{N}_c \rangle_{G} - \nu_c - 12) + W \left(\langle \hat{N}_f \rangle_G - 4\right) \nu_c + \frac{V}{2} \nu_c^2 \label{eq:LG_TBG}
\end{gather}
This requires an additional Lagrange multiplier $\lambda^F_c$ added to $\bm{\lambda}^F$ and $\hat{\mathcal{O}}^F_c$ added to $\{ \hat{\mathcal{O}}^F_l\}$ as well as slight modifications in \ref{sec:FermiPart}. For calculations done in this work, we set the two phenomenological parameters $W = V = 0.7 U$.

\subsection{single-body basis in \textit{irrep} blocks}
Due to valley U(1) symmetry, the reduced Nambu density matrix $\bm{\varrho}^0$ is valley block-diagonal:
\begin{gather}
    \bm{\varrho}^0 = \begin{pmatrix}
        \bm{\rho}^0_{\eta} & \bm{0} & \bm{\Delta}^0_{\eta} & \bm{0} \\
        \bm{0} & \bm{\rho}^0_{\bar{\eta}} & \bm{0} & \bm{\Delta}^0_{\bar{\eta}} \\
        \left(\bm{\Delta}^0_{\eta}\right)^\dagger & \bm{0} & \bm{1} - \left(\bm{\rho}^0_{\bar{\eta}}\right)^T & \bm{0} \\
        \bm{0} & \left(\bm{\Delta}^0_{\bar{\eta}}\right)^\dagger & \bm{0} & \bm{1} - \left(\bm{\rho}^0_{\eta}\right)^T
    \end{pmatrix} \label{eq:TBG_varrho0}
\end{gather}
$C_{2z}$ and $C_{2x}$ relates two valleys:
\begin{gather}
    \bm{\rho}^0 \defeq \bm{\rho}^0_{\eta} = \bm{\rho}^0_{\bar{\eta}}, \quad \bm{\Delta}^0 \defeq \bm{\Delta}^0_{\eta} = \bm{\Delta}^0_{\bar{\eta}}
\end{gather}
$\mathcal{T}$ gives (fixing the gauge choice of $\mathcal{T}$):
\begin{gather}
    (\bm{\rho}^0)^* = \bm{\rho}^0, \quad (\bm{\Delta}^0)^* = \bm{\Delta}^0
\end{gather}
$C_{2x}$ further gives:
\begin{gather}
    \bm{\rho}^0 = \rho^0_{0} \sigma_0 + \rho^0_x \sigma_x, \quad \bm{\Delta}^0 = \Delta^0_0 \sigma_0 + \Delta^0_x \sigma_x, \quad \sigma_0, \sigma_x \text{ are Pauli matrices in orbital space}
\end{gather}
The 4 single-body basis $\{ \mathcal{O}_l \}$ \eqref{def:singlebodybasis} thus are:
\begin{gather}
    \frac{1}{2}\begin{pmatrix}
        \sigma_0 & & & \\
        & \sigma_0 & & \\
         & & -\sigma_0 & \\
        & & & -\sigma_0
    \end{pmatrix} 
    , \quad
    \frac{1}{2} \begin{pmatrix}
        \sigma_x & & & \\
        & \sigma_x & & \\
         & & -\sigma_x & \\
        & & & -\sigma_x
    \end{pmatrix}
    , \quad
    \frac{1}{2} \begin{pmatrix}
         &  & \sigma_0 & \\
        & & & \sigma_0 \\
        \sigma_0 & &  & \\
        & \sigma_0 & & 
    \end{pmatrix} , \quad
    \frac{1}{2} \begin{pmatrix}
            &  & \sigma_x & \\
            & & & \sigma_x \\
            \sigma_x & &  & \\
            & \sigma_x & &
    \end{pmatrix}
\end{gather}

\subsection{many-body basis in \textit{irrep} blocks}
We first need to define the atomic Fock basis convention. For notation simplicity, we also denote $\alpha = (\beta, \eta, s)$, and $\alpha=1,\cdots,8$ is equivalent to $\alpha=(1,+,\uparrow), (1,+,\downarrow), (1,-,\uparrow), \cdots, (2,-,\downarrow)$. 
The $2^8$ Fock states $|\Gamma\rangle$ that span the local Hilbert space are defined as, 
\begin{align}   \label{eq:f_Gamma}
    |\Gamma \rangle = \prod^<_{\alpha \in \Gamma} f^\dagger_{\alpha} |\text{emp}\rangle
\end{align}
where within the product $\prod^<_{\alpha}$ the largest $\alpha$ is created first, \textit{e.g.} $f^\dagger_{1} f^\dagger_{2} \cdots$. We will also denote 
\begin{align}
    |\Gamma\rangle &= |\alpha_1 \cdots \alpha_n\rangle , \quad \text{where}  \alpha_1 < \cdots < \alpha_n \text{and} \alpha_{1,\cdots,n} \in \Gamma \\
    |\text{occ}\rangle &= |12345678\rangle , \quad \quad \quad 
    |\overline{\Gamma}\rangle = \prod^{>}_{\alpha \in \Gamma} f_{\alpha} |\text{occ}\rangle
\end{align}
For example, $|\overline{14}\rangle = f_4 f_1 |\text{occ}\rangle = |235678\rangle$, $|\overline{13}\rangle = f_3 f_1 |\text{occ}\rangle = - |245678\rangle$. 

For all the SC phases that we will consider in this work, valley U(1) charge $N_v$ (generated by $\tau^z$), total SU(2) spin $j$ (generated by $\zeta^{x,y,z}$), and crystalline symmetries $C_{2z}$ ($\sigma^x \tau^x$) and $C_{2x}$ ($\sigma^x$) will always be preserved. 
Therefore, all the $2^8$ atomic configurations can be classified into a series of blocks 
, where each block $B$ consists of a multiple times of the same irrep of $N_v$, $j$, and $C_{2z}$, $C_{2x}$. 
We dub the irreps in each block $B$ as $\Xi=1,\cdots,n_B$, where $n_B$ denotes the total number of irreps in this block, and dub each atomic configuration in each irrep $\Xi$ as $|\Xi, q\rangle$, where $q=1,\cdots,d_B$, and $d_B$ denotes the dimension of this irrep. 
Consequently, any local quantity (or operator) $O$ that respects the above symmetries must take the following form under such an atomic basis, 
\begin{align}   \label{eq:O_form}
    O = \sum_{B} \sum_{\Xi,\Xi'=1}^{n_B} O^{(B)}_{\Xi,\Xi'} \sum_{q=1}^{d_B} |\Xi, q \rangle \langle \Xi', q|
\end{align}
Finally, we can also fix the overall complex phase of $|\Xi, q\rangle$ with $C_{2z}T$ ($\sigma^x K$), so that $(C_{2z}T) |\Xi,q\rangle = |\Xi,q\rangle$. 
By this assumption, if $O$ also commutes with $C_{2z}T$, then $O^{(B)}_{\Xi,\Xi'}$ must be real numbers. 
An example of such quantities would be the on-site interaction $H_U + H_{\rm AH}$, which has already been block-diagonalized in Ref.~\cite{YJWang2025_PRB}.

In sum, there are $20$ such blocks, as summarized in 
For any local quantity $O$ that respects the above symmetries, it takes $\sum_{B} n_B^2 = 513$ real parameters to describe it in the above block-diagonalized form.

The local electron configuration $\{ \ket{I} \}$ can be classfied into 20 symmetry blocks (allow breaking of $C_{3z}$):
\begin{table} 
    \centering
    \caption{\label{tab:B_Xi} $2^8$ atomic configurations classified into $20$ symmetry blocks, $\sum_B n_B d_B = 2^8$. In general, it takes $\sum_{B} n_B^2 = 513$ real parameters to parametrize a symmetric local quantity.}
    \begin{tabular}{cccc||c|c|c}
    \hline\hline
        $N_v$ & $j$ & $C_{2z}$ & $C_{2x}$ & $n_B$ & $d_B$ & Wave-functions \\
        0 & 0 & $+$ & $+$ & 11 & 1 & \cref{tab:irrep_00pp}[0] \\ 
        0 & 0 & $+$ & $-$ & 4 & 1 & \cref{tab:irrep_00pm}[1] \\ 
        0 & 0 & $-$ & $+$ & 1 & 1 & \cref{tab:irrep_00mp}[2] \\ 
        0 & 0 & $-$ & $-$ & 4 & 1 & \cref{tab:irrep_00mm}[3] \\ 
    \hline
        0 & 1 & $+$ & $+$ & 1 & 3 & \cref{tab:irrep_01pp}[4] \\ 
        0 & 1 & $+$ & $-$ & 4 & 3 & \cref{tab:irrep_01pm}[5] \\ 
        0 & 1 & $-$ & $+$ & 6 & 3 & \cref{tab:irrep_01mp}[6] \\ 
        0 & 1 & $-$ & $-$ & 4 & 3 & \cref{tab:irrep_01mm}[7] \\ 
    \hline
        0 & 2 & $+$ & $+$ & 1 & 5 & \cref{tab:irrep_02}[8] \\ 
    \hline
        $\pm 2$ & 0 & $\pm$ & $+$ & 6 & 2 & \cref{tab:irrep_20p}[9] \\
        $\pm 2$ & 0 & $\pm$ & $-$ & 4 & 2 & \cref{tab:irrep_20m}[10] \\
    \hline
        $\pm 2$ & 1 & $\pm$ & $+$ & 2 & 6 & \cref{tab:irrep_21p}[11] \\
        $\pm 2$ & 1 & $\pm$ & $-$ & 4 & 6 & \cref{tab:irrep_21m}[12] \\
    \hline
        $\pm 4$ & 0 & $\pm$ & $+$ & 1 & 2 & \cref{tab:irrep_40}[13] \\
    \hline\hline
        $\pm 1$ & $\frac{1}{2}$ & $\pm$ & $+$ & 10 & 4 & \cref{tab:irrep_11}[14] \\
        $\pm 1$ & $\frac{1}{2}$ & $\pm$ & $-$ & 10 & 4 & \cref{tab:irrep_11}[15] \\
    \hline
        $\pm 1$ & $\frac{3}{2}$ & $\pm$ & $+$ & 2 & 8 & \cref{tab:irrep_13}[16] \\
        $\pm 1$ & $\frac{3}{2}$ & $\pm$ & $-$ & 2 & 8 & \cref{tab:irrep_13}[17] \\
    \hline
        $\pm 3$ & $\frac{1}{2}$ & $\pm$ & $+$ & 2 & 4 & \cref{tab:irrep_31}[18] \\
        $\pm 3$ & $\frac{1}{2}$ & $\pm$ & $-$ & 2 & 4 & \cref{tab:irrep_31}[19] \\
    \hline\hline
    \end{tabular} 
\end{table}
The representative states for total spin number $j>0$ $|\Xi,q=1\rangle$ are chosen as with the highest total $z$-spin. 
\setcounter{table}{0}
\begin{table} 
    \centering
    \caption{\label{tab:irrep_00pp} \textbf{Block 0} ($n_B=11$, $d_B=1$):  $(N_v, j) = (0,0)$, $C_{2z}=+$, $C_{2x}=+$. }
    \begin{tabular}{c|c|c}
    \hline
    	Particle number & Rep. w.f. & Reduced from $\rho$ \\
    \hline
    	$N=0$ & $|\emp\rangle$ & $A_1$  \\
    \hline
	$N=2$ & $ |14\rangle + |58\rangle - |23\rangle - |67\rangle $ & $A_1$   \\
	 & $ |18\rangle - |45\rangle - |27\rangle + |36\rangle $ & $E_2$ \\
    \hline
    $N=4$ & $ |1234\rangle + |5678\rangle $ & $ A_1 $   \\
	 & $ |1368\rangle+|2457\rangle-|1458\rangle-|2367\rangle $ & $ A_1 $   \\
	 & $ |1368\rangle + |2457\rangle + |1458\rangle + |2367\rangle - 2|2358\rangle - 2|1467\rangle $ & $ A_1 $   \\
     & $ |1278\rangle + |3456\rangle$ & $ E_2 $    \\
	 &  $ |1238\rangle - |1247\rangle - |2578\rangle + |1678\rangle - |4567\rangle + |3568\rangle + |1346\rangle - |2345\rangle $  & $ E_2 $  \\
    \hline
    	$N=6$ & $|\ovl{14}\rangle + |\ovl{58}\rangle - |\ovl{23}\rangle - |\ovl{67}\rangle$ & $A_1$ \\
	 & $|\ovl{18}\rangle - |\ovl{45}\rangle - |\ovl{27}\rangle + |\ovl{36}\rangle$ & $E_2$ \\
    \hline
    	$N=8$ &  $|\occ\rangle$ & $A_1$ \\
    \hline
    \end{tabular}
    \\[1cm]
    \centering
    \caption{\label{tab:irrep_00pm} \textbf{Block 1} ($n_B=4$, $d_B=1$): $(N_v, j) = (0,0)$, $C_{2z}=+$, $C_{2x}=-$. }
    \begin{tabular}{c|c|c}
    \hline
    	Particle number & Rep. w.f. & Reduced from $[\rho, j]$ \\
    \hline
	$N=2$ & $i\left(|18\rangle + |45\rangle - |27\rangle - |36\rangle \right)$ & $ E_2$ \\
    \hline
    	$N=4$ & $ i\left( |1278\rangle - |3456\rangle \right)$ & $E_2$  \\
	& $i\left( |1238\rangle - |1247\rangle - |2578\rangle + |1678\rangle + |4567\rangle - |3568\rangle - |1346\rangle + |2345\rangle \right)$ & $E_2$  \\
    \hline
    	$N=6$ & $i\left( |\ovl{18}\rangle + |\ovl{45}\rangle - |\ovl{27}\rangle - |\ovl{36}\rangle \right)$ & $E_2$ \\
    \hline
    \end{tabular}
    \\[1cm]
    \centering
    \caption{\label{tab:irrep_00mp} \textbf{Block 2} ($n_B=1$, $d_B=1$): $(N_v, j) = (0,0)$, $C_{2z}=-$, $C_{2x}=+$. }
    \begin{tabular}{c|c|c}
    \hline
    	Particle number & Rep. w.f. & Reduced from $[\rho, j]$ \\
    \hline
    	$N=4$ & $ |1238\rangle - |1247\rangle + |2578\rangle - |1678\rangle - |4567\rangle + |3568\rangle - |1346\rangle + |2345\rangle $ & $ E_1 $ \\
    \hline
    \end{tabular}
\end{table}


\begin{table} 
    \centering
    \caption{\label{tab:irrep_00mm} \textbf{Block 3} ($n_B=4$, $d_B=1$): $(N_v, j) = (0,0)$, $C_{2z}=-$, $C_{2x}=-$. }
    \begin{tabular}{c|c|c}
    \hline
    	Particle number & Rep. w.f. & Reduced from $[\rho, j]$ \\
    \hline
	$N=2$ & $ i\left( |14\rangle - |58\rangle - |23\rangle + |67\rangle  \right)$ & $ B_2 $ \\
    \hline
    	$N=4$ & $i \left( |1234\rangle - |5678\rangle \right)$ & $ B_2$ \\
    	& $ i\left( |1238\rangle - |1247\rangle + |2578\rangle - |1678\rangle + |4567\rangle - |3568\rangle + |1346\rangle - |2345\rangle  \right)$ & $ E_1 $ \\
    \hline
    	$N=6$ & $i\left( |\ovl{14}\rangle - |\ovl{58}\rangle - |\ovl{23}\rangle + |\ovl{67}\rangle \right)$ & $ B_2$  \\
    \hline
    \end{tabular}
\end{table}

\begin{table} 
    \centering
    \caption{\label{tab:irrep_01pp} \textbf{Block 4} ($n_B=1$, $d_B=3$): $(N_v, j) = (0,1)$, $C_{2z}=+$, $C_{2x}=+$.}
    \begin{tabular}{c|c|c}
    \hline
    	Particle number & Rep. w.f. & Reduced from $ \rho$ \\
    \hline
    	$N=4$ & $|1237\rangle - |1578\rangle - |3567\rangle + |1345\rangle$ & $ E_2$  \\
    \hline
    \end{tabular}
    \\[1cm]
    \centering
    \caption{\label{tab:irrep_01pm} \textbf{Block 5} ($n_B=4$, $d_B=3$): $(N_v, j) = (0,1)$, $C_{2z}=+$, $C_{2x}=-$. }
    \begin{tabular}{c|c|c}
    \hline
    	Particle number & Rep. w.f. & Reduced from $\rho$ \\
    \hline
    	$N=2$ & $i\left( |13\rangle - |57\rangle \right)$ & $A_2$ \\
    \hline
    	$N=4$ & $i\left( |1358\rangle - |1367\rangle - |1457\rangle + |2357\rangle \right)$ & $A_2$  \\
	 & $i\left( |1237\rangle - |1578\rangle + |3567\rangle - |1345\rangle \right)$ & $ E_2 $  \\
    \hline
    	$N=6$ & $i\left( |\ovl{24}\rangle - |\ovl{68}\rangle \right)$ & $ A_2 $ \\ 
    \hline
    \end{tabular}
    \\[1cm]
    \centering
    \caption{\label{tab:irrep_01mp} \textbf{Block 6} ($n_B=6$, $d_B=3$): $(N_v, j) = (0,1)$, $C_{2z}=-$, $C_{2x}=+$. }
    \begin{tabular}{c|c|c}
    \hline
    	Particle number & Rep. w.f. & Reduced from $[\rho, j]$ \\
    \hline
    	$N=2$ & $|13\rangle + |57\rangle$ & $ B_1 $  \\
	& $|17\rangle - |35\rangle $ & $ E_1 $  \\
    \hline
    	$N=4$ & $|1358\rangle - |1367\rangle + |1457\rangle - |2357\rangle$ & $ B_1 $  \\
	& $|1237\rangle + |1578\rangle - |3567\rangle - |1345\rangle$ & $ E_1 $  \\
    \hline
    	$N=6$ & $|\ovl{24}\rangle + |\ovl{68}\rangle$ & $B_1$ \\
         & $|\ovl{28}\rangle - |\ovl{46}\rangle$ & $E_1$ \\
    \hline
    \end{tabular}
    \\[1cm]
    \centering
    \caption{\label{tab:irrep_01mm} \textbf{Block 7} ($n_B=4$, $d_B=3$): $(N_v, j) = (0,1)$, $C_{2z}=-$, $C_{2x}=-$. }
    \begin{tabular}{c|c|c}
    \hline
    	Particle number & Rep. w.f. & Reduced from $[\rho, j]$ \\
    \hline
    	$N=2$ & $i\left( |17\rangle + |35\rangle \right)$ & $ E_1 $  \\
    \hline
    	$N=4$ & $i\left( |1358\rangle + |1367\rangle - |1457\rangle - |2357\rangle \right)$ & $ B_2 $  \\
	& $i\left( |1237\rangle + |1578\rangle + |3567\rangle + |1345\rangle \right)$ & $ E_1 $  \\
    \hline
    	$N=6$ & $i\left( |\ovl{28}\rangle + |\ovl{46}\rangle \right)$ & $ E_1 $  \\
    \hline
    \end{tabular}
\end{table}

\begin{table} 
    \centering
    \caption{\label{tab:irrep_02} \textbf{Block 8} ($n_B=1$, $d_B=5$): $(N_v, j) = (0,2)$. }
    \begin{tabular}{c|c|c}
    \hline
    	Particle number & Rep. w.f. & Reduced from $[\rho, j]$ \\
    \hline
    	$N=4$ & $|1357\rangle$ & $A_1$  \\
    \hline
    \end{tabular}
\end{table}

\begin{table} 
    \centering
    \caption{\label{tab:irrep_20p} \textbf{Block 9} ($n_B=6$, $d_B=2$): $(|N_v|, j) = (2, 0)$, $C_{2x}=+$. }
    \begin{tabular}{c|c|c}
    \hline
    	Particle number & Rep. w.f. & Reduced from $\rho$ \\
    \hline
    	$N=2$ & $|16\rangle - |25\rangle$ & $A_1 + B_1 $  \\
         & $|12\rangle + |56\rangle$ & $E_1 + E_2$  \\
    \hline
        $N=4$ & $|1236\rangle - |1245\rangle - |2567\rangle + |1568\rangle$ & $A_1 + B_1$  \\
        & $|1258\rangle - |1267\rangle + |1456\rangle - |2356\rangle$ & $E_1 + E_2$  \\
    \hline
        $N=6$ & $|\ovl{38}\rangle - |\ovl{47}\rangle$ & $A_1 + B_1$  \\
         & $|\ovl{34}\rangle + |\ovl{78}\rangle$ & $E_1 + E_2$  \\
    \hline
    \end{tabular}
    \\[1cm]
    \centering
    \caption{\label{tab:irrep_20m} \textbf{Block 10} ($n_B=4$, $d_B=2$): $(|N_v|, j) = (2, 0)$, $C_{2x}=-$. }
    \begin{tabular}{c|c|c}
    \hline
    	Particle number & Rep. w.f. & Reduced from $\rho$ \\
    \hline
    	$N=2$ & $i\left( |12\rangle - |56\rangle \right)$ & $ E_1 + E_2$  \\
    \hline
        $N=4$ & $i\left( |1236\rangle - |1245\rangle + |2567\rangle - |1568\rangle \right)$ & $A_2 + B_2$ \\
        & $i\left( |1258\rangle - |1267\rangle - |1456\rangle + |2356\rangle \right)$ & $E_1+E_2$ \\
    \hline
        $N=6$ & $i\left( |\ovl{34}\rangle - |\ovl{78}\rangle \right)$ & $E_1 + E_2$ \\
    \hline
    \end{tabular}
\end{table}

\begin{table} 
    \centering
    \caption{\label{tab:irrep_21p} \textbf{Block 11} ($n_B=2$, $d_B=6$): $(|N_v|, j) = (2, 1)$, $C_{2x}=+$. }
    \begin{tabular}{c|c|c}
    \hline
    	Particle number & Rep. w.f. & Reduced from $\rho$ \\
    \hline
        $N=4$ & $|1235\rangle - |1567\rangle$ & $[A_1+B_1, 1]$ \\
         & $|1257\rangle + |1356\rangle$ & $[E_1+E_2, 1]$ \\
    \hline
    \end{tabular}
    \\[1cm]
    \centering
    \caption{\label{tab:irrep_21m} \textbf{Block 12} ($n_B=4$, $d_B=6$): $(|N_v|, j) = (2, 1)$, $C_{2x}=-$. }
    \begin{tabular}{c|c|c}
    \hline
    	Particle number & Rep. w.f. & Reduced from $\rho$ \\
    \hline
        $N=2$ & $i |15\rangle$ & $A_2+B_2$ \\
    \hline
        $N=4$ & $i\left( |1235\rangle + |1567\rangle \right)$ & $A_2+B_2$ \\
        & $i\left( |1257\rangle - |1356\rangle \right)$ & $E_1+E_2$ \\
    \hline
        $N=6$ & $i |\ovl{48}\rangle$ & $A_2+B_2$ \\
    \hline
    \end{tabular}
\end{table}

\begin{table} 
    \centering
    \caption{\label{tab:irrep_40} \textbf{Block 13} ($n_B=1$, $d_B=2$): $(|N_v|, j) = (4, 0)$. }
    \begin{tabular}{c|c|c}
    \hline
    	Particle number & Rep. w.f. & Reduced from $[\rho, j]$ \\
    \hline
    	$N=4$ & $|1256\rangle$ & $A_1+B_1$  \\
    \hline
    \end{tabular}
\end{table}

\begin{table} 
    \centering
    \caption{\label{tab:irrep_11} \textbf{Block 14 \& 15} ($n_B=10$, $d_B=4$): $(|N_v|, j) = (1, 1/2)$, $C_{2x}=\xi =\pm$. }
    \begin{tabular}{c|c|c}
    \hline
    	Particle number & Rep. w.f. & Reduced from $\rho$ \\
    \hline
    	$N=1$ & $i^{\frac{1-\xi}{2}} \left( |1\rangle + \xi |5\rangle \right)$ & $E_1 + E_2$  \\
    \hline
        $N=3$ & $i^{\frac{1-\xi}{2}} \left( |127\rangle + \xi |356\rangle \right)$ & $A_1 + B_1$  \\
         & $i^{\frac{1-\xi}{2}} \left(|123\rangle + \xi |567\rangle \right)$ & $E_1+E_2$ \\
         & $i^{\frac{1-\xi}{2}} \left(|167\rangle - |257\rangle + \xi |235\rangle - \xi |136\rangle\right)$ & $E_1+E_2$ \\
         & $i^{\frac{1-\xi}{2}} \left(|167\rangle + |257\rangle - 2|158\rangle + \xi |235\rangle + \xi |136\rangle - 2\xi |145\rangle \right)$ & $E_1+E_2$ \\
    \hline
        $N=5$ & $i^{\frac{1-\xi}{2}} \left(|\ovl{278}\rangle + \xi |\ovl{346}\rangle \right)$ & $A_1 + B_1$  \\
         & $i^{\frac{1-\xi}{2}} \left(|\ovl{234}\rangle + \xi  |\ovl{678}\rangle \right)$ & $E_1+E_2$ \\
         & $i^{\frac{1-\xi}{2}} \left(|\ovl{238}\rangle - |\ovl{247}\rangle + \xi |\ovl{467}\rangle - \xi |\ovl{368}\rangle \right)$ & $E_1+E_2$ \\
         & $i^{\frac{1-\xi}{2}} \left(|\ovl{238}\rangle + |\ovl{247}\rangle - 2|\ovl{148}\rangle + \xi |\ovl{467}\rangle + \xi |\ovl{368}\rangle - 2\xi |\ovl{458}\rangle \right)$ & $E_1+E_2$ \\
    \hline
        $N=7$ & $i^{\frac{1-\xi}{2}} \left(|\ovl{4}\rangle + \xi |\ovl{8}\rangle \right)$ & $E_1+E_2$ \\
    \hline
    \end{tabular}
\end{table}

\begin{table} 
    \centering
    \caption{\label{tab:irrep_13} \textbf{Block 16 \& 17} ($n_B=2$, $d_B=8$): $(|N_v|, j) = (1, 3/2)$, $C_{2x}=\xi =\pm$. }
    \begin{tabular}{c|c|c}
    \hline
    	Particle number & Rep. w.f. & Reduced from $\rho$ \\
    \hline
        $N=3$ & $i^{\frac{1-\xi}{2}} \left( |135\rangle + \xi |157\rangle \right)$ & $A_1 + B_1$ \\
    \hline
        $N=5$ & $i^{\frac{1-\xi}{2}} \left(|\ovl{248}\rangle + \xi |\ovl{468}\rangle \right)$ & $A_1 + B_1$  \\
    \hline
    \end{tabular}
\end{table}

\begin{table} 
    \centering
    \caption{\label{tab:irrep_31} \textbf{Block 18 \& 19} ($n_B=2$, $d_B=4$): $(|N_v|, j) = (3, 1/2)$, $C_{2x}=\xi =\pm$. }
    \begin{tabular}{c|c|c}
    \hline
    	Particle number & Rep. w.f. & Reduced from $\rho$ \\
    \hline
        $N=3$ & $i^{\frac{1-\xi}{2}} \left( |125\rangle + \xi |156\rangle \right)$ & $A_1 + B_1$ \\
    \hline
        $N=5$ & $i^{\frac{1-\xi}{2}} \left(|\ovl{348}\rangle + \xi |\ovl{478}\rangle \right)$ & $A_1 + B_1$  \\
    \hline
    \end{tabular}
\end{table}

\clearpage

\section{Parametrization of reduced (Nambu) density matrix} \label{appndx:prmtrz_rho0}

A valid parametrization of $\bm{\rho}^0$ (or Nambu density matrix $\bm{\varrho}^0$) is indispensable for the variational method, as it must ensure that all eigenvalues of $\bm{\rho}^0$ lie within $[0,1]$, regardless of whether the system is superconducting or not. We present two such parametrizations: an unconstrained one and a constrained one. In the unconstrained parametrization, we employ a Fermi-Dirac-like distribution function, which is straightforward to implement but can become numerically inaccurate when some eigenvalues of $\bm{\rho}^0$ approach 0 or 1, as occurs in, e.g., orbital-selective Mott insulators or strongly correlated superconductors. In the constrained parametrization, we express $\bm{\rho}^0$ directly in the matrix basis \eqref{def:singlebodybasis} and derive explicit forms of the constraints, yielding a scheme that is numerically stable in all cases.

\subsection{Parameterizing $\rho^0$ ($\varrho^0$) (unconstrained)}
It's obvious that for any Hermitian matrix $\mathcal{X}$, the following construction:
\begin{gather}
    \bm{\rho}^0(\mathcal{X}) = \frac{1}{e^{\mathcal{X}} + 1} \label{rho0_pmtrz:exp}
    \end{gather}
has all of its eigenvalues lies between 0 and 1. Conversely, we can show that any legit density matrix $\bm{\rho}^0$ can be parametrized in this way by realizing that the exponential function $e^x$ is monotone such that we can solve $x_i = \log(1/d_i - 1)$ for each eigenvalue $d_i$ of $\bm{\rho}^0$:
\begin{gather}
\bm{\rho}^0 = \mathcal{U} \mathcal{D} \mathcal{U}^\dagger = \,
\mathcal{U} \begin{pmatrix} \dmat{d_1, d_2, \ddots, d_N } \end{pmatrix} \mathcal{U}^\dagger =\,
\mathcal{U} \begin{pmatrix} \dmat{\frac{1}{e^{x_1} + 1}, \frac{1}{e^{x_2} + 1}, \ddots, \frac{1}{e^{x_N} + 1} } \end{pmatrix} \mathcal{U}^\dagger =\,
\frac{1}{e^{\mathcal{X}} + 1}
\end{gather}
where $\mathcal{X}$ commutes with $\frac{1}{e^{\mathcal{X}} + 1}$ and is defined as:
\begin{gather}
\mathcal{X} = \mathcal{U} \begin{pmatrix} \dmat{x_1, x_2, \ddots, x_N } \end{pmatrix} \mathcal{U}^\dagger 
\end{gather}
We now have shown there's one-to-one correspondence between $\bm{\rho}^0$ and $\mathcal{X}$. Moreover, one can check that $\bm{\rho}^0$ is subjected to some symmetries $[\bm{\rho}^0, P]=0$ if and only if $\mathcal{X}$ satisfies $[\mathcal{X}, P]=0$. This means one can choose the matrix basis \eqref{def:singlebodybasis} to expand $\mathcal{X}=\sum_{l}x_l \mathcal{O}_l$. 

In practice, the Gutzwiller method does not permit the eigenvalues of $\bm{\rho}^0$ to be exactly 0 or 1, as either case would render the matrix $\bm{m}^0$ singular. A simple modification to \eqref{rho0_pmtrz:exp} would make the eigenvalues of $\bm{\rho}^0$ lie within $[\delta_{\text{lb}}, 1 - \delta_{\text{ub}}]$:
\begin{gather}
    \bm{\rho}^0(\mathcal{X}) = \frac{1-(\delta_{\text{lb}} + \delta_{\text{ub}})}{e^{\mathcal{X}} + 1} + \delta_{\text{lb}}\label{rho0_pmtrz:exp}
\end{gather}

\subsubsection{Derivatives of $\rho^0$($\varrho^0$)}
The derivative of $\rho^0$ taken with respect to $x_l$ can be calculated using second-order perturbation theory \eqref{formula:Drvt_2ndPert}:
\begin{gather}
    \pfpx{\bm{\rho}^0_{\alpha \beta}}{x_l} = \sum_{m \neq n} \sum_{i j} \bra{\alpha} \ket{m^{l}_i} \mathcal{F}_{mn; ij} \bra{m^{l}_i} \hat{\mathcal{O}}_{l} \ket{n^{l}_j} \bra{n^{l}_j} \ket{\beta} \\
    \mathcal{F}_{mn; ij} = \delta_{mn} \delta_{ij} \left[-\frac{e^{d_n}}{(e^{d_n} + 1)^2}\right] + (1 - \delta_{mn}) \frac{\frac{1}{e^{d_m}+1} - \frac{1}{e^{d_n}+1}}{d_m - d_n}
\end{gather}
Where $\{ \ket{m^{l}_i} \}$ is the set of re-gauged eigenvectors of $\mathcal{X}$ with eigenvalues $\{ d_m \}$. They're called re-gauged because they also diagonalize the m-degenerate subspace matrix $\mathcal{O}^{l}_{m; ij} \defeq \bra{m^{l}_i} \hat{\mathcal{O}}_{l} \ket{m^{l}_j} = \delta_{ij} \mathcal{O}^{l}_{m; ii}$.

\subsection{Parametrising $\rho^0$($\varrho^0$) (polynomial constraints)}
Although the exponential map method gives a simple and elegant parameterization, it does not provide an accurate derivative modulation when the minimization is close to the boundary (the eigenvalues of $\bm{\rho}^0$ is close to 0 or 1) because $\dfdx{\bm{\rho}^0}{\bm{x}}$ becomes exponentially very small, which leads the minimization to stay at the boundary since $\dfdx{\mathcal{L}}{\bm{x}} = \dfdx{\mathcal{L}}{\bm{\rho}^0} \dfdx{\bm{\rho}^0}{\bm{x}}$ would be small at the boundary. On the light of this, we parameterize $\bm{\rho}^0$ directly as a linear combination of generators of SU(N) (denoted by $\{ \hat{\lambda}_i \}$, which the matrix basis defined in \eqref{def:singlebodybasis} are a special case of) and derive explicit forms of the constraints on the coefficients to ensure that all eigenvalues of $\bm{\rho}^0$ lie within $[\delta_{\text{lb}},1 - \delta_{\text{ub}}]$. $\bm{\rho}^0$ is expressed by $\{ \hat{\lambda}_i \}$ as:
\begin{gather}
    \bm{\rho}^0 = \frac{n_e}{N} \mbbm{1} + \frac{1}{2} \sum_{i=1}^{N^2-1} \lambda_i \hat{\lambda}_i,
\end{gather}
where $n_e$ is a constant which equals the occupation number per site (or differ by an integral number for special Fermi liquid state) for charge U(1) preserving systems and equals $\frac{N}{2}$ for superconducting systems (as $\bm{\rho}^0$ becomes a Nambu density matrix). The coefficients $\{ \lambda_i \}$ are the variational parameters and the constraints on $\{ \lambda_i \}$ can be derived from the characteristic polynomial of $\bm{\rho}^0$ by generalising Kimura's method \cite{KIMURA2003339}:
\begin{theorem}[Boundary for Polynomial Roots]
    The (N-i)-th derivative of the characteristic polynomial $P_{\rho}(x) = \prod_i^N (x-x_i)$ is calculated as ($x_1, x_2, \dots, x_N$ are the eigenvalues of $\bm{\rho}^0$):
    \begin{gather}
        P^{(N-i)}_{\rho}(x) = (N-i)! \sum_{1 \leq j_1 < j_2 < \dots < j_i \leq N}\  \prod_{k=j_1}^{j_i}(x-x_k) \ \ (i=1, \dots , N) \label{def:P_drvt}
    \end{gather}
    Then for any real number $r$, we have:
    \begin{align}
        P_{\rho}^{(j)}(r) \geq 0 \text{,  for all $j \in \{ 0, \dots, N-1 \}$)} &\Leftrightarrow  x_i \leq r \text{,  for all $i \in \{ 1, \dots, N \}$}  \label{cnstr:upperbound} \\
        (-1)^{N-j} P_{\rho}^{(j)}(r) \geq 0 \text{,  for all $j \in \{ 0, \dots, N-1 \}$} &\Leftrightarrow  x_i \geq r \text{,  for all $i \in \{ 1, \dots, N \}$} \label{cnstr:lowerbound}
    \end{align}
\end{theorem}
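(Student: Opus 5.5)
The plan has two parts: first verify the derivative formula \eqref{def:P_drvt}, then prove the two equivalences. For the derivative formula I would argue by induction on the order of differentiation, using the product rule on $P_\rho(x)=\prod_{k=1}^N(x-x_k)$. Differentiating a product of $m$ linear factors gives a sum over ways to delete one factor. After $N-i$ differentiations, each surviving product of $i$ factors, indexed by a subset $J=\{j_1<\dots<j_i\}$, arises once for every ordering in which the $N-i$ deleted factors were removed. That gives $(N-i)!$ copies, which is \eqref{def:P_drvt}, with the product understood to run over $k\in J$. The only input needed here is that $\bm{\rho}^0$ is Hermitian, so all $x_k$ are real.

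For the upper bound \eqref{cnstr:upperbound}, the direction ($\Leftarrow$) is immediate. If every $x_k\le r$, then each factor $r-x_k\ge0$, so every product in \eqref{def:P_drvt} evaluated at $x=r$ is nonnegative, and hence $P^{(j)}_\rho(r)\ge0$ for all $j$.

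For ($\Rightarrow$) I would argue by contradiction.
\begin{itemize}
\item Let $x_{\max}$ be the largest eigenvalue and suppose $x_{\max}>r$.
\item Shift variables to $y=x-r$, so that $Q(y)=P_\rho(y+r)=\sum_{j=0}^N c_j y^j$ with $c_j=P^{(j)}_\rho(r)/j!$.
\item The hypothesis gives $c_j\ge0$ for $j<N$, and $c_N=1$.
\item For $y>0$ every term is then nonnegative and the leading term $y^N$ is strictly positive, so $Q(y)>0$.
\item This means $Q$ has no positive root. But $y=x_{\max}-r>0$ is a root of $Q$, which is the contradiction.
\end{itemize}
This is essentially Descartes' rule of signs in its most elementary form, and it avoids any case analysis on multiplicities.

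For the lower bound \eqref{cnstr:lowerbound} I would reduce to the upper bound by reflection. Define $\tilde P(x)=(-1)^N P_\rho(-x)=\prod_k(x+x_k)$, which is monic with roots $-x_k$. Its derivatives are $\tilde P^{(j)}(x)=(-1)^{N+j}P^{(j)}_\rho(-x)$. The condition $(-1)^{N-j}P^{(j)}_\rho(r)\ge0$ is therefore exactly $\tilde P^{(j)}(-r)\ge0$, since $(-1)^{N+j}=(-1)^{N-j}$. Applying \eqref{cnstr:upperbound} to $\tilde P$ at the point $-r$ gives $-x_i\le -r$, that is $x_i\ge r$.

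There is no real obstacle in this argument. The only care needed is sign bookkeeping in the reflection step, together with keeping track of the fact that the index $j$ in the equivalences runs only up to $N-1$, with the $N$-th coefficient supplied by monicity. In the application, taking $r=1-\delta_{\rm ub}$ and $r=\delta_{\rm lb}$ turns these conditions into explicit polynomial inequalities on the coefficients $\lambda_i$, since the $P^{(j)}_\rho(r)$ are polynomials in $\tr(\bm{\rho}^{0\,k})$.
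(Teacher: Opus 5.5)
Your proof is correct, but it handles the key direction ($\Rightarrow$) by a different route from the paper.

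\textbf{The paper's route.} It isolates one offending eigenvalue, taking WLOG $\delta_N=r-x_N<0$, and introduces the reduced quantities $\tilde P^{(N-1-i)}_\rho(r)=(N-1-i)!\,e_i(\delta_1,\dots,\delta_{N-1})$, where $e_i$ is the $i$-th elementary symmetric polynomial. It then uses the recursion $P^{(N-k)}_\rho(r)=(N-k)\tilde P^{(N-k-1)}_\rho(r)+\tilde P^{(N-k)}_\rho(r)\,\delta_N$. This propagates the sign condition $\tilde P^{(m)}_\rho(r)\le 0$ step by step up the chain until it forces $P^{(N-1)}_\rho(r)<0$, a contradiction. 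For the lower bound it reruns the same recursion with $\eta_k=x_k-r$.

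\textbf{Your route.} You read $P^{(j)}_\rho(r)/j!$ as the Taylor coefficients of $P_\rho$ about $r$. The hypothesis then says the shifted monic polynomial has nonnegative coefficients, so it is strictly positive for $y>0$ and has no root above $r$. You obtain the lower bound from the upper one by the reflection $x\mapsto -x$, and your sign bookkeeping $(-1)^{N+j}=(-1)^{N-j}$ is right.

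\textbf{What each buys.} Your version is shorter and needs neither the relabeling nor the $N$-step inequality chain. It also makes transparent why only $j\le N-1$ enters the conditions: the $j=N$ Taylor coefficient is fixed to $1$ by monicity. The paper's bookkeeping here is slightly inconsistent, setting $\tilde P^{(N-1)}_\rho=1$ but then using it as $(N-1)!$. The paper's argument stays entirely within elementary symmetric functions of the $\delta_k$, which is closer in spirit to the Kimura-style coefficient constraints it builds toward, but it buys nothing essential over yours. You also justify the derivative formula by counting the $(N-i)!$ orderings of deleted factors, and you correctly read its product as running over $k\in\{j_1,\dots,j_i\}$; the paper only asserts this formula.
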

\emph{Proof:} We'll start with \eqref{cnstr:upperbound} and the proof of \eqref{cnstr:lowerbound} follows naturally. We'll only prove sufficiency ($\Rightarrow$) since the necessity ($\Leftarrow$) is trivial. We use proof by contradiction. Define $\delta_k \defeq r-x_k$. Suppose at least one $\delta_i<0$ and we can take $\delta_N < 0$  without loss of generality. Let:
\begin{align}
\tilde{P}^{(N-1-i)}_{\rho}(r) =& (N-1-i)! \sum_{1 \leq j_1 < j_2 < \dots < j_i<N}\delta_{j_1}\delta_{j_2} \dots \delta_{j_i} \ \ (i=1, \dots , N-1) \\
\tilde{P}^{(-1)}_{\rho} \defeq& 0 \\
\tilde{P}^{(N-1)}_{\rho} =& 1
\end{align}
We rewrite \eqref{def:P_drvt}:
\begin{align}
P_{\rho}(r) =& \tilde{P}_{\rho}(r) \delta_N \label{prf:0}\\
P_{\rho}^{(1)}(r) =&  \tilde{P}_{\rho}(r) + \tilde{P}^{(1)}_{\rho}(r) \delta_N \label{prf:1}\\
\dots & \nonumber \\
P_{\rho}^{(N-k)}(r) =& (N-k) \tilde{P}^{(N-k-1)}_{\rho}(r) + \tilde{P}^{(N-k)}_{\rho}(r) \delta_N \\
\dots & \nonumber \\
P_{\rho}^{(N-1)}(r) =&  (N-1) \tilde{P}^{(N-2)}_{\rho}(r) +  (N-1)!\delta_N
\end{align}
From \eqref{prf:0} we conclude $\tilde{P}_{\rho}(r) \leq 0$ since we've required $P^{(j)}_{\rho}(r) \geq 0$; Following \eqref{prf:1} we conclude $\tilde{P}'_{\rho}(r) \leq 0$. Continuing this deduction and we eventually get $P^{(N-1)}_{\rho}(r) < 0$ which contradicts to our assumption. To prove \eqref{cnstr:lowerbound}, we can define:
\begin{align}
    Q^{(N-i)}_{\rho}(r) \defeq& (-1)^i P^{(N-i)}_{\rho}(r) \\
    =& \sum_{1 \leq j_1 < j_2 < \dots < j_i \leq N} \eta_{j_1} \eta_{j_2} \dots \eta_{j_i} \ \ (i=1, \dots , N)\\[1em]
    \eta_k \defeq& x_k - r \\[1em]
    \tilde{Q}^{(N-1-i)}_{\rho}(r) \defeq& (-1)^i \tilde{P}^{(N-1-i)}_{\rho}(r) \\
    =& \sum_{1 \leq j_1 < j_2 < \dots < j_i<N}\eta_{j_1}\eta_{j_2} \dots \eta_{j_i} \ \ (i=1, \dots , N-1)
\end{align}
To prove sufficiency in \eqref{cnstr:lowerbound}, we can assume $\eta_N<0$. We can write down the same recursive equations for $Q_{\rho}(r)$:
\begin{align}
    Q_{\rho}(r) =& \tilde{Q}_{\rho}(r) \eta_N \\
    \dots & \nonumber \\
    Q_{\rho}^{(N-k)}(r) =& (N-k) \tilde{Q}^{(N-k-1)}_{\rho}(r) + \tilde{Q}^{(N-k)}_{\rho}(r) \eta_N \\
    \dots & \nonumber \\
    Q_{\rho}^{(N-1)}(r) =& (N-1) \tilde{Q}^{(N-2)}_{\rho}(r) +  (N-1)!\eta_N
\end{align}
QED.\\

Now we can write down the constraints. $P^{(N-i)}_{\rho}(x)$ can be expressed as:
\begin{gather}
    P^{(N-i)}_{\rho}(x) = \sum_{j=0}^{i} (-1)^j a_j \frac{(N-j)!}{(i-j)!} x^{i-j}, \label{eq:P_drvt_aj}
\end{gather}
where the coefficients $\{ a_j \}$ can be solved using Faddeev–LeVerrier algorithm:
\begin{equation}
    \begin{aligned}
        M_0 &=0  &&a_0=1 \\
        M_k &=\hat{\rho} M_{k-1} + (-1)^{k-1} a_{k-1}\mbbm{I}_N && a_k = (-1)^{k-1} \frac{1}{k}\Tr(\hat{\rho} M_k) 
    \end{aligned} \label{method:Faddev-LeVerrier}
\end{equation}

The constraints are then calculated by plugging $r=\delta_{\text{lb}}$ and $r=1 - \delta_{\text{ub}}$ in \eqref{cnstr:lowerbound} and \eqref{cnstr:upperbound} respectively. In the following, we will explicitly calculate the constraints for $N=4$ case, which is the case for TBG. 

\subsubsection{Constraints for SC state of TBG}
The spectrum of Nambu density matrix is symmetric about $1/2$, which implies $\delta_{\text{lb}} = \delta_{\text{ub}} = \delta$. The density matrix for `one valley' is a $4 \times 4$ Hermitian matrix \eqref{eq:TBG_varrho0}:
\begin{gather}
    \bm{\varrho}^0_{\eta} = \begin{pmatrix}
        \bm{\rho}^0_{\eta} & \bm{\Delta}^0_{\eta} \\
        \bm{\Delta}^0_{\eta} & 1 - \bm{\rho}^0_{\bar{\eta}}
    \end{pmatrix},
\end{gather}
which is spanned by (both $s$ and $\sigma$ denotes Pauli matrices):\textbf{}
\begin{gather}
    \{ \hat{\lambda}_{ia} \} = \{ s_z \otimes \sigma_0, s_z \otimes \sigma_x, s_x \otimes \sigma_0, s_x \otimes \sigma_x \} \label{basis:symmrho}
\end{gather}
For the convenience of later algebras, we enlarge the basis set to:
\begin{gather}
        \{ \hat{\lambda}_{ia} \}_{\text{aux}} = \{ s_z \otimes \sigma_0, s_z \otimes \sigma_x, s_x \otimes \sigma_0, s_x \otimes \sigma_x, s_y \otimes \sigma_0, s_y \otimes \sigma_x, s_0 \otimes \sigma_x \}
\end{gather}
From which we can calculate its structrual constants (The left side belongs to the original basis set \eqref{basis:symmrho} and the right side belongs to the auxiliary basis set):
\begin{gather}
    [\hat{\lambda}_{ia}, \hat{\lambda}_{jb}] = 2 i \epsilon_{ijk} \delta_{ab} \hat{\lambda}_{k0} + 2 i \epsilon_{ijk} (1 - \delta_{ab}) \hat{\lambda}_{kx} \\
    \{ \hat{\lambda}_{ia}, \hat{\lambda}_{jb} \} = 2 \delta_{ij} \delta_{ab} \mbbm{1}_4 + 2 \delta_{ij} (1 - \delta_{ab}) \hat{\lambda}_{0x}
\end{gather}
We can then calculate the traces of products of $\hat{\lambda}$ which will be useful in calculating powers of $\bm{\rho}^0$ (up to power of 4):
\begin{align}
    \Tr(\hat{\lambda}_{ia} \hat{\lambda}_{jb}) =& 4\delta_{ij} \delta_{ab} 
        \\[2em]
    \Tr(\hat{\lambda}_{ia} \hat{\lambda}_{jb} \hat{\lambda}_{kc}) =& \Tr(\frac{1}{2} \left( [\hat{\lambda}_{ia}, \hat{\lambda}_{jb}] + \{ \hat{\lambda}_{ia}, \hat{\lambda}_{jb} \} \right) \hat{\lambda}_{kc}) \\
        =& 2 i\epsilon_{ijk} \delta_{ab} \delta_{0c} + 2 i\epsilon_{ijk} (1 - \delta_{ab}) \delta_{xc} + 2 \delta_{ij} (1-\delta_{ab}) \delta_{k0} \delta_{xc} 
        \\[2em]
    \Tr(\hat{\lambda}_{ia} \hat{\lambda}_{jb} \hat{\lambda}_{kc} \hat{\lambda}_{ld}) =& \Tr(\frac{1}{2} \left( [\hat{\lambda}_{ia}, \hat{\lambda}_{jb}] + \{ \hat{\lambda}_{ia}, \hat{\lambda}_{jb} \} \right) \frac{1}{2} \left( [\hat{\lambda}_{kc}, \hat{\lambda}_{ld}] + \{ \hat{\lambda}_{kc}, \hat{\lambda}_{ld} \} \right)) \\
        =& \Tr\left[\left( i\epsilon_{ijp} \delta_{ab} \hat{\lambda}_{p0} +  i\epsilon_{ijp} (1 - \delta_{ab}) \hat{\lambda}_{px} + \delta_{ij} \delta_{ab} \mbbm{1}_4 +  \delta_{ij} (1 - \delta_{ab}) \hat{\lambda}_{0x} \right) \right. \nonumber \\
        & \left. \times \left( i\epsilon_{klq} \delta_{cd} \hat{\lambda}_{q0} +  i\epsilon_{klq} (1 - \delta_{cd}) \hat{\lambda}_{qx} + \delta_{kl} \delta_{cd} \mbbm{1}_4 +  \delta_{kl} (1 - \delta_{cd}) \hat{\lambda}_{0x} \right)\right] \\[1em]
        =& - 4 \epsilon_{ijp} \epsilon_{klq} \delta_{cd} \delta_{ab} \delta_{pq} - 4\epsilon_{ijp} \epsilon_{klq} (1-\delta_{ab}) (1-\delta_{cd}) \delta_{pq} + 4 i \epsilon_{ijp} \delta_{kl}(1-\delta_{ab}) (1-\delta_{cd}) \delta_{p0} \nonumber \\
        &+ 4 i \delta_{ij} \epsilon_{klq}(1-\delta_{ab}) (1-\delta_{cd}) \delta_{q0} + 4 \delta_{ij} \delta_{kl}(1- \delta_{ab}) (1-\delta_{cd}) + 4 \delta_{ij} \delta_{kl} \delta_{ab} \delta_{cd} \\[1em]
        =& - 4 \epsilon_{ijp} \epsilon_{klp} \delta_{cd} \delta_{ab} - 4\epsilon_{ijp} \epsilon_{klp}(1- \delta_{cd}) (1 - \delta_{ab}) + 4i \epsilon_{ij0}\delta_{kl}(1-\delta_{ab})(1-\delta_{cd}) \nonumber \\
        &+ 4 i \delta_{ij} \epsilon_{kl0} (1-\delta_{ab})(1-\delta_{cd}) + 4 \delta_{ij} \delta_{kl}(1- \delta_{ab}) (1-\delta_{cd}) + 4 \delta_{ij} \delta_{kl} \delta_{ab} \delta_{cd} \\[1em]
        =& - 4 \epsilon_{ijp} \epsilon_{klp} \delta_{cd} \delta_{ab} - 4\epsilon_{ijp} \epsilon_{klp}(1- \delta_{cd}) (1 - \delta_{ab}) + 4 \delta_{ij} \delta_{kl}(1- \delta_{ab}) (1-\delta_{cd}) + 4 \delta_{ij} \delta_{kl} \delta_{ab} \delta_{cd} \label{eq:trace4prod}
\end{align}
The last equal sign is because the anti-symmetric tensor of Pauli matrices does not include the identity matrix, such that $\epsilon_{ij0}=0$. If we restrict the basis set to \eqref{basis:symmrho}, then $\Tr(\hat{\lambda}_{ia} \hat{\lambda}_{jb} \hat{\lambda}_{kc}) = 0$ since $k \neq y \text{ or } 0$. 
We use the above trace formulas to calculate $\Tr(\hat{\rho}^n)$
\begin{align}
    \Tr\hat{\rho} =& 2 \label{eq:Tr_rho1}\\
    \Tr(\hat{\rho}^2) =& \frac{1}{2^2} \left(4 +4 \sum_{\substack{i=z,x \\ a=0, x}} \lambda^2_{ia} \right) = 1+ \Gamma \label{eq:Tr_rho2} \\
    \Tr(\hat{\rho}^3) =& \frac{1}{2^3} \left(4 + 12 \sum_{\substack{i=z,x \\ a=0, x}} \lambda^2_{ia} \right) = \frac{1}{2} + \frac{3}{2} \Gamma \label{eq:Tr_rho3}\\
    \Tr(\hat{\rho}^4) =& \frac{1}{2^4} \left(4+ 24 \sum_{\substack{i=z,x \\ a=0, x}} \lambda^2_{ia} + 4\sum_{\substack{i,k=z, x\\a,c=0, x}}\lambda_{ia} \lambda_{i \bar{a}} \lambda_{kc} \lambda_{k \bar{c}} + 4 \sum_{\substack{i,k=z, x\\a,c=0, x}} \lambda^2_{ia} \lambda^2_{kc} \right) \label{eq:Tr_rho4} \\
    =& \frac{1}{4} + \frac{3}{2} \Gamma + \frac{1}{4} \mathcal{Z}^2 + \frac{1}{4} \Gamma^2 \\[2em]
    \Gamma(\bm{\lambda}) =& \sum_{\substack{i=z,x \\ a=0, x}} \lambda^2_{ia} , \quad \mathcal{Z}(\bm{\lambda}) = \sum_{\substack{i=z, x\\a=0, x}}\lambda_{ia} \lambda_{i \bar{a}}
\end{align}
Where the last 2 terms in \eqref{eq:Tr_rho4} are calculated from the 4-product trace \eqref{eq:trace4prod}. Here we show how the first 2 terms (anti-symmetric) from \eqref{eq:trace4prod} vanish after summing over the indices:
\begin{align}
    \sum_{\substack{i,j,k,l=z,x\\a,b,c,d=0,x}} \lambda_{ia} \lambda_{jb} \lambda_{kc} \lambda_{ld} \left(-2 \epsilon_{ijp} \epsilon_{klp} \delta_{cd} \delta_{ab} \right) =& -2 \sum_{\substack{i,j,k,l=z,x\\a,c=0,x}} \lambda_{ia} \lambda_{ja} \lambda_{kc} \lambda_{lc} \epsilon_{ijy} \epsilon_{kly} \nonumber \\
    =& -2 \sum_{\substack{k,l=z,x\\a,c=0,x}} \lambda_{za} \lambda_{xa} \lambda_{kc} \lambda_{lc} \epsilon_{zxy} \epsilon_{kly} + \lambda_{xa} \lambda_{za} \lambda_{kc} \lambda_{lc} \epsilon_{xzy} \epsilon_{kly} \nonumber \\
    =& -2 \sum_{\substack{k,l=z,x\\a,c=0,x}} \lambda_{za} \lambda_{xa} \lambda_{kc} \lambda_{lc} (1-1)\epsilon_{kly} =0 \\[2em]
    \sum_{\substack{i,j,k,l=z,x\\a,b,c,d=0,x}} \lambda_{ia} \lambda_{jb} \lambda_{kc} \lambda_{ld} \left(-2 \epsilon_{ijp} \epsilon_{klp} (1-\delta_{cd}) (1-\delta_{ab}) \right) =& -2 \sum_{\substack{i,j,k,l=z,x\\a,c=0,x}} \lambda_{ia} \lambda_{j\bar{a}} \lambda_{kc} \lambda_{l\bar{c}} \epsilon_{ijy} \epsilon_{kly} \nonumber\\
    =& -2 \sum_{\substack{k,l=z,x\\a,c=0,x}} \lambda_{za} \lambda_{x\bar{a}} \lambda_{kc} \lambda_{l\bar{c}} \epsilon_{zxy} \epsilon_{kly} + \lambda_{xa} \lambda_{z\bar{a}} \lambda_{kc} \lambda_{l\bar{c}} \epsilon_{xzy} \epsilon_{kly} \nonumber \\
    &= -2 \sum_{\substack{k,l=z,x\\a,c=0,x}} \lambda_{za} \lambda_{x\bar{a}} \lambda_{kc} \lambda_{l\bar{c}} \epsilon_{zxy} \epsilon_{kly} + \lambda_{x\bar{a}} \lambda_{za} \lambda_{k\bar{c}} \lambda_{lc} \epsilon_{xzy} \epsilon_{kly} \nonumber\\
    &= -2 \sum_{\substack{k,l=z,x\\a,c=0,x}} \lambda_{za} \lambda_{x\bar{a}} \lambda_{kc} \lambda_{l\bar{c}} (1-1)\epsilon_{kly} =0
\end{align}

Now we can calculate the coefficients $\{ a_j\}$ in \eqref{eq:P_drvt_aj} using Faddeev–LeVerrier algorithm \eqref{method:Faddev-LeVerrier}:
\begin{align}
a_1 &= \Tr\hat{\rho} \\
a_2 &= \frac{1}{2}(\Tr\hat{\rho})^2 - \frac{1}{2}\Tr\hat{\rho}^2 \\
a_3 &= \frac{1}{3} \Tr \hat{\rho}^3 - \frac{1}{2}\Tr\hat{\rho} \Tr\hat{\rho}^2 + \frac{1}{6} (\Tr\hat{\rho})^3 \\
a_4 &= -\frac{1}{4}\Tr \hat{\rho}^4 + \frac{1}{3}\Tr\hat{\rho} \Tr\hat{\rho}^3 - \frac{1}{4} (\Tr\hat{\rho})^2\Tr\hat{\rho}^2 + \frac{1}{8}(\Tr\hat{\rho}^2)^2 + \frac{1}{24} (\Tr\hat{\rho})^4
\end{align}
Using ~Eq.\eqref{eq:Tr_rho1}-\eqref{eq:Tr_rho4}, we get:
\begin{align}
    a_1 =& 2\\
    a_2 =& \frac{3}{2} - \frac{1}{2} \Gamma(\bm{\lambda}) \\
    a_3 =& \frac{1}{2} - \frac{1}{2} \Gamma(\bm{\lambda}) \\
    a_4 =& \frac{1}{16}\left(1 - 2\Gamma(\bm{\lambda}) + \Gamma(\bm{\lambda})^2 - \mathcal{Z}(\bm{\lambda})^2 \right)
\end{align}
For $N=4$, we have explicitly from \eqref{eq:P_drvt_aj}:
\begin{align}
    P^{(3)}_{\rho}(x) =& 24x - 6a_1 \\
    P^{(2)}_{\rho}(x) =& 12x^2 - 6a_1 x + 2 a_2 \\
    P^{(1)}_{\rho}(x) =& 4x^3 - 3a_1 x^2 + 2a_2 x - a_3 \\
    P^{(0)}_{\rho}(x) =& x^4 - a_1 x^3 + a_2 x^2 - a_3 x + a_4
\end{align}
And the inequalities \eqref{cnstr:lowerbound} and \eqref{cnstr:upperbound} are:
\begin{align}
    P^{(j)}_{\rho} (1-\delta) \geq& 0 \\
    (-1)^{N-j}  P^{(j)}_{\rho} (\delta) \geq& 0
\end{align}
\begin{itemize}
    \item $j=3$:
        \begin{gather}
            \delta \leq \frac{1}{2}
        \end{gather}
    \item $j=2$:
        \begin{gather}
            \Gamma(\bm{\lambda}) \leq 12 \delta (\delta - 1) + 3
        \end{gather}
    \item $j=1$:
        $P^{(1)}_{\rho}(x)$ is calculated as:
        \begin{gather}
            P^{(1)}_{\rho}(x) = 4x^3 - 6x^2+3x - \frac{1}{2} + (\frac{1}{2} - x) \Gamma
        \end{gather}
        let $f(x)=4x^3 - 6x^2+3x - \frac{1}{2}$, we can check that:
        \begin{gather}
            f(x) = - f(1-x)
        \end{gather}
        Therefore, the two inequalities are equivalent and we only get one constraint:
        \begin{gather}
            \Gamma(\bm{\lambda}) \leq \frac{-2(4\delta^3 - 6\delta^2+3\delta - \frac{1}{2})}{1 - 2\delta} = (1-2\delta)^2
        \end{gather}
    \item $j=0$:
        \begin{align}
            \left[x^4 - 2x^3+\frac{3}{2}x^2 - \frac{1}{2}x + \frac{1}{16}\right] - \left(\frac{1}{2}x^2 - \frac{1}{2}x +\frac{1}{8}\right) \Gamma + \frac{1}{16}\Gamma^2 - \frac{1}{16} \mathcal{Z} \geq 0 \ \ (x=\delta \text{ and } 1-\delta) \label{ineq:j=0}
        \end{align}
        Notice that the constant term as well as the coefficient of $\Gamma$ are invariant under the transformation $x \rightarrow 1-x$. We only get one constraint:
        \begin{gather}
            \left(\frac{1}{2}\delta^2 - \frac{1}{2}\delta +\frac{1}{8}\right) \Gamma(\bm{\lambda}) - \frac{1}{16}\Gamma(\bm{\lambda})^2 + \frac{1}{16} \mathcal{Z}(\bm{\lambda})^2 \leq \left[\delta^4 - 2\delta^3+\frac{3}{2}\delta^2 - \frac{1}{2}\delta + \frac{1}{16}\right]
        \end{gather}
        multiplying both sides by 16 gives:
        \begin{gather}
            \left(8\delta^2 - 8\delta + 2\right) \Gamma(\bm{\lambda}) - \Gamma(\bm{\lambda})^2 + \mathcal{Z}(\bm{\lambda})^2 \leq \left[16 \delta^4 - 32\delta^3+24\delta^2 - 8\delta + 1\right]
        \end{gather}
\end{itemize}
To summarise, the constraints on $\bm{\lambda}$ are:
\begin{align}
    \delta \leq& \frac{1}{2} \\[1em]
    \Gamma(\bm{\lambda}) \leq& (1-2\delta)^2 \\[1em]
    \left(8\delta^2 - 8\delta + 2\right) \Gamma(\bm{\lambda}) - \Gamma(\bm{\lambda})^2 + \mathcal{Z}(\bm{\lambda})^2 \leq& \left[16 \delta^4 - 32\delta^3+24\delta^2 - 8\delta + 1\right]
\end{align}
\section{Gauge Freedom in SC Gutzwiller Wavefunction}\label{appdx:gauge}
It was first appeared to us that the numerical results of our variational approach exist certain gauge freedom, where different $\bm{\varrho}^0$ up to a unitary transformation can give the same solution $\ket{\Psi_G}$. We later realized that such gauge freedom is a unitary transformation $\mathcal{U}$ in the quasi-particle space as a consequence of charge-U(1) symmetry breaking in both $\ket{\Phi_0}$ and $\hat{P}_G$:
\begin{gather}
    \ket{\Psi_G} = \left[\hat{P}_G \hat{\mathcal{U}}^\dagger \right] \left[\hat{\mathcal{U}} \ket{\Phi_0}\right]. \label{gauge_qp}
\end{gather}
Notice that $\mathcal{U}$ \textbf{must commute with every symmetry operation of the system}. We first demonstrate it with s-wave SC state of TBG with $C_{3z}$ symmetry, whose effective $\bm{\varrho}^0$ is a $2\times2$ matrix (per valley per orbital):
\begin{gather}
    \bm{\varrho}^0_{\eta \alpha} = \bra{\Phi_0} \begin{pmatrix}
        \hat{f}^\dagger_{\alpha \eta \uparrow} \hat{f}_{\alpha \eta \uparrow} & \hat{f}_{\alpha \bar{\eta} \downarrow} \hat{f}_{\alpha \eta \uparrow} \\
          \hat{f}^\dagger_{\alpha \eta \uparrow} \hat{f}^\dagger_{\alpha \bar{\eta} \downarrow} & \hat{f}_{\alpha \bar{\eta} \downarrow} \hat{f}^\dagger_{\alpha \bar{\eta} \downarrow}
    \end{pmatrix} \ket{\Phi_0},
\end{gather}
which can be rotated in the iso-spin space formed by Nambu-spinors:
\begin{gather}
    \bm{\varrho}^0_{\eta \alpha} \to \mathcal{D}_{y}(\theta) \bm{\varrho}^0_{\eta \alpha} \mathcal{D}_{y}^\intercal (\theta) \\
    \mathcal{D}_y (\theta) = \begin{pmatrix}
        \cos\left(\frac{\theta}{2}\right) & - \sin\left(\frac{\theta}{2}\right) \\
        \sin\left(\frac{\theta}{2}\right) & \cos\left(\frac{\theta}{2}\right)
    \end{pmatrix} \label{rep:Dy}
\end{gather}
It's easy to check that $\hat{\mathcal{D}}_y(\theta)$ commutes with all the symmetry operations \eqref{def:symm_operation_f}. We can then solve a $\theta$ that diagonalizes $\bm{\varrho}^0_{\eta \alpha}$ where the anomalous terms are zero. This implies that the real variational freedom for $\mathcal{L}_{\text{SCF}}[\mu, \bm{\varrho}^0]$ of SC state equals to that of a normal state, which cuts the computational cost by half if one uses numerical derivative on $\mathcal{L}_{\text{SCF}}[\mu, \bm{\varrho}^0]$. We propose the above statement is true for any SC state calculation in general, though we don't have a rigorous proof yet. As a support of our conjecture, we solve the gauge $\mathcal{U}$ that kills the anomalous term in the s+d-wave case in the remaining part of this section.

When $C_{3z}$ is broken, the effective Nambu density matrix (per valley) is (we abbreviate $\uparrow \eta \rightarrow \Uparrow$ and $\downarrow \bar{\eta} \rightarrow \Downarrow$ for simplicity):
\begin{gather*}
    \bm{\varrho}^0_{\eta} = \langle \Phi_0|
    \begin{pmatrix}
        \hat{f}^\dagger_{\alpha \Uparrow} \hat{f}_{\alpha \Uparrow} & \hat{f}^\dagger_{\bar{\alpha} \Uparrow} \hat{f}_{\alpha \Uparrow} & \hat{f}_{\alpha \Downarrow} \hat{f}_{\alpha \Uparrow} & \hat{f}_{\bar{\alpha} \Downarrow} \hat{f}_{\alpha \Uparrow} \\
        \hat{f}^\dagger_{\alpha \Uparrow} \hat{f}_{\bar{\alpha} \Uparrow} & \hat{f}^\dagger_{\bar{\alpha} \Uparrow} \hat{f}_{\bar{\alpha} \Uparrow} & \hat{f}_{\alpha \Downarrow} \hat{f}_{\bar{\alpha} \Uparrow} & \hat{f}_{\bar{\alpha} \Downarrow} \hat{f}_{\bar{\alpha} \Uparrow} \\
        \hat{f}^\dagger_{\alpha \Uparrow} \hat{f}^\dagger_{\alpha \Downarrow} & \hat{f}^\dagger_{\bar{\alpha} \Uparrow} \hat{f}^\dagger_{\alpha \Downarrow} & \hat{f}_{\alpha \Downarrow} \hat{f}^\dagger_{\alpha \Downarrow} & \hat{f}_{\bar{\alpha} \Downarrow} \hat{f}^\dagger_{\alpha \Downarrow} \\
        \hat{f}^\dagger_{\alpha \Uparrow} \hat{f}^\dagger_{\bar{\alpha} \Downarrow} & \hat{f}^\dagger_{\bar{\alpha} \Uparrow} \hat{f}^\dagger_{\bar{\alpha} \Downarrow} & \hat{f}_{\alpha \Downarrow} \hat{f}^\dagger_{\bar{\alpha} \Downarrow} & \hat{f}_{\bar{\alpha} \Downarrow} \hat{f}^\dagger_{\bar{\alpha} \Downarrow}
    \end{pmatrix} |\Phi_0 \rangle = 
    \begin{pmatrix}
        n^0_s & n^0_d & \Delta^0_s & \Delta^0_d \\
        n^0_d & n^0_s & \Delta^0_d & \Delta^0_s \\
        \Delta^0_s & \Delta^0_d  & 1- n^0_s & -n^0_d\\
        \Delta^0_d & \Delta^0_s & -n^0_d & 1- n^0_s 
    \end{pmatrix}
\end{gather*}
Given the following unitary transform:
\begin{align*}
    \hat{\mathcal{G}}_s \hat{f}^\dagger_{\alpha \Uparrow} \hat{\mathcal{G}}_s^{-1} =& \hat{f}^\dagger_{\alpha \Uparrow} \cos(\frac{\theta_s}{2}) + \hat{f}_{\alpha  \Downarrow} \sin(\frac{\theta_s}{2}) ,\quad & \hat{\mathcal{G}}_d \hat{f}^\dagger_{\alpha \Uparrow} \hat{\mathcal{G}}_d^{-1} =& \hat{f}^\dagger_{\alpha \Uparrow} \cos(\frac{\theta_d}{2}) + \hat{f}_{\bar{\alpha}  \Downarrow} \sin(\frac{\theta_d}{2}) \\
    \hat{\mathcal{G}}_s \hat{f}_{\alpha \Downarrow} \hat{\mathcal{G}}^{-1}_s =& - \hat{f}^\dagger_{\alpha \Uparrow} \sin(\frac{\theta_s}{2}) + \hat{f}_{\alpha \Downarrow} \cos(\frac{\theta_s}{2}) ,\quad & \hat{\mathcal{G}}_d \hat{f}_{\bar{\alpha} \Downarrow} \hat{\mathcal{G}}^{-1}_d =& - \hat{f}^\dagger_{\alpha \Uparrow} \sin(\frac{\theta_d}{2}) + \hat{f}_{\bar{\alpha} \Downarrow} \cos(\frac{\theta_d}{2})
\end{align*}
Which transforms $\bm{\varrho}^0_{\eta}$ as:
\begin{gather*}
    \bm{\varrho}^0_{\eta} \xlongrightarrow{\hat{\mathcal{G}}_{s(d)}} \mathcal{D}_{s(d)}^\dagger(\theta / 2) \bm{\varrho}^0_{\eta} \mathcal{D}_{s(d)}(\theta / 2) \\[1.5em]
    \mathcal{D}_{s}(\theta_s /2) = \begin{pmatrix}
    \cos(\frac{\theta_s}{2}) & 0 & -\sin(\frac{\theta_s}{2}) & 0 \\
    0 & \cos(\frac{\theta_s}{2}) & 0 & -\sin(\frac{\theta_s}{2}) \\
    \sin(\frac{\theta_s}{2}) & 0 & \cos(\frac{\theta_s}{2}) & 0 \\
    0 & \sin(\frac{\theta_s}{2}) & 0 & \cos(\frac{\theta_s}{2})
    \end{pmatrix} \\[1em]
    \mathcal{D}_{d}(\theta_d /2) = \begin{pmatrix}
    \cos(\frac{\theta_d}{2}) & 0 & 0 & -\sin(\frac{\theta_d}{2}) \\
    0 & \cos(\frac{\theta_d}{2}) & -\sin(\frac{\theta_d}{2}) & 0 \\
    0 & \sin(\frac{\theta_d}{2}) & \cos(\frac{\theta_d}{2}) & 0 \\
    \sin(\frac{\theta_d}{2}) & 0 & 0 & \cos(\frac{\theta_d}{2})
    \end{pmatrix}   
\end{gather*}

Our goal is to solve for $\theta_s$ and $\theta_d$ such that the transformed anamolous order parameters vanishes: $\expval{\hat{\mathcal{G}}_d \hat{\mathcal{G}}_s \hat{f}_{\alpha \Uparrow} \hat{f}_{\alpha \Downarrow} \hat{\mathcal{G}}^{-1}_s \hat{\mathcal{G}}^{-1}_d}_0 = 0$ and $\expval{\hat{\mathcal{G}}_d \hat{\mathcal{G}}_s \hat{f}_{\alpha \Uparrow} \hat{f}_{\bar{\alpha} \Downarrow} \hat{\mathcal{G}}^{-1}_s \hat{\mathcal{G}}^{-1}_d}_0 = 0$. Without loss of generality, we first apply $\hat{\mathcal{G}}_s$ and the transformed density matrix is:
\begin{equation}
    \label{transG:step1}
    \begin{aligned}
        \Delta^0_s \xlongrightarrow{\hat{\mathcal{G}}_s} & \Delta^{\text{I}}_s = \frac{1 - 2 n^0_s}{2} \sin(\theta_s) + \Delta^0_s \cos(\theta_s) \\
        \Delta^{0}_d \xlongrightarrow{\hat{\mathcal{G}}_s} & \Delta^{\text{I}}_d = - n^0_d \sin(\theta_s) + \Delta^0_d \cos(\theta_s) \\
        n^0_s \xlongrightarrow{\hat{\mathcal{G}}_s} & n^{\text{I}}_s = n^0_s \cos(\theta_s) + \Delta^0_s \sin(\theta_s) + \frac{1}{2}(1 - \cos(\theta_s)) \\
        n^0_d \xlongrightarrow{\hat{\mathcal{G}}_s} & n^{\text{I}}_d = n^0_d \cos(\theta_s) + \Delta^0_d \sin(\theta_s)
    \end{aligned}
\end{equation}
Next, we apply $\hat{\mathcal{G}}_d$ to $\bm{\varrho}^{\text{I}}$:
\begin{equation}
    \label{transG:step2}
    \boxed{\begin{aligned}
        \Delta^{\text{I}}_s \xlongrightarrow{\hat{\mathcal{G}}_d} & \Delta^{\text{II}}_s = - n^{\text{I}}_d \sin(\theta_d) + \Delta^{\text{I}}_s \cos(\theta_d) =0 \\
        \Delta^{\text{I}} _d \xlongrightarrow{\hat{\mathcal{G}}_d} & \Delta^{\text{II}}_d = \frac{1 - 2 n^{\text{I}}_s}{2} \sin(\theta_d) + \Delta^{\text{I}}_d \cos(\theta_d) =0 
    \end{aligned}} 
\end{equation}
which solves $\theta_d$. The necessary and sufficient condition for \eqref{transG:step2} to be solvable is:
\begin{gather}
    \Delta^{\text{I}}_s \left(2 n^{\text{I}}_s - 1\right) = 2 \Delta^{\text{I}}_d n^{\text{I}}_d
\end{gather}
from which we can solve for $\theta_s$: 
\begin{align}
    \Delta^{\text{I}}_s \left(2 n^{\text{I}}_s - 1\right) =& \left[\left(\Delta^0_s\right)^2 - \left(n^0_s - 1/2\right)^2\right] \sin(2 \theta_s) + \Delta^0_s \left(2 n^0_s -1\right) \cos(2 \theta_s) \\
    2 \Delta^{\text{I}}_d n^{\text{I}}_d =& \left[\left(\Delta^0_d\right)^2 - \left(n^0_d\right)^2\right] \sin(2 \theta_d) + 2 n^0_d \Delta^0_d \cos(2 \theta_d)
\end{align}
Thus, $\theta_s$ is given by:
\begin{equation}
    \boxed{\left[\left(\Delta^0_s\right)^2 - \left(n^0_s - 1/2\right)^2 - \left(\Delta^0_d\right)^2 + \left(n^0_d\right)^2\right] \sin(2 \theta_s) = \left[2 n^0_d \Delta^0_d - \Delta^0_s \left(2 n^0_s -1\right)\right] \cos(2 \theta_s)} \label{sol:theta_s}
\end{equation}
We can do a sanity check by setting $\Delta^0_d = n^0_d = 0$, and the solution is the same as the lone s-wave case. The gauge makes anomalous terms vanish are solved by \eqref{transG:step2} and \eqref{sol:theta_s}.

\subsection{Projector weights in SC state}\label{appdx:project_weight}
To better understand the local correlations of SC-SC state in $\nu=2+x$ TBG and its comparison with normal states (FL and sFL states), one can factorize the projector $\hat{P}$ into components having different commutation relations with the $f$-electron number operator: 
\begin{equation}
    \begin{gathered}
        \hat{P} = \sum_{n} \hat{P}_n, \quad [\hat{P}_n, \hat{N}_f] = n \hat{P}_n, \quad \text{$n$ is even integer}
    \end{gathered}
\end{equation}
From \eqref{GA:normalization}, we have:
\begin{gather}
    1 = \sum_{mn} \bra{\Phi_0} \hat{P}^\dagger_m \hat{P}_n \ket{\Phi_0} = \sum_{mn} \sum_{\substack{J_1 J_2 I \\ \abs{J_1} - \abs{I} = m \\ \abs{J_2} - \abs{I} = n}} \bm{\Lambda}^\dagger_{m; J_1 I} \bm{\Lambda}_{n; I J_2} \bra{\Phi_0} \ket{J_1} \bra{J_2} \ket{\Phi_0} 
\end{gather}
Under the natural gauge where $\bm{\Delta}^0 = 0$ in $\bm{\varrho}^0$, the charge U(1) breaking off-diagonal terms in $\bm{m}^0$ vanish, thus $\bra{\Phi_0} \ket{J_1} \bra{J_2} \ket{\Phi_0} \propto \delta_{\abs{J_1}, \abs{J_2}}$. Therefore, the cross terms with $m \neq n$ vanish and can define the projector weights as:
\begin{gather}
    1 = \sum_{n} \bra{\Phi_0} \hat{P}^\dagger_n \hat{P}_n \ket{\Phi_0}, \quad W_n \defeq \bra{\Phi_0} \hat{P}^\dagger_n \hat{P}_n \ket{\Phi_0} \label{def:Wn}
\end{gather} 

\subsection{Gauge transformation of $\MC{R}$ and $\MC{Q}$}
The gauge transformation on Nambu basis operators can be conveniently defined as:
\begin{equation}
    \label{eq:nambu_gauge}
   \begin{aligned}
    \hat{\mathcal{U}}^\dagger \hat{f}_{\uparrow \beta} \hat{\mathcal{U}} =& \sum_{\beta'}U_{\uparrow \beta, \uparrow \beta'} \hat{f}_{\uparrow \beta'} + U_{\uparrow \beta, \downarrow \beta'} \hat{f}^\dagger_{\downarrow \beta'} \\
    \hat{\mathcal{U}}^\dagger \hat{f}^\dagger_{\downarrow \beta} \hat{\mathcal{U}} =& \sum_{\beta'}U_{\downarrow \beta, \uparrow \beta'} \hat{f}_{\uparrow \beta'} + U_{\downarrow \beta, \downarrow \beta'} \hat{f}^\dagger_{\downarrow \beta'} 
    \end{aligned} 
\end{equation}

Such that under $\ket{\Phi_0} \to \hat{\mathcal{U}} \ket{\Phi_0}$, $\bm{\varrho}^0$ transforms as:
\begin{gather}
    \bm{\varrho}^0 \to \mathcal{U} \bm{\varrho}^0 \mathcal{U}^\dagger
\end{gather}
Correspondinly, under $\hat{P} \to \hat{P} \mathcal{U}^\dagger$ and $\ket{\Phi_0} \to \hat{\mathcal{U}} \ket{\Phi_0}$, the equation \eqref{eq:RQ_solution} transforms as:
\begin{align}
    \begin{pmatrix}
        \tilde{\MC{R}} \\ \tilde{\MC{Q}}
    \end{pmatrix} =& \mathcal{U} (\bm{\varrho}^0)^{-1} \mathcal{U}^\intercal \begin{pmatrix}
        \bra{\Phi_0} \hat{\mathcal{U}}^\intercal \hat{\mathcal{U}} \hat{P}^\dagger_{i} \hat{f}^\dagger_{i \uparrow \alpha } \hat{P}_{i} \hat{\mathcal{U}}^\intercal \hat{f}_{i \uparrow \beta} \hat{\mathcal{U}} \ket{\Phi_0} \\
        \bra{\Phi_0} \hat{\mathcal{U}}^\intercal \hat{\mathcal{U}} \hat{P}^\dagger_{i} \hat{f}^\dagger_{i \uparrow \alpha } \hat{P}_{i} \hat{\mathcal{U}}^\intercal \hat{f}^\dagger_{i \downarrow \beta} \hat{\mathcal{U}} \ket{\Phi_0}
    \end{pmatrix} \\
    =& \mathcal{U} (\bm{\varrho}^0)^{-1} \mathcal{U}^\intercal \mathcal{U} \begin{pmatrix}
        \mathcal{K}_{\uparrow \uparrow} \\ \mathcal{K}_{\downarrow \uparrow}
    \end{pmatrix} \\
    =& \mathcal{U} \begin{pmatrix}
        \MC{R} \\ \MC{Q}
    \end{pmatrix} \label{eq:sqrtZ_gauge}
\end{align}
A immediate consequence is that the quasi-particle weight matrix:
\begin{gather}
    Z = \begin{pmatrix}
        \MC{R}^\dagger & \MC{Q}^\dagger
    \end{pmatrix} \begin{pmatrix}
        \MC{R} \\ \MC{Q}
    \end{pmatrix}
\end{gather}
is invariant under $\mathcal{U}$.

\section{Physical observables under $\ket{\Psi_G}$}

\subsection{Momentum dependent occupation number: $n_{\vb{k}}$}
For uncorrelated $c$-electrons, $n_{c; \vb{k}}$ equals the uncorrelated occupation number $n^0_{c; \vb{k}}$ as the Gutzwiller projection does not act on $c$-electrons:
\begin{equation}
    n_{c; \vb{k} a \sigma} = \bra{\Psi_G} \hat{c}^\dagger_{\vb{k} a \sigma} \hat{c}_{\vb{k} a \sigma} \ket{\Psi_G} = \bra{\Phi_0} \hat{c}^\dagger_{\vb{k} a \sigma}\hat{c}_{\vb{k} a \sigma} \ket{\Phi_0} = n^0_{c; \vb{k} a \sigma}
\end{equation}
The momentum dependent occupation number of $f$-electrons is given by:
\begin{equation}
    \begin{aligned}
        n_{f; \vb{k} \sigma \alpha} =&  \bra{\Psi_G} \hat{f}^\dagger_{\vb{k} \sigma \alpha} \hat{f}_{\vb{k} \sigma \alpha} \ket{\Psi_G} \\
        =& \frac{1}{N_k} \left[\sum_i \bra{\Phi_0} \hat{P}^\dagger_i \hat{f}^\dagger_{i \sigma \alpha} \hat{f}_{i \sigma \alpha} \hat{P}_i \ket{\Phi_0} + \sum_{i \neq j} e^{i \vb{k} \cdot (\vb{R}_i - \vb{R}_j)}\bra{\Phi_0}\hat{P}^\dagger_i \hat{f}^\dagger_{i \sigma \alpha} \hat{P}_i \hat{P}^\dagger_j \hat{f}_{j \sigma \alpha} \hat{P}_j \ket{\Phi_0} \right] \\
        =& \frac{1}{N_k} \left[N_{\vb{k}} n_{f \sigma \alpha} + \sum_{i \neq j} e^{i \vb{k} \cdot (\vb{R}_i - \vb{R}_j)}\bra{\Phi_0}\hat{P}^\dagger_i \hat{f}^\dagger_{i \sigma \alpha} \hat{P}_i \hat{P}^\dagger_j \hat{f}_{j \sigma \alpha} \hat{P}_j \ket{\Phi_0} \right]
    \end{aligned}
\end{equation}
To get the sense of how $n_{f; \vb{k}}$ is related to $n^0_{f; \vb{k}}$, we can first consider a spin-orbital-less case where the quasi-particle is given by:
\begin{gather}
    \hat{P}^\dagger_i \hat{f}^\dagger_{\vb{R}_i} \hat{P}_i = \sqrt{Z} \hat{f}^\dagger_{\vb{R}_i}
\end{gather}
such that the intersite contribution to $n_{f; \vb{k}}$ is given by:
\begin{equation}
    \begin{aligned}
        \bra{\Phi_0}\hat{P}^\dagger_i \hat{f}^\dagger_{i \sigma \alpha} \hat{P}_i \hat{P}^\dagger_j \hat{f}_{j \sigma \alpha} \hat{P}_j \ket{\Phi_0} = & Z \bra{\Phi_0} \hat{f}^\dagger_{i} \hat{f}_{j} \ket{\Phi_0} \\
        =& Z \frac{1}{N_k} \sum_{\vb{k}'} e^{i \vb{k}' \cdot (\vb{R}_j - \vb{R}_i)} n^0_{f; \vb{k}'}
    \end{aligned}
\end{equation}
Therefore, we have:
\begin{equation}
    \begin{aligned}
        n_{f; \vb{k}} =& n_{f} + \left(\sum_{ij} e^{i \vb{k} \cdot (\vb{R}_i - \vb{R}_j)} - \sum_{ij} \delta_{ij}\right) \left( Z \frac{1}{N_k} \sum_{\vb{k}} e^{i \vb{k} \cdot (\vb{R}_j - \vb{R}_i)} n^0_{f; \vb{k}} \right) \\
        =& n_f + Z \left(n^0_{f; \vb{k}} - \frac{1}{N_{k}} \sum_{\vb{k}'} n^0_{f; \vb{k}'}\right) \\
        =& Z n^0_{f; \vb{k}} + \left(n_f - Z n^0_{f}\right)
    \end{aligned}
\end{equation}
Which satisfy the sum rule: $$\frac{1}{N_k} \sum_{\vb{k}} n_{f; \vb{k}} = n_f.$$
Now we work on the general case where the quasi-particle operator is given in \eqref{eq:renrml_substitution}. The intersite contribution is given in \eqref{cal:non-local pairing and hopping} and can be compactly expressed in terms of renormalised single particle density matrix in momentum space $\tilde{\bm{\varrho}}^0_{\vb{k}}$:
\begin{gather}
    \bra{\Psi_G} \hat{f}^\dagger_{i \uparrow \alpha} \hat{f}_{j \uparrow \alpha} \ket{\Psi_G} = \frac{1}{N_{\vb{k}}} \sum_{\vb{k}} e^{i \vb{k} \cdot (\vb{R}_j - \vb{R}_i)} \tilde{\bm{\varrho}}^0_{\vb{k} \uparrow \uparrow, \alpha \alpha} 
\end{gather}
Therefore, the momentum resolved occupation number $n_{f; \vb{k} \sigma \alpha}$ is:
\begin{align}
    n_{f; \vb{k} \sigma \alpha} =& n_{f \sigma \alpha} + \tilde{\bm{\varrho}}^0_{\vb{k} \sigma \sigma, \alpha \alpha} - \frac{1}{N_k} \sum_{\vb{k}'} \tilde{\bm{\varrho}}^0_{\vb{k}' \sigma \sigma, \alpha \alpha} \\
    =& \tilde{\bm{\varrho}}^0_{\vb{k} \sigma \sigma, \alpha \alpha} + \left(n_{f \sigma \alpha} - n^0_{f \sigma \alpha}\right)
\end{align}
The total momentum resolved occupation number for each spin/valley is given by:
\begin{gather}
    n_{\vb{k} \sigma} = \sum_{\alpha} n_{f; \vb{k} \sigma \alpha} + \sum_{a} n_{c; \vb{k} \sigma a}
\end{gather}
For normal state, $n_{\vb{k} \sigma}$ is discontinous at the Fermi surface. In the superconducting state, such discontinuities of $n_{\vb{k} \sigma}$ is smoothened out by non-zero order parameter $\Delta_{\vb{k} \sigma}$, while nodal lines are indicated by discontinuities of $n_{\vb{k} \sigma}$.

\subsection{Intersite pairing amplitudes} \label{appdx:NN_pairing}
The $f$-electron pairing amplitude are defined as:
\begin{equation}\label{pairing:fifj}
    \begin{aligned}
        \langle \hat{f}_{\downarrow \alpha \vb{R}_i} \hat{f}_{\uparrow \beta \vb{R}_j} \rangle_G = & \delta_{ij} \langle \hat{f}_{\downarrow \alpha \vb{R}_i} \hat{f}_{\uparrow \beta \vb{R}_i} \rangle_G + (1 - \delta_{ij}) \langle \hat{f}_{\downarrow \alpha \vb{R}_i} \hat{f}_{\uparrow \beta \vb{R}_j} \rangle_G \\[1.5em]
        \delta_{ij} \langle \hat{f}_{\downarrow \alpha \vb{R}_i} \hat{f}_{\uparrow \beta \vb{R}_i} \rangle_G =& \delta_{ij} \sum_{\Gamma \Gamma'} \Lambda_{\Gamma} \Lambda_{\Gamma'} \Tr(\mbbm{m}^\intercal_{\Gamma} \mbbm{D}_{\downarrow \alpha, \uparrow \beta} \mbbm{m}_{\Gamma'} \bm{m}^0) \\[1em]
        (1 - \delta_{ij}) \langle \hat{f}_{\downarrow \alpha \vb{R}_i} \hat{f}_{\uparrow \beta \vb{R}_j} \rangle_G =& (1 - \delta_{ij}) \sum_{\gamma \delta} \langle \left(\hat{f}_{\downarrow \gamma \vb{R}_i} \MC{R}^*_{\gamma \alpha} - \hat{f}^\dagger_{\uparrow \gamma \vb{R}_i} \MC{Q}^*_{\gamma \alpha} \right) \left( \hat{f}_{\uparrow \delta \vb{R}_j} \MC{R}^*_{\delta \beta} + \hat{f}^\dagger_{\downarrow \delta \vb{R}_j} \MC{Q}^*_{\delta \beta} \right) \rangle_0 \\
        =& \begin{aligned}[t]
            (1 - \delta_{ij}) \frac{1}{N} \sum_{\vb{k}} e^{i \vb{k} \cdot (\vb{R}_j - \vb{R}_i)} \Big[&\MC{R}^\dagger_{\beta \delta} \langle \hat{f}^\dagger_{\uparrow \gamma \vb{k}} \hat{f}_{\uparrow \delta \vb{k}} \rangle_0 \left(-\MC{Q}^*_{\gamma \alpha}\right) + \MC{Q}^\dagger_{\beta \delta} \langle \hat{f}^\dagger_{\uparrow \gamma \vb{k}} \hat{f}^\dagger_{\downarrow \delta -\vb{k}} \rangle_0 \left(-\MC{Q}^*_{\gamma \alpha}\right) \\
            &+\MC{R}^\dagger_{\beta \delta} \langle \hat{f}_{\downarrow \gamma - \vb{k}} \hat{f}_{\uparrow \delta \vb{k}} \rangle_0 \MC{R}^*_{\gamma \alpha} + \MC{Q}^\dagger_{\beta \delta} \langle \hat{f}_{\downarrow \gamma - \vb{k}} \hat{f}^\dagger_{\downarrow \delta -\vb{k}} \rangle_0 \MC{R}^*_{\gamma \alpha} \Big]
        \end{aligned}
    \end{aligned}
\end{equation}
Where the $\vb{k}$ dependent terms inside the square bracket from the last line of intersite pairing amplitude $\langle \hat{f}_{\downarrow \alpha \vb{R}_i} \hat{f}_{\uparrow \beta \vb{R}_j} \rangle_G$ can be identified as the anomalous part of the renormalised single particle density matrix in momentum space:

\begin{gather}
    \bm{\varrho}^{G}_{ff}(\vb{k}) \defeq \begin{pmatrix}
        \bm{\varrho}^{G}_{ff}(\vb{k})_{\uparrow \uparrow} & \bm{\varrho}^{G}_{ff}(\vb{k})_{\uparrow \downarrow} \\[0.75em]
        \bm{\varrho}^{G}_{ff}(\vb{k})_{\downarrow \uparrow} & \bm{\varrho}^{G}_{ff}(\vb{k})_{\downarrow \downarrow}
    \end{pmatrix}
     = \begin{pmatrix}
        \MC{R}^\dagger & \MC{Q}^\dagger \\[0.75em]
        -\MC{Q}^\intercal & \MC{R}^\intercal
    \end{pmatrix}
    \begin{pmatrix}
        \bm{\rho}^0_{ff}(\vb{k}) & \bm{\Delta}^0_{ff}(\vb{k}) \\[0.75em]
        \bm{\Delta}^0_{ff}(\vb{k})^\dagger & \bm{1} - \bm{\rho}^0_{ff}(-\vb{k})
    \end{pmatrix}
    \begin{pmatrix}
        \MC{R} & - \MC{Q}^* \\[0.75em]
        \MC{Q} & \MC{R}^*
    \end{pmatrix}.
\end{gather}
Therefore, $\langle \hat{f}_{\downarrow \alpha \vb{R}_i} \hat{f}_{\uparrow \beta \vb{R}_j} \rangle_G$ can be compactly written as:
\begin{gather}
    (1 - \delta_{ij}) \langle \hat{f}_{\downarrow \alpha \vb{R}_i} \hat{f}_{\uparrow \beta \vb{R}_j} \rangle_G = (1 - \delta_{ij}) \frac{1}{N} \sum_{\vb{k}} e^{i \vb{k} \cdot (\vb{R}_j - \vb{R}_i)} (\bm{\varrho}^{G}_{ff}(\vb{k})_{\uparrow \downarrow})_{\beta \alpha}
\end{gather}
From here we work on TBG and we abbreviate: $\Uparrow \defeq \uparrow \eta$, $\Downarrow \defeq \downarrow \bar{\eta}$. Using the spin-independent $C_{2z} \mathcal{T}$ symmetry, one can relate the $f$- pairing amplitudes with their complex conjugates as following (for $i \neq j$):
\begin{align}
    \langle \hat{f}_{\Downarrow \alpha \vb{R}_i} \hat{f}_{\Uparrow \beta \vb{R}_j} \rangle_G =&  \frac{1}{N} \sum_{\vb{k}} e^{i \vb{k} \cdot (\vb{R}_j - \vb{R}_i)} \langle \hat{f}_{\Downarrow - \vb{k} \alpha} \hat{f}_{\Uparrow \vb{k} \beta} \rangle_G \\
    =& \frac{1}{N} \sum_{\vb{k}} e^{i \vb{k} \cdot (\vb{R}_j - \vb{R}_i)} \langle C_{2z}\mathcal{T} \hat{f}_{\Downarrow - \vb{k} \alpha} \hat{f}_{\Uparrow \vb{k} \beta} (C_{2z}\mathcal{T})^{-1} \rangle^*_G \\
    =& \frac{1}{N} \sum_{\vb{k}} e^{i \vb{k} \cdot (\vb{R}_j - \vb{R}_i)} \langle \hat{f}_{\Downarrow - \vb{k} \bar{\alpha}} \hat{f}_{\Uparrow \vb{k} \bar{\beta}}  \rangle^*_G \\
    =& \langle \hat{f}_{\Downarrow \bar{\alpha} \vb{R}_j} \hat{f}_{\Uparrow \bar{\beta} \vb{R}_i} \rangle_G^*,
\end{align}
such that one could define real non-local pairing amplitudes as:
\begin{gather}
    \Delta^s_{f;\alpha \alpha ij} = \frac{1}{2} \left(\langle \hat{f}_{\Downarrow \alpha \vb{R}_i} \hat{f}_{\Uparrow \alpha \vb{R}_j} \rangle_G + \langle \hat{f}_{\Downarrow \bar{\alpha} \vb{R}_j} \hat{f}_{\Uparrow \bar{\alpha} \vb{R}_i} \rangle_G\right), \\
    \Delta^d_{f;\alpha \bar{\alpha} ij} = \frac{1}{2} \left(\langle \hat{f}_{\Downarrow \alpha \vb{R}_i} \hat{f}_{\Uparrow \bar{\alpha} \vb{R}_j} \rangle_G + \langle \hat{f}_{\Downarrow \bar{\alpha} \vb{R}_j} \hat{f}_{\Uparrow \alpha \vb{R}_i} \rangle_G\right).
\end{gather}
Under $C_{2x}$, we have:
\begin{align}
    \langle \hat{f}_{\Downarrow \alpha \vb{R}_i} \hat{f}_{\Uparrow \beta \vb{R}_j} \rangle_G =& \frac{1}{N} \sum_{\vb{k}} e^{i [\mathcal{D}(C_{2x})\vb{k}] \cdot [\mathcal{D}(C_{2x})(\vb{R}_j - \vb{R}_i)]} \langle C_{2x} \hat{f}_{\Downarrow - \vb{k} \alpha} \hat{f}_{\Uparrow \vb{k} \beta} (C_{2x})^{-1} \rangle_G \\
    =& \frac{1}{N} \sum_{\vb{k}} e^{i [\mathcal{D}(C_{2x})\vb{k}] \cdot [\mathcal{D}(C_{2x})(\vb{R}_j - \vb{R}_i)]} \langle \hat{f}_{\Downarrow - \mathcal{D}(C_{2x}) \vb{k} \bar{\alpha}} \hat{f}_{\Uparrow \mathcal{D}(C_{2x}) \vb{k} \bar{\beta}} \rangle_G \\
    =& \frac{1}{N} \sum_{\vb{k}} e^{i \vb{k} \cdot [\mathcal{D}(C_{2x})(\vb{R}_j - \vb{R}_i)]} \langle \hat{f}_{\Downarrow - \vb{k} \bar{\alpha}} \hat{f}_{\Uparrow \vb{k} \bar{\beta}} \rangle_G \\
    =& \langle \hat{f}_{\Downarrow \bar{\alpha} \mathcal{D}(C_{2x})\vb{R}_i} \hat{f}_{\Uparrow \bar{\beta} \mathcal{D}(C_{2x})\vb{R}_j} \rangle_G.
\end{align}
A combination of SU(2) spin rotation (along y-axis by $\pi$) and the global $C_{2z}$ or $\mathcal{T}$ symmetry leads to (We need to explicitly write out the valley indices here since $C_{2z}$ switches the valleys and levaes the spin unchanged while $\mathcal{S}_y(\pi)$ switches the spins and leaves the valleys unchanged):
\begin{align}
    \langle \hat{f}_{\downarrow \bar{\eta} \alpha \vb{R}_i} \hat{f}_{\uparrow \eta \beta \vb{R}_j} \rangle_G =& \langle \mathcal{S}_y(\pi) \hat{f}_{\downarrow \bar{\eta} \alpha \vb{R}_i} \hat{f}_{\uparrow \eta \beta \vb{R}_j} \mathcal{S}^{-1}_y (\pi) \rangle_G \\
    =& - \langle \hat{f}_{\uparrow \bar{\eta} \alpha \vb{R}_i} \hat{f}_{\downarrow \eta \beta \vb{R}_j} \rangle_G \\
    =& \frac{1}{N} \sum_{\vb{k}} e^{i \vb{k} \cdot (\vb{R}_i - \vb{R}_j)} \langle \hat{f}_{\downarrow \eta \beta -\vb{k}} \hat{f}_{\uparrow \bar{\eta} \alpha \vb{k}} \rangle_G \\
    =& \frac{1}{N} \sum_{\vb{k}} e^{i \vb{k} \cdot (\vb{R}_i - \vb{R}_j)} \langle C_{2z} \hat{f}_{\downarrow \eta \beta -\vb{k}} \hat{f}_{\uparrow \bar{\eta} \alpha \vb{k}} C_{2z}^{-1} \rangle_G \\
    =& \frac{1}{N} \sum_{\vb{k}} e^{i \vb{k} \cdot (\vb{R}_i - \vb{R}_j)} \langle \hat{f}_{\downarrow \bar{\eta} \bar{\beta} \vb{k}} \hat{f}_{\uparrow \eta \bar{\alpha} -\vb{k}} \rangle_G \\
    =& \langle \hat{f}_{\downarrow \bar{\eta} \bar{\beta} \vb{R}_i} \hat{f}_{\uparrow \eta \bar{\alpha} \vb{R}_j} \rangle_G, \\
    \langle \hat{f}_{\downarrow \bar{\eta} \alpha \vb{R}_i} \hat{f}_{\uparrow \eta \beta \vb{R}_j} \rangle_G =& \frac{1}{N} \sum_{\vb{k}} e^{i \vb{k} \cdot (\vb{R}_i - \vb{R}_j)} \langle \mathcal{T} \hat{f}_{\downarrow \eta \beta -\vb{k}} \hat{f}_{\uparrow \bar{\eta} \alpha \vb{k}} \mathcal{T}^{-1} \rangle^*_G \\
    =& \frac{1}{N} \sum_{\vb{k}} e^{i \vb{k} \cdot (\vb{R}_i - \vb{R}_j)} \langle \hat{f}_{\downarrow \bar{\eta} \beta \vb{k}} \hat{f}_{\uparrow \eta \alpha -\vb{k}} \rangle^*_G \\
    =& \langle \hat{f}_{\downarrow \bar{\eta} \beta \vb{R}_j} \hat{f}_{\uparrow \eta \alpha \vb{R}_i} \rangle^*_G
\end{align}
which implies that:
\begin{gather}
    \Delta^s_{f;\alpha \alpha ij} = \Delta^s_{f;\bar{\alpha} \bar{\alpha} ij}, \quad \Delta^d_{f;\alpha \bar{\alpha} ij} = \Delta^d_{f;\bar{\alpha} \alpha ji}.
\end{gather}
In summary, the nearest neighbor $f$-electron pairing amplitudes satisfy the following relations with $C_{3z}$ breaking: (we abbreviate $\langle \hat{f}_{\Downarrow \alpha \vb{R}_i} \hat{f}_{\Uparrow \beta \vb{R}_j} \rangle_G \defeq \Delta^f_{\Uparrow\beta; \Downarrow \alpha}(\bm{\delta}), \quad \bm{\delta} \defeq \vb{R}_j - \vb{R}_i$)
\begin{align}
    \Delta^f_{\Uparrow\beta; \Downarrow \alpha}(\bm{\delta}) =& \Delta^f_{\Uparrow \bar{\beta}; \Downarrow \bar{\alpha}}(\bm{-\delta})^*, \quad (C_{2z} \mathcal{T})\\
    \Delta^f_{\Uparrow\beta; \Downarrow \alpha}(\bm{\delta}) =& \Delta^f_{\Uparrow \bar{\beta}; \Downarrow \bar{\alpha}}(\mathcal{D}(C_{2x})\bm{\delta}), \quad (C_{2x}) \\
    \Delta^f_{\Uparrow\beta; \Downarrow \alpha}(\bm{\delta}) =& \Delta^f_{\Uparrow \bar{\alpha}; \Downarrow \bar{\beta}}(\bm{\delta}), \quad (\mathcal{S}_y(\pi) C_{2z}) \\
    \Delta^f_{\Uparrow\beta; \Downarrow \alpha}(\bm{\delta}) =& \Delta^f_{\Uparrow \alpha; \Downarrow \beta}(-\bm{\delta})^*, \quad (\mathcal{S}_y(\pi) \mathcal{T})
\end{align}
\begin{figure}[h]
    \centering
    \includegraphics[width=0.48\textwidth]{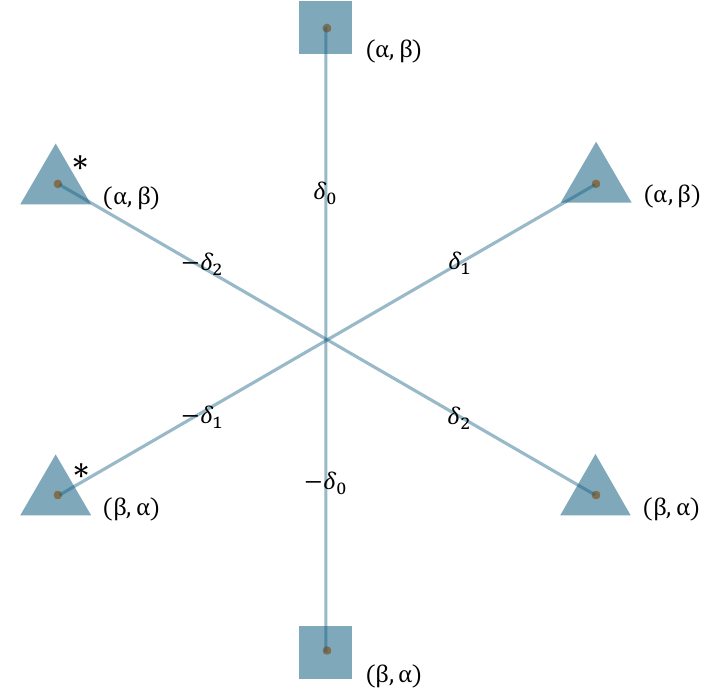}
    \caption{Structure of nearest neighboring pairing amplituides among $f$-electrons on the triangular Moir\'e lattice. }
    \label{fig:nnpairing_structure}
\end{figure}
The independent nearest pairing amplitudes and bonds $\bm{\delta}_0, \bm{\delta}_1, \bm{\delta}_2$ are illustrated in Fig.~\ref{fig:nnpairing_structure}. 


\subsection{Fermi surface volume of sFL}\label{appdx:FS_sFL}
We first postulate that the quasi-particle dispersion $\epsilon_n(\vb{k})$ in sFL is still given by the dispersion of the Fermi Hamiltonian $\hat{H}^F = \sum_{\vb{k} n} \epsilon_n(\vb{k}) \hat{d}^\dagger_{\vb{k} n} \hat{d}_{\vb{k} n}$ \cite{Bünemann2003_PRB}. The Fermi surface volume equates to the quasi-particle (uncorrelated) occupation number: 
\begin{gather}
    V_{FS} = \frac{\Omega_d}{(2\pi)^d} \sum_n \int_{BZ} d\vb{k}^d \Theta(\mu -\epsilon_n(\vb{k})) = \frac{1}{N_k} \sum_{\vb{k} n} \Theta(\mu - \epsilon_n(\vb{k})) = \bra{\Phi_0} \sum_{\alpha} \hat{n}_{\alpha} \ket{\Phi_0} = \sum_{\alpha} n^0_{\alpha}
\end{gather}
Assuming all orbitals are correlated so that their quasi-particle operators are modified by $\hat{P}_G$, the physical occupation number 
is given by:
\begin{align}
    \sum_{\alpha} n_{\alpha} =& \bra{\Phi_0} \hat{P}_G^\dagger \left( \sum_{\alpha} \hat{n}_{i; \alpha} \right) \hat{P}_G \ket{\Phi_0} = \bra{\Phi_0} \hat{P}_i^\dagger \left( \sum_{\alpha} \hat{n}_{i; \alpha} \right) \hat{P}_i \ket{\Phi_0} \\
    =& \bra{\Phi_0} \hat{P}^\dagger_i \hat{P}_i \left(\sum_{\alpha} \hat{n}_{i; \alpha} - 2\right) \ket{\Phi_0} \\
    =& \sum_{\alpha} n^0_{\alpha} - 2
\end{align}
where the last equality is just the Gutzwiller constraint. We show that the sFL state's occupation number differs from the Fermi surface volume by 2 due to the quenching of a local singlet per site.
\section{More numerical results}
\subsection{$s$+$d$-wave state}
Most of the SC area in the phase diagram Fig.~\ref{fig:phase_diagram} is occupied by $s$+$d$ state and it only appears at finite U. Here we study one point in the phase diagram $U=10 \, \mathrm{meV}$ and $J_A = 3\, \mathrm{meV}$ where all three SC phases can be stablized as an example to explain it. We see from the band structure Fig.~\ref{fig:s+d_band} that $s$+$d$-wave state opens a gap of the flat bands at moir\'e $\vb{K}$ points because the $s$+$d$-wave contains mixing components of both $\ket{\psi_s}$ and $\ket{\psi_d}$ which leads to a finite inter-orbital order $n_d$ (Notice that in Table.~\ref{table:phases} we also state $d$-wave has finite $n_d$, because the transition from $d$-wave to $s$+$d$-wave is second order at large U and we use $\abs{\Delta_s} / \abs{\Delta}_d >0.05$ instead of $n_d$ as a threshold, but $n_d$ is still very small for $d$-wave. For small $U$ like $10 \,\mathrm{meV}$, $n_d \approx0$ in $d$-wave case). Such `emergent order' in $s$+$d$-wave makes it lower U more efficient than $s$-wave or $d$-wave, as it is shown in the data under band structures in Fig.~\ref{fig:s+d_band}.
\begin{figure}
    \centering
    \includegraphics[width=1\textwidth]{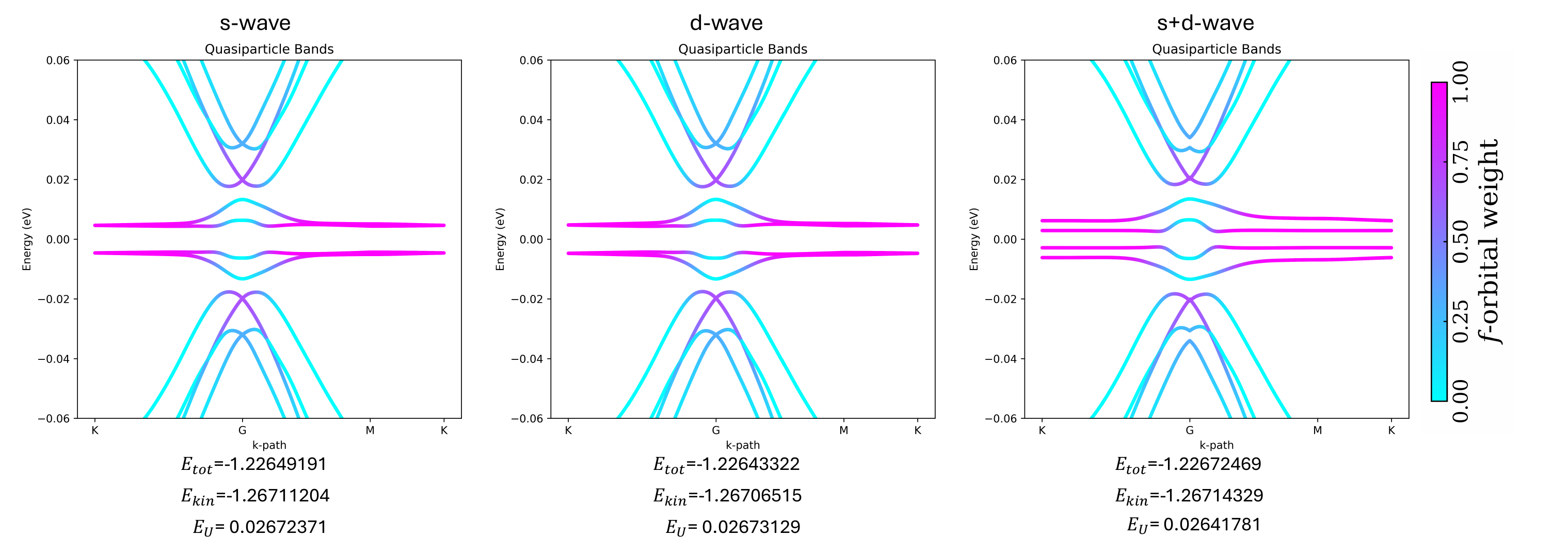}
    \caption{BdG quasiparticle band structure of the $s$-, $d$- and $s$+$d$-wave states respectively at $\nu=2.5$, $U=10 \, \mathrm{meV}$, $J_A = 3.0 \, \mathrm{meV}$ and $J_H = 1.5 \, \mathrm{meV}$. The $s$+$d$-wave state opens a gap of the flat bands at moir\'e $\vb{K}$ points by having finite inter-orbital normal order $n_d$, thus lowering the total energy compared with the pure $s$- and $d$-wave states. The color of the bands indicates the weight of $f$-orbitals. The energy units are $\mathrm{meV}$}
    \label{fig:s+d_band}
\end{figure}

\subsection{SC gap reconstruction driven by \(U\)} \label{appdx:gap_reconstruction}
We plot density of states (DOS) Fig.~\ref{fig:DOS_U} at different $U$ values with $(J_A, J_H)=(3\mathrm{meV}, 1.5\mathrm{meV})$ and find that there's a reconstruction of the SC gap structure at $U=51 \mathrm{meV}$ which corresponds to the kink in the quasi-particle weight Fig.~\ref{fig:phase_diagram}(b).
\begin{figure}[H]
    \centering
    \includegraphics[width=1\textwidth]{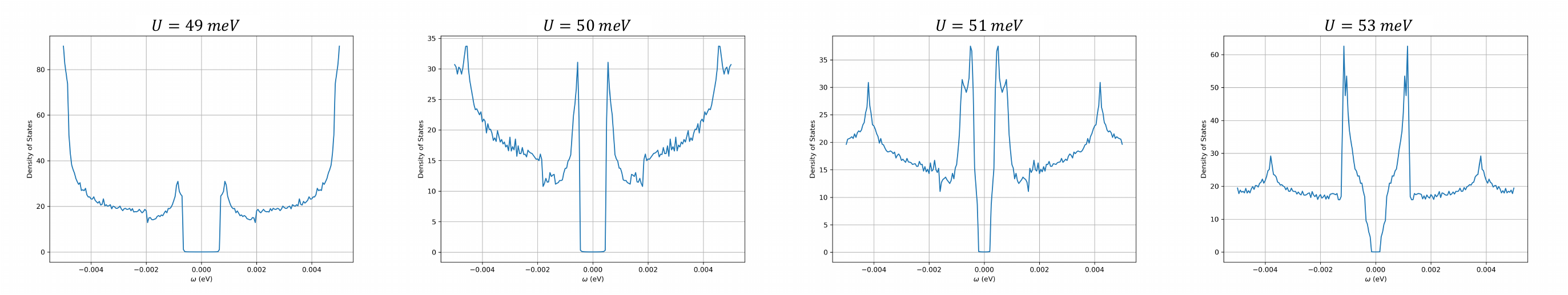}
    \caption{Density of states (DOS) at different $U$ values with $(J_A, J_H)=(3\mathrm{meV}, 1.5\mathrm{meV})$.}
    \label{fig:DOS_U}
\end{figure}
The occurance of nodal V-shape DoS relates to the second SC gap reconstruction at $U=55 \mathrm{meV}$ shown in Fig.~\ref{fig:DOS_Vshape}
\begin{figure}[H]
    \centering
    \includegraphics[width=0.9\textwidth]{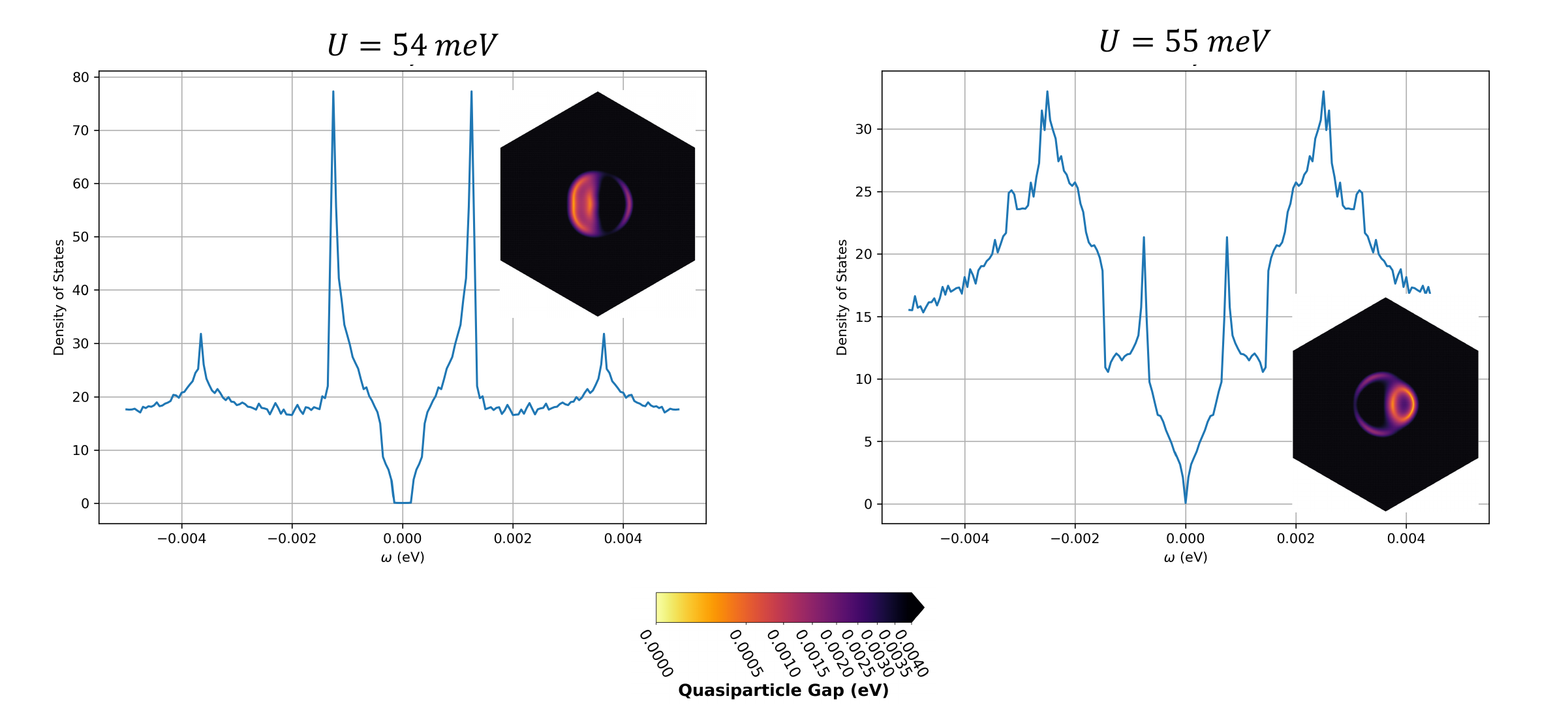}
    \caption{Density of states (DOS) at different $U$ values with $(J_A, J_H)=(3\mathrm{meV}, 1.5\mathrm{meV})$. The inset shows the SC gap structure at corresponding $U$ values.}
    \label{fig:DOS_Vshape}
\end{figure}

\subsection{SC gap structure and normal state Fermi surface} \label{appdx:Uvar_JAvar}
Here we present the SC gap contour plots and normal state Fermi surfaces at different $U$ or $J_A$ values.
\begin{figure}[H]
\centering
\includegraphics[width=0.88\textwidth]{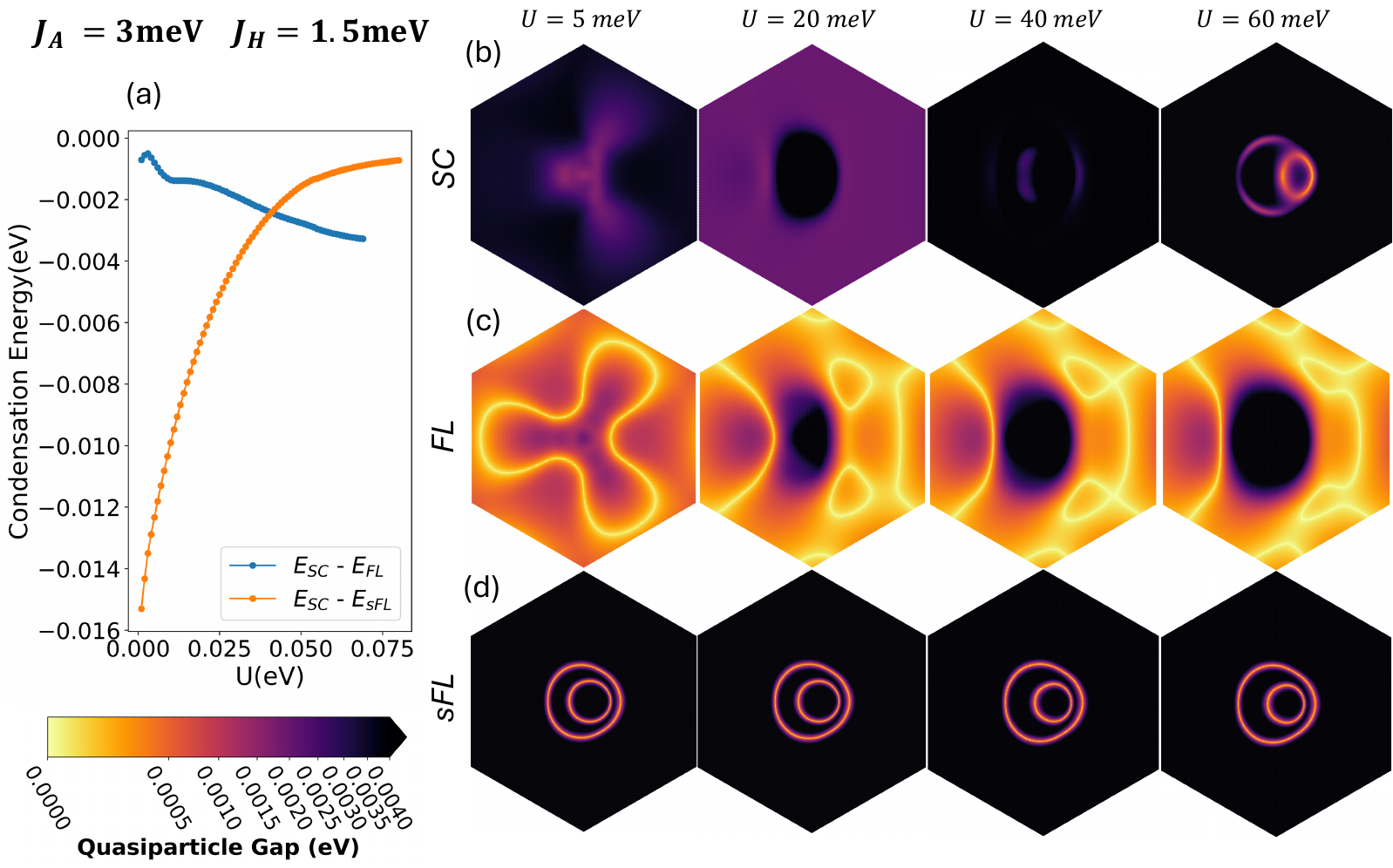}
\caption{Calculations at $\nu=2.5$, $(J_A, J_H)=(3\mathrm{meV}, 1.5\mathrm{meV})$: (a) Condensation energy versus $U$ of the SC state against the FL and sFL states. (b), (c) and (d) are the quasiparticle gap contour plots near (Bogoliubov) Fermi surface of SC, FL and sFL states respectively at $U=5, 20, 40, 60 \ \mathrm{meV}$. The color bar is normalised to show the gap structure within $4\mathrm{meV}$ where the bright yellow lines indicates Fermi surface/nodal lines.
}
\label{fig:gap_contour_Uvary}
\end{figure}

\begin{figure}[H]
\centering
\includegraphics[width=0.88\textwidth]{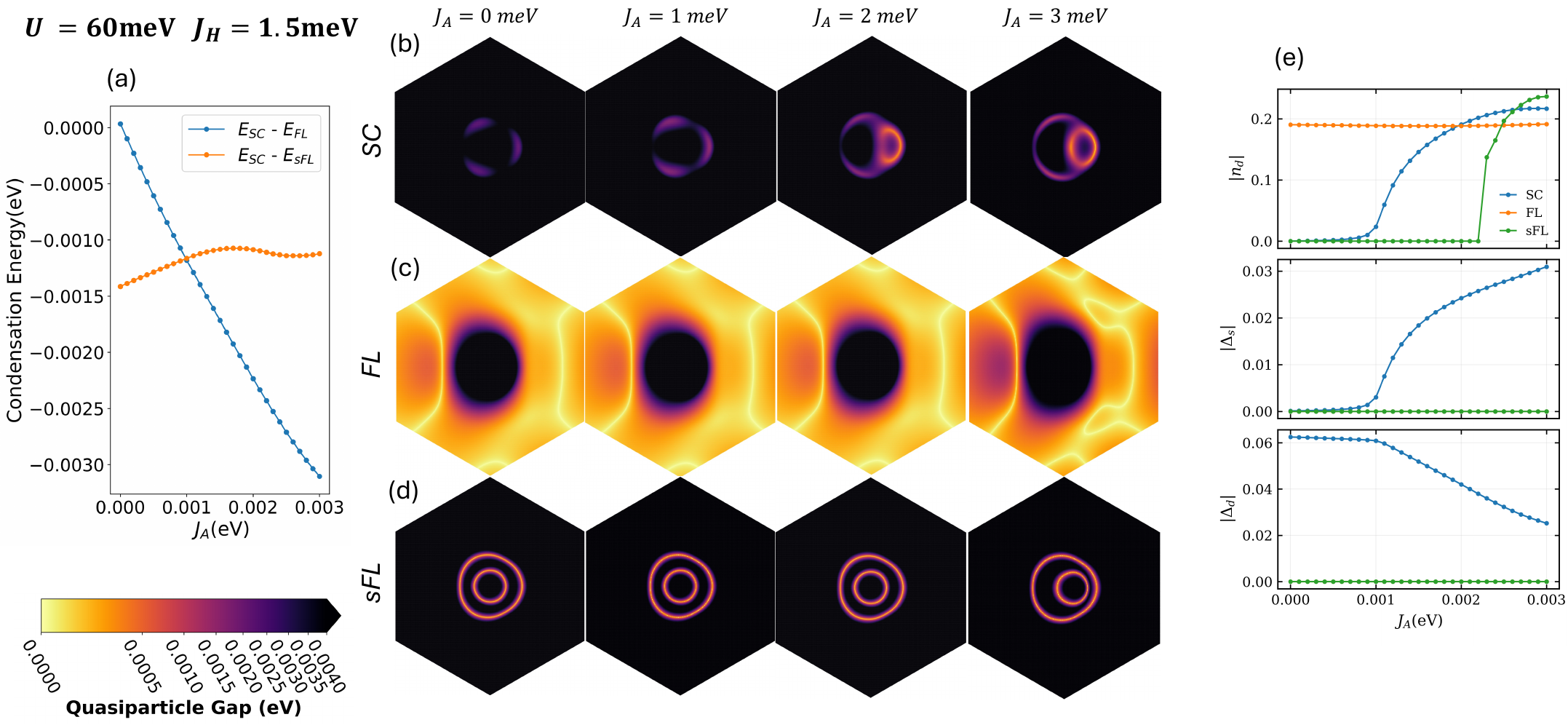}
\caption{Calculations at $\nu=2.5$, $(U, J_H)=(60\mathrm{meV}, 1.5\mathrm{meV})$: (a) Condensation energy versus $U$ of the SC state against the FL and sFL states. (b), (c) and (d) are the quasiparticle gap contour plots near (Bogoliubov) Fermi surface of SC, FL and sFL states respectively at $J_A=0, 1, 2, 3 \ \mathrm{meV}$. The color bar is normalised to show the gap structure within $4\mathrm{meV}$ where the bright yellow lines indicates Fermi surface/nodal lines. (e) Local order parameters: nematic normal order $\abs{n_d}$, anomalous s-wave order $\abs{\Delta_s}$ and anomalous d-wave order $\abs{\Delta_d}$. The finite $\abs{n_d}$ corresponds to mixing of $\ket{\psi_s}$ and $\ket{\psi_d}$.
}
\label{fig:gap_contour_JAvary}
\end{figure}

\subsection{Intersite pairing amplitudes}\label{appdx:NN_pairing_res}
We plot the nearest neighbor pairing amplitudes with respect to the change of $J_A$ (fixing $U=60\mathrm{meV}$) and $U$ (fixing $J_A=3\mathrm{meV}$) respectively. Both plots in Fig.~\ref{fig:nn_pairing_amplitudes} show that the pairing amplitudes in all directions decrease when either the coupling strength $J_A$ or the correlation strength $U$ is large, indicating that either factor alone is sufficient to drive the system into the BEC limit. The sudden rise of $\Delta^s_{0(1)}$ from $J_A=1\mathrm{meV}$ in Fig.~\ref{fig:nn_pairing_amplitudes}(a) corresponds to the second order phase transition from d-wave to s+d-wave shown in Fig.~\ref{fig:phase_diagram}(a). The discontinuity in Fig.~\ref{fig:nn_pairing_amplitudes}(a) at $J_A = 4.9 \mathrm{meV}$ links to a first order transition from s+d-wave to s-wave as the pairing amplitudes become spatially uniform for $J_A \geq 5 \mathrm{meV}$. The sudden drop of pairing amplitudes at $U\approx 5 \mathrm{meV}$ in Fig.~\ref{fig:nn_pairing_amplitudes} matches the crossover from BCS to SC-SC shown in Fig.~\ref{fig:phase_diagram}(a). The discontinuity at $U=55\mathrm{meV}$ corresponds to the gap reconstruction shown in Fig.~\ref{fig:DOS_Vshape}.
\begin{figure}[h]
    \centering
    \includegraphics[width=0.95\linewidth]{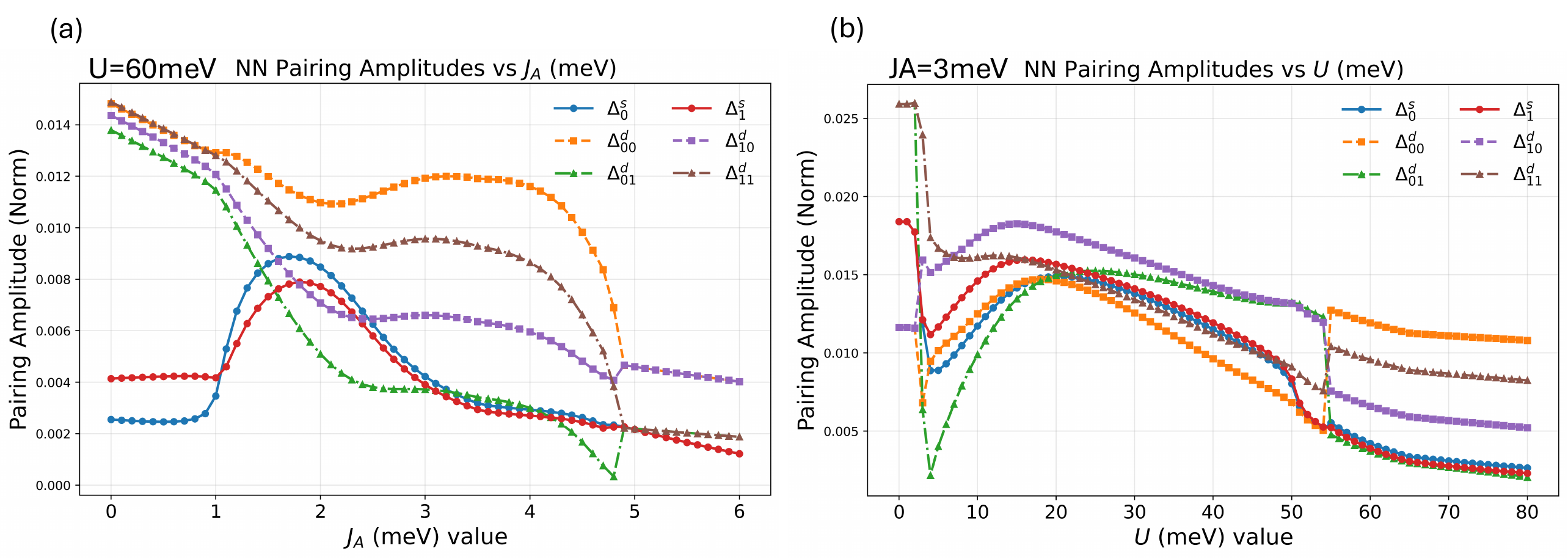}
    \caption{The legends are defined as $\Delta^s_0 \defeq \Delta^f_{\Uparrow 0; \Downarrow 0}(\bm{\delta}_0)$, $\Delta^s_1 \defeq \Delta^f_{\Uparrow 0; \Downarrow 0}(\bm{\delta}_{1(2)})$, $\Delta^d_{0 0} \defeq \Delta^f_{\Uparrow 0; \Downarrow 1}(\bm{\delta}_0)$, $\Delta^d_{0 1} \defeq \Delta^f_{\Uparrow 1; \Downarrow 0}(\bm{\delta}_0)$, $\Delta^d_{1 0} \defeq \Delta^f_{\Uparrow 0; \Downarrow 1}(\bm{\delta}_1)$, $\Delta^d_{1 1} \defeq \Delta^f_{\Uparrow 1; \Downarrow 0}(\bm{\delta}_1)$. Please refer to Appendix.~\ref{appdx:NN_pairing}}
    \label{fig:nn_pairing_amplitudes}
\end{figure}

\subsection{Benchmark: Dimer-Lattice model} \label{appdx:benchmark}
The benchmark of our code using Fabrizio \cite{PhysRevB.76.165110}'s Dimer-Lattice model with flat density of states Fig.~\ref{fig:Benchmark}. All physical observables agree well with Fabrizio's work (orange dashed line) especially in $U\rightarrow0$ and large $U$ limit where He's projector ansatz works the best. The normal and anomalous renormalization factor $Z$ ($\MC{R}$ in this work) and $\Delta$ ($\MC{Q}$ in this work) differs quite a bit due to different gauge choices in our code.

\begin{figure}
    \centering
    \includegraphics[width=0.7\linewidth]{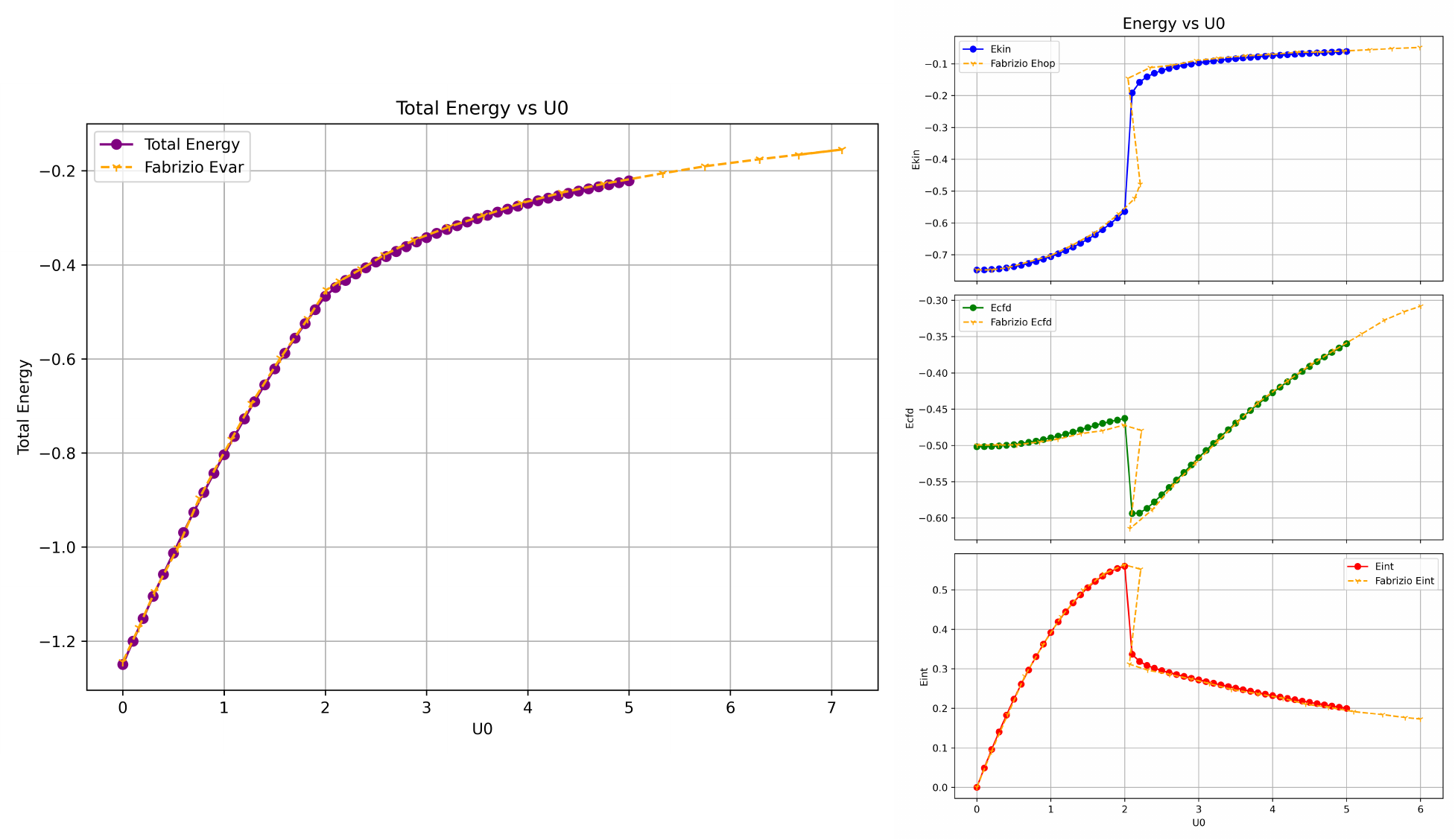}
    \includegraphics[width=0.7\linewidth]{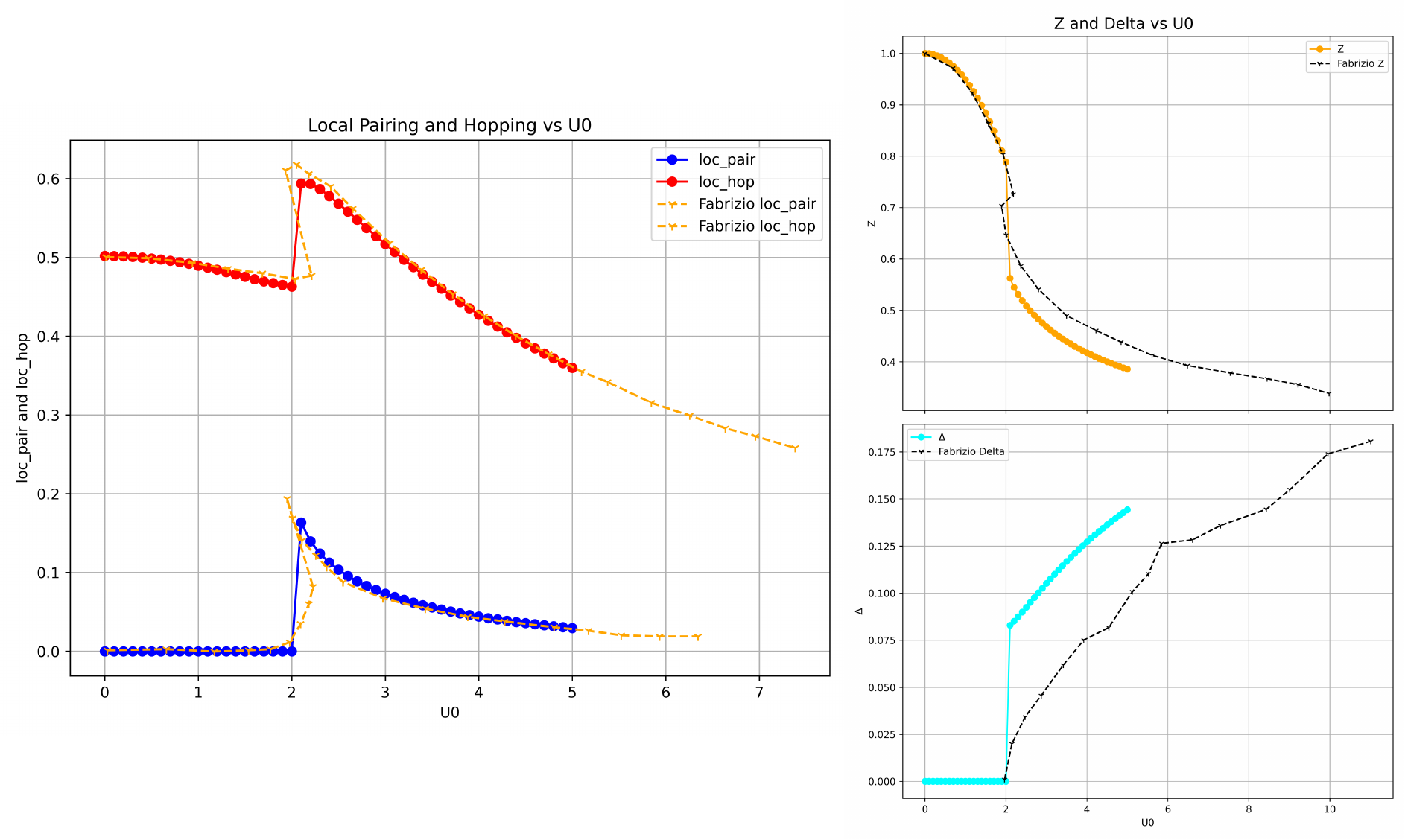}
    \caption{Benchmark with Dimer-Lattice model with flat density of states}
    \label{fig:Benchmark}
\end{figure}

\clearpage






\end{document}